\renewcommand{\myyear}{2026}
\renewcommand{\today}{}
\begin{document}

\volume{\myyear\today} 
\title{Uncertainty Quantification in Coupled Multiphysics Systems via Gaussian Process Surrogates: Application to Fuel Assembly Bow}
\titlehead{Uncertainty Quantification in Coupled Multiphysics Systems via GP}
\authorhead{A. Abboud, S. de Lambert, J. Garnier, B. Leturcq}
%For at least  authors with different addresses, use instead the following commands
\corrauthor[1]{Ali Abboud}
\author[2]{Stanislas de Lambert}
\author[1]{Josselin Garnier}
\author[2]{Bertrand Leturcq}
\corremail{ali.abboud@polytechnique.edu}
\corraddress{Centre de Math\'ematiques Appliqu\'ees, Ecole polytechnique, Institut Polytechnique de Paris, 91120 Palaiseau, France}
\address[1]{Centre de Math\'ematiques Appliqu\'ees, Ecole polytechnique, Institut Polytechnique de Paris, 91120 Palaiseau, France}
\address[2]{Université Paris-Saclay, CEA, Service d'Études Mécaniques et Thermiques, 91191, Gif-sur-Yvette, France}

% End information for at least  authors with different addresses
% For authors with the same post address,
%\corrauthor{First A. Author}
%\corremail{f.author@affiliation.com}
%\author{Second B. Author, Jr.}
%\address{Department of Chemistry and Courant, Institute of Mathematical Sciences, New York, NY 10012, USA}
% End commands for all authors with the same address

\dataO{mm/dd/yyyy}
\dataO{}
\dataF{mm/dd/yyyy}
\dataF{}

\abstract{Predicting fuel assembly bow in pressurized water reactors requires solving tightly coupled fluid-structure interaction problems, whose direct simulations can be computationally prohibitive, making large-scale uncertainty quantification (UQ) very challenging. This work introduces a general mathematical framework for coupling Gaussian process (GP) surrogate models representing distinct physical solvers, aimed at enabling rigorous UQ in coupled multiphysics systems. A theoretical analysis establishes that the predictive variance of the coupled GP system remains bounded under mild regularity and stability assumptions, ensuring that uncertainty does not grow uncontrollably through the iterative coupling process. The methodology is then applied to the coupled hydraulic-structural simulation of fuel assembly bow, enabling global sensitivity analysis and full UQ at a fraction of the computational cost of direct code coupling. The results demonstrate accurate uncertainty propagation and stable predictions, establishing a solid mathematical basis for surrogate-based coupling in large-scale multiphysics simulations.}

\keywords{Uncertainty quantification, Gaussian Process Regression, Fuel Assembly bow, Fluid-Structure interaction}

\maketitle

\section{Introduction}
\label{sec::Intro}

\label{ModelinFSI}

Many multiphysics problems require coupling distinct computational codes, each solving a specific physical subsystem. Such partitioned code coupling schemes iterate between the solvers, exchanging boundary data until a global equilibrium is reached. While conceptually flexible, this approach becomes computationally demanding when initial solvers are involved and many iterations are needed. This challenge is magnified in uncertainty quantification (UQ) and global sensitivity analysis, where thousands of coupled evaluations are typically required. The goal is no longer just to compute a coupled state but to quantify how input uncertainties propagate through an iterative multi-code coupling. To make this tractable, initial solvers can be replaced by probabilistic surrogate models, notably Gaussian process (GP) surrogates \cite{rasmussen2006gaussian}, which provide both predictions and associated uncertainty estimates. Integrating these surrogates in a coupled system yields to a GP-based coupled system, where outputs of one GP feed the inputs of another until convergence. This raises fundamental mathematical questions: how to propagate uncertainty consistently through the coupled system, under what conditions the predictive variance remains bounded, and how to ensure stability and consistency of the coupled predictions. This work addresses these issues by formulating GP coupling as a fixed-point operator on Gaussian measures and proving variance control results under mild regularity assumptions. The methodology is first illustrated on a simple analytical test case with known exact solution and is then applied to a realistic fluid-structure interaction (FSI) problem involving the deformation of fuel assemblies in pressurized water reactors \cite{wanningerphd,delam2021}.

%------------------------------------------------

\section{General Methodology}

\label{UncertaintySM}
This section investigates the propagation of uncertainty arising from the coupling of two independently constructed surrogate models, each representing a distinct physical code. Building on the Bayesian framework of Gaussian Processes (GPs), which inherently provide uncertainty estimates, we develop and analyze a strategy to quantify how these uncertainties interact through the coupling. A combination of theoretical analysis and Monte Carlo simulations is used to validate the approach, culminating in an application to the modeling of fuel assembly bow phenomena in the next section.

\subsection{Coupling of two computational codes as a fixed-point problem}
\label{generalcoupling}

Before addressing uncertainty quantification in the context of coupled surrogate models, it is necessary to give a precise mathematical formulation of the deterministic coupling problem. We present here the abstract framework, introduce the operators involved, and establish that the coupled solution is well defined and that its approximation can be characterized through finitely many evaluations of the solvers.

\medskip
\noindent We consider the coupling of a finite number $C\ge 2$ of computational solvers. For readability, Sections \ref{generalcoupling} \ref{UQGPMC} focus on the practical case $C=2$. The general case $C$ is treated in Section \ref{theoreticalControl} for the theoretical analysis.

\medskip
\noindent\textbf{Interface space.} 
The interaction between the two solvers is mediated through a set of interface variables. These variables are collected in a vector $u$ belonging to the interface space $\mathcal{U}$. In theory,  $\mathcal{U}$ may be viewed as a Banach space, but in practice we only work with finite-dimensional subspaces $\mathcal{U} \subset \mathbb{R}^d$. By construction, $\mathcal{U}$ is distinct from the input spaces $\mathcal{X}_c$ ($c=1,2$) of the individual solvers: it contains only the quantities that are communicated between subsystems. Similarly, the input and output spaces $\mathcal{X}_c, \mathcal{Y}_c$ are finite-dimensional subsets of $\mathbb{R}^{d_c}$ and $\mathbb{R}^{D_c}$, respectively.

\begin{definition}[Computational solvers]
For $c=1,2$, let $\mathcal{S}_c : \mathcal{X}_c \to \mathcal{Y}_c$
be a deterministic solver associated with subsystem $c$. The space $\mathcal{X}_c$ denotes admissible inputs, and $\mathcal{Y}_c$ the corresponding outputs. 
\end{definition}
In practice, each solver can only be accessed through a single evaluation at each iteration.
\begin{definition}[Transfer operators]
The solvers exchange data through transfer operators
\begin{equation}
\Gamma_1 : \mathcal{U}\to\mathcal{X}_1, 
\quad 
\Gamma_{12}:\mathcal{Y}_1\to\mathcal{X}_2,
\quad 
\Gamma_2 : \mathcal{Y}_2\to\mathcal{U}.
\end{equation}
The operator $\Gamma_1$ maps the interface state into input data for solver $\mathcal{S}_1$, $\Gamma_{12}$ transfers the output of $\mathcal{S}_1$ to an admissible input for solver $\mathcal{S}_2$, and $\Gamma_2$ updates the interface state from the output of $\mathcal{S}_2$.
\end{definition}

\begin{definition}[Coupling operator] 
The global coupling operator is defined as 
\begin{equation}
\mathcal{T} := \Gamma_2 \circ \mathcal{S}_2 \circ \Gamma_{12} \circ \mathcal{S}_1 \circ \Gamma_1 : \mathcal{U} \to \mathcal{U}.
\end{equation}
Given an interface state $u\in\mathcal{U}$, one application of $\mathcal{T}$ consists of: mapping $u$ into an input for solver $\mathcal{S}_1$ via $\Gamma_1$, running $\mathcal{S}_1$ once, transferring its output to $\mathcal{X}_2$ using $\Gamma_{12}$, running $\mathcal{S}_2$ once, and mapping the result back into $\mathcal{U}$ with $\Gamma_2$. A coupled solution is a fixed point $u^\star = \mathcal{T}(u^\star),$
i.e.\ an interface state consistent with both solvers.
\end{definition}

\begin{proposition}[Banach fixed-point theorem \cite{zeidler1986nonlinear}]
\label{prop:banach}
Assume $\mathcal{T}$ is a contraction on a Banach space $(\mathcal{U},\|\cdot\|_\mathcal{U}x)$. Then, for any initial $u^{(0)}\in\mathcal{U}$, the sequence 
\begin{equation}
u^{(m+1)} := \mathcal{T}(u^{(m)}), \quad m\geq 0,
\end{equation}
converges to the unique fixed point $u^\star$.
\end{proposition}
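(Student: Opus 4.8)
The plan is to follow the classical Picard iteration argument. First I would make the contraction hypothesis explicit: there is a constant $q\in[0,1)$ such that $\|\mathcal{T}(u)-\mathcal{T}(v)\|_\mathcal{U}\le q\,\|u-v\|_\mathcal{U}$ for all $u,v\in\mathcal{U}$. The first step is to control the consecutive increments of the iteration. Applying the contraction bound inductively to $u^{(m+1)}-u^{(m)}=\mathcal{T}(u^{(m)})-\mathcal{T}(u^{(m-1)})$ gives $\|u^{(m+1)}-u^{(m)}\|_\mathcal{U}\le q^{m}\,\|u^{(1)}-u^{(0)}\|_\mathcal{U}$ for every $m\ge0$.

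The second step is to deduce that $(u^{(m)})_{m\ge0}$ is a Cauchy sequence. For $n>m$, the triangle inequality together with the geometric bound from the first step yields
\[
\|u^{(n)}-u^{(m)}\|_\mathcal{U}\le\sum_{k=m}^{n-1}\|u^{(k+1)}-u^{(k)}\|_\mathcal{U}\le\Bigl(\sum_{k=m}^{n-1}q^{k}\Bigr)\|u^{(1)}-u^{(0)}\|_\mathcal{U}\le\frac{q^{m}}{1-q}\,\|u^{(1)}-u^{(0)}\|_\mathcal{U},
\]
and since $q<1$ the right-hand side tends to $0$ as $m\to\infty$, uniformly in $n$. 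Because $(\mathcal{U},\|\cdot\|_\mathcal{U})$ is complete, the sequence converges to some limit $u^\star\in\mathcal{U}$.

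The third step is to identify $u^\star$ as the fixed point. A contraction is in particular Lipschitz continuous, so passing to the limit in the recursion $u^{(m+1)}=\mathcal{T}(u^{(m)})$ gives $u^\star=\lim_{m}u^{(m+1)}=\lim_{m}\mathcal{T}(u^{(m)})=\mathcal{T}(u^\star)$. For uniqueness, if $u^\star$ and $v^\star$ are both fixed points then $\|u^\star-v^\star\|_\mathcal{U}=\|\mathcal{T}(u^\star)-\mathcal{T}(v^\star)\|_\mathcal{U}\le q\,\|u^\star-v^\star\|_\mathcal{U}$, hence $(1-q)\,\|u^\star-v^\star\|_\mathcal{U}\le0$, which forces $u^\star=v^\star$ since $q<1$.

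There is no genuine obstacle here, as the statement is the standard Banach fixed-point theorem; if any step warrants care it is the Cauchy estimate, where one must ensure the geometric-series bound is uniform in $n$ so that completeness can be invoked, while all remaining steps follow immediately from the Lipschitz property of $\mathcal{T}$. I would also remark that the same estimate furnishes the a priori error bound $\|u^{(m)}-u^\star\|_\mathcal{U}\le\frac{q^{m}}{1-q}\,\|u^{(1)}-u^{(0)}\|_\mathcal{U}$, which is what makes the scheme usable with only finitely many solver evaluations in practice.
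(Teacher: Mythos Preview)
Your proof is correct and is exactly the standard Picard iteration argument for the Banach fixed-point theorem. The paper does not actually prove this proposition: it is stated as a classical result with a citation to \cite{zeidler1986nonlinear}, so there is no paper-specific proof to compare against, and your argument is precisely what one would find in that reference.
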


\begin{definition}[Iterative termination]
In computations, the iteration is stopped at index $M<\infty$ when the criterion 
\begin{equation}
\|u^{(M)}-u^{(M-1)}\|_\mathcal{U} \leq \varepsilon
\end{equation}
is satisfied for a prescribed tolerance $\varepsilon>0$.
\end{definition}

\medskip
\noindent\textbf{Finite evaluations.} 
Since each solver $\mathcal{S}_c$ is called exactly once per iteration, we denote by
$x^{(c)}_m \in \mathcal{X}_c$ the input passed to solver $\mathcal{S}_c$ at iteration $m$, for $c=1,2$.
After $M$ iterations, the total number of evaluations of each solver is therefore $M$.
Thus, the approximation $u^{(M)}$ is entirely determined by the finite sets of evaluations 
\begin{equation}
\label{eq:finiteset}
\big\{\mathcal{S}_c(x^{(c)}_m) : m=1,\dots,M , \, c=1,2\big\}.
\end{equation}

\medskip
\noindent\textbf{Finite characterization of the coupled approximation.}
The computed approximation $u^{(M)}$ depends only on these finitely many evaluations. Moreover, by Proposition \ref{prop:banach}, $u^{(M)} \to u^\star$ as $M\to\infty$, so the fixed point $u^\star$ is approximated by iterates $u^{(M)}$, each of which involves only finitely many evaluations.

\medskip
\noindent\emph{Proof of (\ref{eq:finiteset}).} We proceed by induction on $m$. At $m=0$, $u^{(0)}$ is prescribed. The first update is
\begin{equation}
u^{(1)}=\Gamma_2\big(\mathcal{S}_2(\Gamma_{12}(\mathcal{S}_1(\Gamma_1(u^{(0)}))))\big),
\end{equation}
which requires one evaluation of $\mathcal{S}_1$ at $x^{(1)}_1=\Gamma_1(u^{(0)})\in\mathcal{X}_1$ and one evaluation of $\mathcal{S}_2$ at $x^{(2)}_1=\Gamma_{12}(\mathcal{S}_1(x^{(1)}_1))\in\mathcal{X}_2$. 

Suppose now that $u^{(m)}$ depends only on the finite collection of evaluations
\begin{equation}
\{\mathcal{S}_1(x^{(1)}_{m'}): m'=1,\dots,m\}, 
\qquad 
\{\mathcal{S}_2(x^{(2)}_{m'}): m'=1,\dots,m\}.
\end{equation}
Then computing $u^{(m+1)}=\mathcal{T}(u^{(m)})$ requires evaluating $\mathcal{S}_1$ at a new input $x^{(1)}_{m+1}$ determined from $u^{(m)}$, and subsequently $\mathcal{S}_2$ at a new input $x^{(2)}_{m+1}=\Gamma_{12}(\mathcal{S}_1(x^{(1)}_{{m+1}}))$. Thus $u^{(m+1)}$ depends only on the previous evaluations together with these two new ones.  

By induction, for each $m$ the interface variable $u^{(m)}$ is determined by finitely many evaluations of the form $\mathcal{S}_c(x^{(c)}_m)$. Since the algorithm terminates after $M<\infty$ iterations, $u^{(M)}$ depends only on finitely many such evaluations.
\qed

The coupled solution is well defined as a fixed point of the operator $\mathcal{T}$, and its approximation reduces to a finite set of evaluations of the solvers at points $x^{(c)}_m$. This finite characterization is the foundation for introducing surrogate models in place of $\mathcal{S}_c$, and for developing uncertainty quantification in the next subsection.

\subsection{Uncertainty Quantification in Coupled Gaussian Process Surrogates}

\label{UQGPMC}

In the context of coupling physical codes, replacing computationally expensive numerical solvers with surrogate models-typically Gaussian Processes (GPs)-offers significant performance gains. In general cases, if the surrogates show strong predictive capabilities, it can justify the use of their mean predictions to replace the original codes for UQ methodology. However, these Gaussian Process surrogates inherently carry epistemic uncertainty due to their probabilistic nature. Assuming that the mean prediction alone is sufficient
in Coupled Multiphysics Systems via GP neglects this uncertainty, which may propagate through the coupled simulation workflow and impact the accuracy and robustness of the final results. To study the propagation of this uncertainty in this context we describe a Monte Carlo strategy below.

\textbf{Notation.}
For $c=1,2$, $f^{(c)}$ denotes the surrogate of solver $\mathcal{S}_c$ with input space $\mathcal{X}^{(c)}\subset\mathbb{R}^{d_c}$. The outputs of $f^{(c)}$ may be vector-valued in $\mathbb{R}^{D_c}$, in which case the covariance kernel is matrix-valued and $f^{(c)}$ is a (vector-valued) multi-output GP \cite{alvarez2012,micchelli2005}. Let $\mathbf{y}$ denote the vector of coupled outputs returned by the fixed-point algorithm (it may be equal to the interface variable $u$ or be a function of it,  we keep the notation $\mathbf{y}$ to stress vector-valued quantities).

Throughout, we assume an isotopic design for each surrogate $c$ (the same input set is used for all output coordinates).

\textbf{Correlation structures for multi-output GPs.}
A multi-output (vector-valued) Gaussian process (MOGP) is a random field $x \in \mathcal{X}^{(c)} \mapsto f^{(c)}(x)\in\mathbb{R}^{D_c}$ with mean $\mu^{(c)}(x)\in\mathbb{R}^{D_c}$. Writing 
\begin{equation}
f^{(c)}(x)=\big(f^{(c)}_1(x),\dots,f^{(c)}_{D_c}(x)\big),
\label{eq:fk-decomposition}
\end{equation}
the process $\{f^{(c)}_\ell(x)\}_{\ell,x}$ is Gaussian if and only if, for any $n\ge 1$, any inputs $x_1,\dots,x_n \in \mathcal{X}^{(c)}$ and any indices $\ell_1,\dots,\ell_n\in\{1,\dots,D_c\}$, the linear combination
$
\sum_{j=1}^n \lambda_j\, f^{(c)}_{\ell_j}(x_j)
$
has a univariate normal distribution. Defining the cross-covariance functions
\begin{equation}
\kappa^{(c)}_{\ell,\ell'}(x,x') := \mathrm{Cov}\big(f^{(c)}_\ell(x),f^{(c)}_{\ell'}(x')\big),
\label{eq:cross-cov}
\end{equation}
we can collect them into a matrix-valued covariance kernel
\begin{equation}
\kappa^{(c)}:\ \mathcal{X}^{(c)}\times\mathcal{X}^{(c)}\to\mathbb{R}^{D_c\times D_c},
\qquad
\kappa^{(c)}(x,x') = \big[\kappa^{(c)}_{\ell,\ell'}(x,x')\big]_{\ell,\ell'=1}^{D_c}.
\label{eq:matrix-kernel}
\end{equation}
For any finite design set $X=\{x_i\}_{i=1}^n$, stacking $(f^{(c)}(x_1)^\top,\dots,f^{(c)}(x_n)^\top)^\top\in\mathbb{R}^{nD_c}$ yields a block covariance matrix $K^{(c)}(X,X)$ with entries
\begin{equation}
K^{(c)}_{(i,\ell),(j,\ell')} = \kappa^{(c)}_{\ell,\ell'}(x_i,x_j).
\label{eq:block-entries}
\end{equation}
The matrix $K^{(c)}(X,X)$ is symmetric and positive semi-definite (PSD), i.e.
\begin{equation}
v^\top K^{(c)}(X,X)\,v \ge 0 \qquad \text{for all } v\in\mathbb{R}^{nD_c}.
\label{eq:psd-def}
\end{equation}
Thus, specifying a MOGP amounts to choosing cross-covariance functions $\kappa^{(c)}_{\ell,\ell'}$ such that the associated block covariance matrices are PSD for all finite sets of inputs. Two standard choices are:
\begin{itemize}
\item \emph{Independent outputs.} Take a diagonal kernel
\begin{equation}
\kappa^{(c)}(x,x')=\mathrm{diag}\!\big(\kappa^{(c)}_1(x,x'),\dots,\kappa^{(c)}_{D_c}(x,x')\big),
\label{eq:indep-kernel}
\end{equation}
so the $D_c$ coordinates are a priori uncorrelated, i.e. $\kappa^{(c)}_{\ell,\ell'}(x,x')=0$ for $\ell\neq \ell'$. For an isotopic design (same inputs for all outputs), the joint covariance is block diagonal
\begin{equation}
K^{(c)}(X,X) = \mathrm{diag}\!\big(K^{(c)}_1,\dots,K^{(c)}_{D_c}\big)
\quad\text{with}\quad
\big[K^{(c)}_\ell\big]_{ij}=\kappa^{(c)}_\ell(x_i,x_j).
\label{eq:block-diag}
\end{equation}

\item \emph{Structured correlations (Linear Model of Coregionalization, LMC).} Let $Q\ge 1$ and set
\begin{equation}
\kappa^{(c)}(x,x')=\sum_{q=1}^{Q} B^{(c)}_{q}\,\kappa^{(c)}_{q}(x,x'),
\label{eq:lmc-kernel}
\end{equation}
where each $B^{(c)}_{q}\in\mathbb{R}^{D_c\times D_c}$ is a PSD coregionalization matrix and $\kappa^{(c)}_{q}$ is a scalar covariance on $\mathcal{X}^{(c)}$ \cite{bonilla2008,alvarez2012}. Equivalently, writing $B^{(c)}_{q}=A^{(c)}_{q}(A^{(c)}_{q})^\top$ shows that outputs are %instantaneous 
linear mixtures of $Q$ independent latent GPs with kernels $\kappa^{(c)}_{q}$, which guarantees PSD of the full block covariance. In the isotopic case, the joint covariance over all outputs and all inputs reads
\begin{equation}
%\mathbf{K}^{(c)}
K^{(c)}(X,X)=\sum_{q=1}^Q B^{(c)}_{q}\otimes K^{(c)}_{q},\qquad \big[K^{(c)}_{q}\big]_{ij}=\kappa^{(c)}_{q}(x_i,x_j),
\label{eq:lmc-joint-K}
\end{equation}
making explicit the separation between output-output dependence (via $B^{(c)}_{q}$) and input-input dependence (via $K^{(c)}_{q}$).
In our application we adopt independent outputs (diagonal $B^{(c)}_{q}$), but we emphasize that the methodology accommodates LMC without further changes.
\end{itemize}

\textbf{Surrogate models.}
Each surrogate $f^{(c)}$ is trained on evaluations of $\mathcal{S}_c$ from a space-filling design (e.g., Latin hypercube or low-discrepancy sampling \cite{damblin2013numerical}). A priori we have
$$
f^{(c)}(x)\sim \mathcal{GP}\big(\mu^{(c)}(x),\,\kappa^{(c)}(x,x')\big),
$$
where $\mu^{(c)}(x)\in\mathbb{R}^{D_c}$ and, in the multi-output case, $\kappa^{(c)}(x,x')\in\mathbb{R}^{D_c\times D_c}$ is matrix-valued. 
If $\bar{\mu}^{(c)}$ denotes the posterior mean and
 $\bar{\kappa}^{(c)}$ denotes the posterior kernel, then for each $x$ we have 
$\mathbf{0}\ \preceq\ \bar{\kappa}^{(c)}(x,x)\ \preceq\ \kappa^{(c)}(x,x),$
with $\preceq$ the Loewner order on symmetric PSD matrices. In the independent-output case, this reduces componentwise to $0\le \bar{\kappa}^{(c)}_\ell(x,x)\le \kappa^{(c)}_\ell(x,x)$.

\textbf{Monte Carlo framework for uncertainty quantification of the coupled GP.}
Uncertainty quantification is performed by a standard Monte Carlo procedure. An outer loop generates i.i.d.\ realizations of the coupled solution: for each iteration $j=1, \ldots, N$, independent realizations $\left\{f^{(c, j)}\right\}_c$ of the GP surrogates are drawn, the coupling algorithm is executed with these surrogates starting from a prescribed initial interface state $u^{(0)}$, and the corresponding coupled solution $\mathbf{y}^{(j)}$ is obtained. We denote by $M^{(j)}$ the number of fixed-point iterations performed for simulation $j$ before the stopping criterion $\|u^{(m,j)}-u^{(m-1,j)}\|_{\mathcal{U}} \leq \varepsilon$
is met. In general, $M^{(j)}$ depends on the Monte Carlo realization $j$. The sample $\left\{\mathbf{y}^{(j)}\right\}_{j=1}^N$ is then used to compute empirical statistics such as means, variances, and confidence intervals.

Three approaches can in principle be used to generate the surrogate realizations inside the coupling loop:
\begin{itemize}
    \item \textbf{Ideal but infeasible:}  For each Monte Carlo replication $j$, a full realization $f^{(c,j)}$ of the GP surrogate $f^{(c)}$ over the entire input space $\mathcal{X}_c$ is drawn. The coupling algorithm is then run with $f^{(c,j)}$. This would provide exact i.i.d.\ samples of the coupled solution, but it is computationally impossible since $\mathcal{X}_c$ is infinite.

       \item \textbf{Rigorous but computationally heavy:}
       For each Monte Carlo replication $j$, the coupled solver is run with a fully stochastic surrogate
    family $\{f^{(c,j)}\}_c$ that is sampled consistently with the posterior GP laws.
    More precisely, at iteration $m$ of replication $j$, when the coupling algorithm queries surrogate $c$
    at a new input $x^{(c,j)}_m\in\mathcal{X}^{(c)}$, let
   $$
      X_{m-1}^{(c,j)} := \big(x^{(c,j)}_1,\dots,x^{(c,j)}_{m-1}\big), 
      \qquad 
      Z_{m-1}^{(c,j)} := \big(f^{(c,j)}(x^{(c,j)}_1),\dots,f^{(c,j)}(x^{(c,j)}_{m-1})\big)
   $$
    denote the locations and values already sampled of the surrogate $c$  along the trajectory of replication $j$.
    By consistency of the GP posterior, the conditional distribution of the vector $f^{(c)}(x^{(c,j)}_m)$
    given $f^{(c)}(X_{m-1}^{(c,j)})=Z_{m-1}^{(c,j)}$
    is multivariate Gaussian with explicitly computable mean and variance.
    The value $f^{(c,j)}(x^{(c,j)}_m)$ is then drawn from this conditional law and appended to $Z_{m-1}^{(c,j)}$,
    and the coupled fixed-point iteration is updated using these sampled values.
    This construction ensures that, for each $j$, the random trajectory $\{f^{(c,j)}(x^{(c,j)}_m)\}_{m,c}$ is
    exactly distributed according to the joint GP posterior along the (random) coupling path, so that the
    outputs $\mathbf{y}^{(j)}$ are rigorous samples of the coupled GP-based model.
    However, at each new query the conditional Gaussian formulas involve covariance matrices of size $m\times m$
    built at locations $\{x^{(c,j)}_1,\dots,x^{(c,j)}_m\}$ which depend on both $m$ and $j$, so that repeated inversions of growing matrices are required. The resulting cost is very high, but the
    procedure is fully coherent with the GP prior and posterior, and can serve, at least on moderate grids, as a
    reference method for the example of Section \ref{analyticalExample}.

    \item \textbf{Proposed method:} The proposed method is a practical compromise described below, which fixes the input sets after one deterministic coupling run and reuses them for all MC iterations, thereby greatly reducing the cost while retaining a non-trivial description of uncertainty.

We first fix an initial interface state $u^{(0)}$ and run the deterministic coupling using the GP posterior means $\bar{\mu}^{(c)}$. This fixed-point iteration produces a sequence $(u^{(m)})_{m\geq 0}$ and, by the stopping criterion, converges after $M$ iterations. The points called during this deterministic computation at the $m$-th iteration are denoted
$$
    x_m^{(c)}\in\mathcal{X}^{(c)},\qquad m=1,\dots,M,\ \ c=1,2.
$$
These same points $\{x_m^{(c)}\}_{m=1}^M$ are reused for all Monte Carlo replications, which keeps the sampling cost low and fixes a unique reference path starting from $u^{(0)}$.

For each Monte Carlo replication $j=1,\dots,N$ and for each $c$, we draw the Gaussian vector
$$
\big(\tilde{f}^{(c,j)}(x_m^{(c)})\big)_{m=1}^{M}
\ \sim\ \mathcal{N}\!\big(\bar\mu^{(c)}(x^{(c)}),\,\bar{\mathbf{K}}^{(c)}\big),
$$
that is, a realization of the vector $\big(f^{(c)}(x_m^{(c)})\big)_{m=1}^{M}$. Here $\bar\mu^{(c)}(x^{(c)})=\big(\bar\mu^{(c)}(x^{(c)}_m)\big)_{m=1}^{M}$ and $\bar{\mathbf{K}}^{(c)}$ is the block covariance matrix whose $(m,m')$ block is $\bar\kappa^{(c)}\!\big(x_m^{(c)},x_{m'}^{(c)}\big)\in\mathbb{R}^{D_c\times D_c}$.

We then define the errors
\begin{equation}
  \delta_m^{(c,j)}=\tilde{f}^{(c,j)}(x_m^{(c)})-\bar\mu^{(c)}(x_m^{(c)}),\qquad m=1,\dots,M
\end{equation}
and, at simulation $j$ and iteration $m$, we use the perturbed surrogate
\begin{equation}
\hat f^{(c,m,j)}(x)=\bar\mu^{(c)}(x)+\delta_m^{(c,j)}.
\end{equation}

Running the coupling with $\{\hat f^{(c,m,j)}\}$, starting from the same initial state $u^{(0)}$, yields a vector output $\mathbf{y}^{(j)}$. We can anticipate that the i.i.d. family $\{\mathbf{y}^{(j)}\}_{j=1}^N$ obtained by this procedure has the same distribution as the one of the rigorous but computationally heavy procedure described above, provided the sampled points $x^{(c,j)}_m$ stay close enough (relative to the correlation radius) to $x^{(c)}_m$. This will be quantified in the next subsection.

From the family $\{\mathbf{y}^{(j)}\}_{j=1}^N$, we compute
\begin{equation}
\hat{\mathbf{y}}=\frac{1}{N}\sum_{j=1}^N \mathbf{y}^{(j)},\qquad
\widehat{\mathrm{Cov}}[\mathbf{y}]=\frac{1}{N-1}\sum_{j=1}^N\big(\mathbf{y}^{(j)}-\hat{\mathbf{y}}\big)\big(\mathbf{y}^{(j)}-\hat{\mathbf{y}}\big)^\top.
\end{equation}
When needed, componentwise summaries (means, variances, confidence intervals) are obtained from $\hat{\mathbf{y}}$ and the diagonal of $\widehat{\mathrm{Cov}}[\mathbf{y}]$,  these quantities describe the uncertainty on the coupled outputs $\mathbf{y}$.

\end{itemize}

\subsection{Theoretical Control of Predictive Variance}
  
\label{theoreticalControl}
\noindent\textbf{Objective and strategy.}
The purpose of this subsection is to provide a theoretical justification for the Monte Carlo perturbation scheme of Section \ref{UQGPMC}, which is used to propagate Gaussian process (GP) uncertainty through coupled surrogate models. More precisely, the aim is to show that the random coupled outputs produced by the practical algorithm are concentrated around the deterministic fixed point obtained with GP mean surrogates, and that their dispersion is explicitly controlled by the GP posterior variances via finite-sample bounds on the effect of surrogate uncertainty on the coupled solution.

The argument proceeds in three steps: First, relate the posterior variances of scalar and multi-output GPs to the fill distance of the design points, by exploiting the kernel-based approximation results of Kanagawa et al. \cite{kanagawa2018}, which connect GP posterior variance to worst-case interpolation error in the associated reproducing kernel Hilbert space (RKHS). In particular, Matérn kernels induce Sobolev RKHSs, so that the variance can be controlled through the fill distance of a representative input set. Second, establish a Lipschitz stability result for the coupled fixed point with respect to perturbations of the surrogates. And finally, combine these two ingredients to obtain a finite-sample, high-probability bound (Proposition \ref{prop:finitesampleHQBound_LMC_corrected}), which constitutes the main theoretical result of this section and quantifies the uncertainty of the Monte Carlo outputs.

\medskip
\noindent\textbf{Number of coupled solvers.}
In Sections \ref{generalcoupling}-\ref{UQGPMC}, the presentation focused on the case of two coupled solvers ($C=2$), which corresponds to the application considered in this work.
In the present subsection, the notation is slightly generalized and the analysis is carried out for a generic finite number $C\ge 2$ of coupled surrogates,  the case $C=2$ is recovered by specializing the formulas to two components.

\medskip
\noindent\textbf{Vector-valued convention.}
In coherence with the previous subsection, outputs (including ${\bf y}$) are vectors and GP kernels may be matrix-valued. Accordingly, covariance inequalities are to be understood in Loewner (PSD) order and pointwise deviations are measured with the Euclidean norm. When we write a sup-norm for vector-valued functions $g:\mathcal{X}\to\mathbb{R}^{D}$, we mean the sup-Euclidean norm
$\|g\|_{\infty,2}:=\sup_{x\in\mathcal{X}}\|g(x)\|_2$.

A first ingredient is a geometric quantity that characterizes how well a design $X$ covers the input domain and will later control the decay of the GP posterior variance.

\begin{definition}[Fill distance]
 The fill distance of a point set $X \subset \mathcal{X}$ with respect to a domain $\mathcal{X} \subset \mathbb{R}^d$ is defined as 
\begin{equation}
h_{X,\mathcal{X}} := \sup_{x \in \mathcal{X}} \min_{x_i \in X} \|x - x_i\|,
\end{equation}
which measures the largest gap between any point in the domain $\mathcal{X}$ and its nearest representative in $X$.
\end{definition}

The following result recalls the standard multi-output GP regression formulas and fixes notation for the block covariance matrices used in the subsequent bounds.

\begin{theorem}[Posterior mean and covariance for multi-output GP regression {\cite{micchelli2005,alvarez2012}}]
Let $f:\mathcal{X}\to\mathbb{R}^{D}$ have a Gaussian process prior $\mathfrak{f}\sim\mathcal{GP}(\mu,\kappa)$ with mean function $\mu:\mathcal{X}\to\mathbb{R}^{D}$ and matrix-valued kernel $\kappa:\mathcal{X}\times\mathcal{X}\to\mathbb{R}^{D\times D}$ (positive semidefinite in the sense of Loewner order). Given inputs $X=(x_1,\dots,x_n)\in\mathcal{X}^n$ and noisy vector-valued observations
$\mathbf{Z}=\big(z(x_1)^\top,\dots,z(x_n)^\top\big)^\top\in\mathbb{R}^{nD},\qquad 
\mathbf{Z}=f(X)+\varepsilon,\ \ \varepsilon\sim\mathcal{N}(0,\sigma^2 I_{nD}),$ 
the posterior is $\mathfrak{f}\mid \mathbf{Z}\sim\mathcal{GP}(
%\mu_{\mathrm{post}}
\bar\mu,\bar{\kappa})$ with
\begin{equation}
%\mu_{\mathrm{post}}
\bar\mu (x)\;=\;\mu(x)\;+\;K_{xX}\,\big(K_{XX}+\sigma^2 I_{nD}\big)^{-1}\,\big(\mathbf{Z}-\mu_X\big),
\end{equation}
\begin{equation}
\bar{\kappa}(x,x')\;=\;\kappa(x,x')\;-\;K_{xX}\,\big(K_{XX}+\sigma^2 I_{nD}\big)^{-1}\,K_{Xx'},
\end{equation}
where $\mu_X=\big(\mu(x_1)^\top,\dots,\mu(x_n)^\top\big)^\top\in\mathbb{R}^{nD}$,
$K_{XX}\in\mathbb{R}^{nD\times nD}$ is the block covariance with $(i,j)$-block $[K_{XX}]_{ij}=\kappa(x_i,x_j)\in\mathbb{R}^{D\times D}$,
$K_{xX}=\big[\kappa(x,x_1)\ \cdots\ \kappa(x,x_n)\big]\in\mathbb{R}^{D\times nD}$, and $K_{Xx'}=K_{x'X}^\top$.
In the noise-free case, replace $K_{XX}+\sigma^2 I_{nD}$ by $K_{XX}$.
\end{theorem}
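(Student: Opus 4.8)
The plan is to reduce the statement to the classical Gaussian conditioning lemma by stacking all vector-valued quantities into ordinary finite-dimensional Gaussian vectors. First I would fix an arbitrary finite collection of test inputs $x_1^\ast,\dots,x_k^\ast\in\mathcal{X}$ and consider the stacked vector $\big(\mathfrak{f}(x_1^\ast)^\top,\dots,\mathfrak{f}(x_k^\ast)^\top,\mathbf{Z}^\top\big)^\top\in\mathbb{R}^{kD+nD}$. Since $\mathfrak{f}\sim\mathcal{GP}(\mu,\kappa)$ means that every finite family $\big(\mathfrak{f}(x_i^\ast),\mathfrak{f}(x_j)\big)$ is jointly Gaussian, and since $\varepsilon\sim\mathcal{N}(0,\sigma^2 I_{nD})$ is independent of $\mathfrak{f}$, the displayed stacked vector is an affine image of a jointly Gaussian vector, hence Gaussian.

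Next I would compute its mean and block covariance. The mean of the $\mathfrak{f}$-block is $\big(\mu(x_i^\ast)\big)_i$ and the mean of $\mathbf{Z}$ is $\mu_X$; the $\mathfrak{f}$–$\mathfrak{f}$ covariance block is $\big[\kappa(x_i^\ast,x_j^\ast)\big]_{i,j}$; the $\mathfrak{f}$–$\mathbf{Z}$ cross-covariance block is $\mathrm{Cov}\big(\mathfrak{f}(x_i^\ast),\,f(X)+\varepsilon\big)=\big[K_{x_i^\ast X}\big]_i$, using independence of $\varepsilon$ from $\mathfrak{f}$; and the $\mathbf{Z}$–$\mathbf{Z}$ block is $K_{XX}+\sigma^2 I_{nD}$. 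The matrix $K_{XX}+\sigma^2 I_{nD}$ is positive definite — $K_{XX}$ is PSD in the Loewner sense and $\sigma^2>0$ — hence invertible; in the noise-free case one instead invokes strict positive-definiteness of $\kappa$ at the distinct design points, or replaces the inverse by a pseudo-inverse restricted to the range of $K_{XX}$.

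Then I would apply the standard conditioning formula for Gaussian vectors: if $(A^\top,B^\top)^\top$ is Gaussian with means $\mu_A,\mu_B$ and covariance blocks $\Sigma_{AA},\Sigma_{AB},\Sigma_{BB}$ with $\Sigma_{BB}$ invertible, then $A\mid B=b$ is Gaussian with mean $\mu_A+\Sigma_{AB}\Sigma_{BB}^{-1}(b-\mu_B)$ and covariance $\Sigma_{AA}-\Sigma_{AB}\Sigma_{BB}^{-1}\Sigma_{BA}$. Taking $A$ to be the $\mathfrak{f}$-block, $B=\mathbf{Z}$, and reading off the blocks shows that the conditional law of $\big(\mathfrak{f}(x_i^\ast)\big)_i$ given $\mathbf{Z}$ is Gaussian with mean $\big(\bar\mu(x_i^\ast)\big)_i$ and covariance $\big[\bar\kappa(x_i^\ast,x_j^\ast)\big]_{i,j}$, where $\bar\mu$ and $\bar\kappa$ are exactly the stated expressions (the case $k=1$ already yields the pointwise formulas). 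Since $k$ and the test points were arbitrary and the resulting finite-dimensional conditional laws are mutually consistent, the Kolmogorov extension theorem gives that the posterior process is itself a Gaussian process with mean function $\bar\mu$ and matrix-valued kernel $\bar\kappa$; that $\bar\kappa$ inherits Loewner positive semidefiniteness follows since it is the conditional covariance of a Gaussian process.

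There is no genuine obstacle here, as this is the classical multi-output GP regression derivation. The only points requiring care are purely matters of bookkeeping — tracking the $D\times D$ block structure when forming $K_{xX}$, $K_{XX}$, $K_{Xx'}$ and $\mu_X$, and verifying invertibility of the conditioning matrix — together with the (routine) passage from finite-dimensional Gaussian conditionals to a bona fide Gaussian process via Kolmogorov consistency.
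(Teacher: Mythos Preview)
Your proof is correct and follows the standard Gaussian conditioning argument. Note, however, that the paper does not actually prove this theorem: it is stated with citations to \cite{micchelli2005,alvarez2012} as a classical result being recalled for notation and later use, with no accompanying derivation. Your argument is precisely the textbook proof one would expect behind those citations, so there is nothing to compare against and no discrepancy in approach.
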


In order to relate GP posterior variances to approximation properties in Sobolev spaces, a classical characterization of Matérn kernels in terms of their associated RKHS is recalled next.

\begin{proposition}[RKHS of Matérn kernels \cite{kanagawa2018}]
Let $\kappa_{\alpha,h}$ be the Matérn kernel on $\mathcal{X}\subset\mathbb{R}^{d_{c}}$ with smoothness $\alpha>0$ and scaling $h>0$. If $s=\alpha+d_{c}/2$ is an integer, the RKHS $\mathcal{H}_{\kappa_{\alpha,h}}$ is norm-equivalent to the Sobolev space $W_2^{s}(\mathcal{X})$:
$$W_2^{s}(\mathcal{X})=\Big\{f\in L_2(\mathcal{X}): \sum_{|\beta|\leq s}\|D^\beta f\|_{L_2(\mathcal{X})}^2 < \infty\Big\}.$$
Thus $\mathcal{H}_{\kappa_{\alpha,h}}=W_2^{s}(\mathcal{X})$ as sets of functions, and there exist constants $c_1,c_2>0$ with
$c_1\|f\|_{W_2^{s}(\mathcal{X})} \leq \|f\|_{\mathcal{H}_{\kappa_{\alpha,h}}} \leq c_2\|f\|_{W_2^{s}(\mathcal{X})}, 
\quad \forall f\in \mathcal{H}_{\kappa_{\alpha,h}}.$
\label{assemp2}
\end{proposition}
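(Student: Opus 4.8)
The plan is to reduce the statement to the Fourier-analytic description of the Matérn reproducing kernel Hilbert space on all of $\mathbb{R}^{d_{c}}$, and then transfer that description to the bounded domain $\mathcal{X}$ by a Sobolev extension argument. The starting point is that the Matérn kernel is stationary, $\kappa_{\alpha,h}(x,x')=\phi_{\alpha,h}(x-x')$, and that, with the usual normalization, its Fourier transform is --- up to a positive constant depending only on $\alpha$, $h$, $d_{c}$ --- the spectral density $\widehat{\phi_{\alpha,h}}(\omega)\asymp(1+h^2\|\omega\|^2)^{-(\alpha+d_{c}/2)}=(1+h^2\|\omega\|^2)^{-s}$. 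This density is strictly positive, continuous and integrable, so by Bochner's theorem and the standard characterization of the RKHS of a stationary kernel one has
\[
\|f\|_{\mathcal{H}_{\kappa_{\alpha,h}}(\mathbb{R}^{d_{c}})}^2\ \asymp\ \int_{\mathbb{R}^{d_{c}}} |\widehat f(\omega)|^2\,\bigl(1+h^2\|\omega\|^2\bigr)^{s}\,d\omega,
\]
with membership in $\mathcal{H}_{\kappa_{\alpha,h}}(\mathbb{R}^{d_{c}})$ equivalent to finiteness of the right-hand side.

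Next I would observe that, since $h>0$ is fixed, the weights $(1+h^2\|\omega\|^2)^s$ and $(1+\|\omega\|^2)^s$ are equivalent with constants depending only on $h$ and $s$, so the integral above is norm-equivalent to the Bessel-potential norm defining $H^s(\mathbb{R}^{d_{c}})$. Here the hypothesis that $s=\alpha+d_{c}/2$ be an integer enters: for integer $s$, Plancherel's identity turns the Fourier multiplier $(i\omega)^\beta$ into the weak derivative $D^\beta$, so that $H^s(\mathbb{R}^{d_{c}})$ coincides, as a set and with equivalent norms, with the derivative-based space $W_2^s(\mathbb{R}^{d_{c}})$ written in the statement. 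Chaining the two equivalences yields $\mathcal{H}_{\kappa_{\alpha,h}}(\mathbb{R}^{d_{c}})=W_2^s(\mathbb{R}^{d_{c}})$ with equivalent norms.

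It then remains to localize to $\mathcal{X}$. I would use that, for a stationary kernel, the RKHS of its restriction to $\mathcal{X}$ is exactly the trace space $\{F|_{\mathcal{X}}:F\in\mathcal{H}_{\kappa_{\alpha,h}}(\mathbb{R}^{d_{c}})\}$ equipped with the quotient norm $\|g\|_{\mathcal{H}_{\kappa_{\alpha,h}}(\mathcal{X})}=\inf\{\|F\|_{\mathcal{H}_{\kappa_{\alpha,h}}(\mathbb{R}^{d_{c}})}: F|_{\mathcal{X}}=g\}$. Assuming --- as the standard statement of this result implicitly does --- that $\mathcal{X}$ is a bounded domain with Lipschitz boundary, the Stein (or Calder\'on) extension theorem provides a bounded linear operator $E:W_2^s(\mathcal{X})\to W_2^s(\mathbb{R}^{d_{c}})$; combined with the obvious boundedness of the restriction map, this shows that the quotient norm on traces of $W_2^s(\mathbb{R}^{d_{c}})$-functions is equivalent to $\|\cdot\|_{W_2^s(\mathcal{X})}$, and that every element of $W_2^s(\mathcal{X})$ is such a trace. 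Transporting the whole-space identity through this equivalence gives $\mathcal{H}_{\kappa_{\alpha,h}}(\mathcal{X})=W_2^s(\mathcal{X})$ as sets, with constants $c_1,c_2$ depending on $\alpha$, $h$, $d_{c}$ and the Lipschitz character of $\partial\mathcal{X}$.

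The $\mathbb{R}^{d_{c}}$ computation is textbook, so the real obstacle is the domain transfer: one must invoke the regularity of $\partial\mathcal{X}$ to have an extension operator at all, verify carefully that the restricted-kernel RKHS is precisely the trace space with its quotient norm, and keep track that it is again the integrality of $s$ that reconciles the Bessel-potential and weak-derivative descriptions of $W_2^s(\mathcal{X})$ (which is why the statement restricts to integer $s$). Since this result is quoted from \cite{kanagawa2018}, in the paper it suffices to cite that reference; the sketch above is meant only to record the structure of the argument.
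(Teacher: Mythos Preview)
The paper does not supply a proof of this proposition: it is stated with a citation to \cite{kanagawa2018} and used as a black box in the subsequent fill-distance analysis. Your sketch is the standard argument --- Fourier characterization of the stationary RKHS on $\mathbb{R}^{d_c}$, identification with the Bessel-potential space $H^s$, the integer-$s$ equivalence $H^s=W_2^s$, and then Aronszajn's restriction lemma combined with a Stein/Calder\'on extension to pass to a Lipschitz domain --- and it is correct. You anticipated this yourself in the final paragraph; there is nothing to compare, since the paper's ``proof'' is simply the citation.
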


Under these regularity assumptions, the next result of Kanagawa et al. \cite{kanagawa2018} links the posterior variance of a scalar GP to the local fill distance of the design.

\begin{theorem}[\emph{Scalar} fill-distance bound,  Kanagawa et al. {\cite{kanagawa2018}}]
\label{thm:kanagawa_scalar}
Let $\kappa$ be a scalar kernel on $\mathbb R^{d_c}$ whose RKHS is norm-equivalent to the Sobolev space $W_2^{s}(\mathbb R^{d_c})$ with $s>{d_c}/{2}$. For $\rho>0$ and any finite set of points $X=\{x_1,\dots,x_n\}\subset\mathbb R^{d_c}$, define the (local) fill distance at $x\in\mathbb R^{d_c}$ by
\begin{equation}
h_{\rho,X}(x)
\;:=\;
\sup_{\substack{x'\in\mathbb R^{d_c}\\ \|x'-x\|\le \rho}} \ \min_{x_i\in X} \|x'-x_i\|.
\end{equation}
Then, for any $\rho>0$, there exist $h_0>0$ and $\mathcal{C}>0$ such that for all $x\in\mathbb R^{d_c}$ and all point sets $X$ with $h_{\rho,X}(x)\le h_0$, the posterior variance satisfies
\begin{equation}
\bar{\kappa}(x,x)\ \le\ \mathcal{C}\, h_{\rho,X}(x)^{\,2s-d_c}.
\end{equation}
\end{theorem}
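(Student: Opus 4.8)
The plan is to reduce the claim to a classical scattered-data approximation estimate in Sobolev spaces, via the standard identification of the (scalar) GP posterior variance with the squared kernel \emph{power function}. Specializing the posterior-kernel formula recalled above to the scalar, noise-free case gives, at a single point,
\[
\bar{\kappa}(x,x)\;=\;\kappa(x,x)-K_{xX}K_{XX}^{-1}K_{Xx}\;=:\;P_{X}(x)^{2},
\]
and $P_X$ has the two equivalent descriptions
\[
P_X(x)\;=\;\inf_{\lambda\in\mathbb{R}^{n}}\Bigl\|\kappa(\cdot,x)-\textstyle\sum_{i}\lambda_i\,\kappa(\cdot,x_i)\Bigr\|_{\mathcal{H}_\kappa}
\;=\;\sup_{\|f\|_{\mathcal{H}_\kappa}\le 1}\bigl|f(x)-(I_Xf)(x)\bigr|,
\]
where $I_Xf$ is the kernel interpolant of $f$ at $X$, i.e.\ the $\mathcal{H}_\kappa$-orthogonal projection onto $\mathrm{span}\{\kappa(\cdot,x_i)\}$. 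Hence it suffices to bound $|g(x)|$ for functions $g=f-I_Xf$ with $\|f\|_{\mathcal{H}_\kappa}\le 1$ (so $\|g\|_{\mathcal{H}_\kappa}\le 2$, since $I_X$ is an orthogonal projection) that vanish on $X$; squaring an $O\big(h_{\rho,X}(x)^{\,s-d_c/2}\big)$ bound then produces the exponent $2s-d_c$.

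I would then invoke Proposition~\ref{assemp2}: for integer $s=\alpha+d_c/2$ the RKHS $\mathcal{H}_\kappa$ is norm-equivalent to $W_2^{s}(\mathbb{R}^{d_c})$, and $s>d_c/2$ makes point evaluation continuous by Sobolev embedding. The estimate now becomes a purely functional-analytic statement about functions in $W_2^{s}$ vanishing on a point set, which is exactly a \emph{sampling inequality} of Narcowich--Ward--Wendland type (see Wendland's monograph): on a ball $B=B(x,\rho)$ there exist $h_0>0$ and $C>0$ — independent of the centre $x$ by stationarity of the Matérn kernel — such that every $v\in W_2^{s}(B)$ with $v=0$ on a finite set $Y\subset B$ of fill distance $h_{Y,B}\le h_0$ satisfies $\|v\|_{L_\infty(B)}\le C\,h_{Y,B}^{\,s-d_c/2}\,|v|_{W_2^{s}(B)}$. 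The smallness threshold $h_0$ is precisely the quantity appearing in the hypothesis $h_{\rho,X}(x)\le h_0$: it guarantees that the local design is dense enough for the underlying polynomial-reproduction / norming-set argument and that the ball satisfies an interior cone condition.

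The delicate point, which I expect to be the main obstacle, is the \emph{localization}: the RKHS norm is global, $\|g\|_{\mathcal{H}_\kappa}\simeq\|g\|_{W_2^{s}(\mathbb{R}^{d_c})}$, whereas the conclusion must feel only the samples near $x$, i.e.\ only $h_{\rho,X}(x)$. I would argue: (i) since $P_X(x)$ can only decrease when points are added, it is enough to bound the power function of a local subdesign $X'=X\cap B(x,\rho)$; (ii) for any $f$ with $\|f\|_{\mathcal{H}_\kappa}\le 1$, the difference $g=f-I_{X'}f$ vanishes on $X'$, has $\|g\|_{\mathcal{H}_\kappa}\le 2$, and $|g(x)|=|f(x)-(I_{X'}f)(x)|$; (iii) check that the relevant fill distance is controlled by $h_{\rho,X}(x)$ — this requires $h_0$ small relative to $\rho$ and a careful choice of inner/outer radii (work on a slightly smaller ball, collecting samples from $B(x,\rho)$) so that every point of the inner ball has a nearby sample from $X'$; (iv) apply the sampling inequality on the ball and bound $|g|_{W_2^{s}(B)}\le\|g\|_{W_2^{s}(\mathbb{R}^{d_c})}\le c_1^{-1}\|g\|_{\mathcal{H}_\kappa}\le 2c_1^{-1}$ via Proposition~\ref{assemp2}. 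Taking the supremum over $f$ and squaring gives
\[
\bar{\kappa}(x,x)=P_X(x)^2\le P_{X'}(x)^2\le \mathcal{C}\,h_{\rho,X}(x)^{\,2s-d_c},
\]
with $\mathcal{C}$ and $h_0$ assembled from $c_1,c_2$ (Proposition~\ref{assemp2}) and from $C,h_0$ (the sampling inequality), and independent of $x$ and $X$ once $\rho$ is fixed, which is the assertion. (For $\sigma^2>0$ one has $K_{XX}+\sigma^2 I\succeq K_{XX}$ and a bound of the same form persists, but the present statement concerns the noise-free power function.)
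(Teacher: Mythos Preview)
The paper does not prove this theorem at all: it is stated as a result of Kanagawa et al.\ \cite{kanagawa2018} and used as a black box in the proof of Theorem~\ref{thm:lmc_fill}. Your sketch is therefore not being compared against a proof in the paper, but it is nonetheless a faithful outline of how the result is established in the cited literature: identify the posterior variance with the squared power function, use its dual representation as the worst-case pointwise interpolation error over the unit ball of $\mathcal{H}_\kappa$, pass to $W_2^{s}$ via the norm equivalence of Proposition~\ref{assemp2}, and then invoke a Narcowich--Ward--Wendland sampling inequality on a ball of radius~$\rho$, with the localization handled exactly as you describe (monotonicity of the power function under point addition, and an inner/outer radius trick so that $h_{\rho,X}(x)$ controls the fill distance of $X\cap B(x,\rho)$ inside a slightly smaller ball).

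Two minor remarks. First, since $I_X$ is an \emph{orthogonal} projection in $\mathcal{H}_\kappa$, you have $\|f-I_Xf\|_{\mathcal{H}_\kappa}\le\|f\|_{\mathcal{H}_\kappa}\le 1$, not just $\le 2$; this only changes the constant. Second, the independence of $C$ and $h_0$ from the centre $x$ is not really a consequence of ``stationarity of the Mat\'ern kernel'' (the theorem is stated for any kernel with Sobolev-equivalent RKHS): it comes from the fact that the sampling-inequality constants on a Euclidean ball depend only on its radius and the dimension, and all balls $B(x,\rho)$ are congruent.
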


This scalar variance bound can be extended to the multi-output setting by exploiting the LMC representation of the matrix-valued kernel.

\begin{theorem}[Using LMC representation a lift of Kanagawa’s scalar bound to multi-output GPs]
\label{thm:lmc_fill}
Let the matrix-valued kernel be of LMC form
$\kappa(x,x')= \sum_{q=1}^{Q} B_q\,\kappa_q(x,x'),$
with $B_q\succeq \mathbf 0$ and each scalar $\kappa_q$ satisfying Theorem \ref{thm:kanagawa_scalar} with parameters $(s_q,\mathcal{C}_q,h_{0,q})$.
Then, for every $x$ with $h_{\rho,X}(x)\le \min_q h_{0,q}$,
\begin{equation}
\bar{\kappa}(x,x)\ \preceq\ \Big(\sum_{q=1}^{Q} \mathcal{C}_q\,h_{\rho,X}(x)^{\,2s_q-d_c}\Big)\,\Big(\sum_{q=1}^{Q} B_q\Big).
\end{equation}

\end{theorem}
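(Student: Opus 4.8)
\medskip
\noindent\emph{Proof idea.}
The plan is to bound the multi-output posterior covariance by evaluating the exact regression formula at a single \emph{suboptimal} predictor that is shared by all output coordinates, to reduce the resulting Loewner bound to scalar quantities attached to the latent kernels $\kappa_q$, and then to quote the scalar fill-distance theory componentwise. As a first step I would record the elementary suboptimal-weights inequality: for any matrix $W\in\mathbb{R}^{D\times nD}$,
\begin{equation}
\bar\kappa(x,x)=\kappa(x,x)-K_{xX}K_{XX}^{-1}K_{Xx}\ \preceq\ \kappa(x,x)-K_{xX}W^{\top}-WK_{Xx}+WK_{XX}W^{\top},
\end{equation}
since the difference of the two sides equals $\big(W-K_{xX}K_{XX}^{-1}\big)K_{XX}\big(W-K_{xX}K_{XX}^{-1}\big)^{\top}\succeq\mathbf 0$. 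I would then take $W=w(x)^{\top}\otimes I_D$ (in the block-stacking convention of the regression formulas recalled above), i.e.\ the predictor $x\mapsto\sum_{i=1}^{n}w_i(x)f(x_i)$ applied identically on each output component. Substituting the LMC structure $\kappa=\sum_q B_q\kappa_q$, so that $K_{XX}$ and $K_{xX}$ become Kronecker-type sums of the $B_q$ against the scalar Gram matrix $[K_q]_{ij}=\kappa_q(x_i,x_j)$ and the cross-vector $k_q(x)=(\kappa_q(x,x_i))_i$, the mixed terms collapse blockwise and the right-hand side becomes $\sum_{q=1}^{Q} e_q\big(x,w(x)\big)\,B_q$, where
\begin{equation}
e_q(x,w):=\kappa_q(x,x)-2\,w^{\top}k_q(x)+w^{\top}K_q w\ \ge\ 0
\end{equation}
is precisely the squared $\kappa_q$-power function at $x$ with weights $w$, equivalently the worst-case $\mathcal{H}_{\kappa_q}$-interpolation error of the quasi-interpolant $g\mapsto\sum_i w_i g(x_i)$. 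Hence $\bar\kappa(x,x)\preceq\sum_q e_q(x,w(x))\,B_q$ for \emph{every} choice of $w(x)$; via $B_q=A_qA_q^{\top}$ this reads ``interpolate all $Q$ latent processes with one common quasi-interpolant''.

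The second, and main, step is to choose a single $w(x)$ that is simultaneously near-optimal, in the fill-distance sense, for all the latent kernels. Here I would \emph{not} use Theorem \ref{thm:kanagawa_scalar} as a black box, because it controls the \emph{optimal} kriging variance $\bar\kappa_q(x,x)=\min_w e_q(x,w)$, whereas the previous step needs one $w(x)$ that attains the rate for all $q$ at once. Instead I would invoke the scattered-data approximation machinery underlying that theorem: assuming $\mathcal{X}$ satisfies the usual interior-cone regularity and $h_{\rho,X}(x)\le\min_q h_{0,q}$, one constructs a local polynomial-reproducing quasi-interpolant with weights $w(x)$ depending only on $X$, on $x$, and on a polynomial degree $m$; the same degree $m\ge\max_q\lceil s_q\rceil-1$ then yields, simultaneously for all $q$, the estimate $e_q\big(x,w(x)\big)\le\mathcal{C}_q\,h_{\rho,X}(x)^{\,2s_q-d_c}$ (this is exactly the bound from which Theorem \ref{thm:kanagawa_scalar} is itself derived, up to a harmless adjustment of the constants $\mathcal{C}_q$).

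The final step is purely algebraic: I would combine the two preceding ingredients with the elementary Loewner inequality $\sum_q\beta_q B_q\preceq\big(\sum_q\beta_q\big)\big(\sum_q B_q\big)$, valid for scalars $\beta_q\ge 0$ and $B_q\succeq\mathbf 0$ (its left-hand side subtracted from the right-hand side equals $\sum_q\beta_q\sum_{q'\neq q}B_{q'}\succeq\mathbf 0$), applied with $\beta_q=\mathcal{C}_q h_{\rho,X}(x)^{2s_q-d_c}\ge 0$. Chaining
\begin{equation}
\bar\kappa(x,x)\ \preceq\ \sum_q e_q\big(x,w(x)\big)B_q\ \preceq\ \sum_q\mathcal{C}_q h_{\rho,X}(x)^{2s_q-d_c}B_q\ \preceq\ \Big(\sum_q\mathcal{C}_q h_{\rho,X}(x)^{2s_q-d_c}\Big)\Big(\sum_q B_q\Big)
\end{equation}
then gives the claimed bound.

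I expect the genuine difficulty to lie entirely in the second step: transferring from the optimal-kriging statement of Theorem \ref{thm:kanagawa_scalar} to a single quasi-interpolant that is near-optimal for all coregionalization components at once, together with the domain-regularity hypotheses that the polynomial-reproduction construction requires. The remaining LMC/Kronecker bookkeeping and the closing Loewner inequality are routine. One caveat worth stating: as written the bound is implicitly a noise-free (interpolation) result — if the observations carry a nugget $\sigma^{2}I$, the first step produces an additional $\sigma^{2}\|w(x)\|_2^{2}\,I_D$ term, so the right-hand side no longer vanishes as $h_{\rho,X}(x)\to 0$.
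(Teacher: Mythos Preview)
Your argument is correct and follows a genuinely different route from the paper's. The paper first projects onto an arbitrary direction $v\in\mathbb{R}^D$ and uses the conditioning-monotonicity inequality
\[
v^\top\bar\kappa(x,x)\,v=\mathrm{Var}\!\big(v^\top f(x)\mid f(X)\big)\ \le\ \mathrm{Var}\!\big(v^\top f(x)\mid v^\top f(X)\big)=\overline{\kappa_v}(x,x),
\]
with $\kappa_v=\sum_q a_q\kappa_q$ and $a_q=v^\top B_q v$; it then passes to the feature map $\varphi=(\sqrt{a_q}\varphi_q)_q$ into $\bigoplus_q\mathcal H_q$, bounds $\overline{\kappa_v}(x,x)$ by $(\sum_q a_q)(\sum_q\overline{\kappa_q}(x,x))$ via a triangle-inequality/Cauchy--Schwarz step in that direct sum, and finally invokes Theorem~\ref{thm:kanagawa_scalar} \emph{as a black box} on each optimal scalar kriging variance $\overline{\kappa_q}(x,x)$. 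Your route instead fixes a single suboptimal predictor $W=w(x)^\top\otimes I_D$, reduces the Loewner bound to $\sum_q e_q(x,w)\,B_q$ with the scalar power functions $e_q$, and then constructs \emph{one} local polynomial-reproducing quasi-interpolant $w(x)$ that attains the fill-distance rate for all $q$ simultaneously.

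What each buys: the paper's argument is shorter and never re-opens the Wendland/Narcowich--Ward--Wendland sampling-inequality machinery (or its interior-cone hypotheses), because after the feature-space step only the optimal $\overline{\kappa_q}(x,x)$ appear. Your argument is longer but makes fully explicit what you correctly flag as the crux --- that a \emph{single} set of weights must be near-optimal for all latent kernels at once --- and discharges it by going inside that machinery; the paper's ``triangle inequality in the direct sum'' is precisely the place where this shared-weight issue is compressed into one line, and your route shows concretely how to justify it. Your closing Loewner step $\sum_q\beta_q B_q\preceq(\sum_q\beta_q)(\sum_q B_q)$ plays the same role as the paper's Cauchy--Schwarz. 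The caveat you add about a nugget term producing an extra $\sigma^2\|w(x)\|_2^2\,I_D$ is accurate and worth retaining; both the statement and the paper's proof are implicitly interpolatory.
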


\begin{proof}
Fix $x\in\mathcal X$ and $v\in\mathbb R^D$. The scalar projection $g_v(x):=v^\top f(x)$ is a GP with kernel
$
\kappa_v(x,x')=v^\top \kappa(x,x')\,v=\sum_{q=1}^{Q} a_q\,\kappa_q(x,x'),
$
where $a_q:=v^\top B_q v\ge 0$  \cite{micchelli2005vv,alvarez2012review}. 
By Gaussian conditioning monotonicity 
\begin{equation}
    \label{eq:firstdisplay}
v^\top \bar{\kappa}(x,x)\,v \;=\; \mathrm{Var} \!\big(v^\top f(x)\mid f(X)\big)\ \le\ \mathrm{Var} \!\big(v^\top f(x)\mid v^\top f(X)\big)\;=\;\overline{\kappa_v}(x,x).
\end{equation}

Let $\varphi_q$ be feature maps for the $\kappa_q$ so that $\kappa_q(x,x')=\langle\varphi_q(x),\varphi_q(x')\rangle_{\mathcal H_q}$. A feature map for $\kappa_v$ is $\varphi=(\sqrt{a_q}\varphi_q)_q$ into the Hilbert direct sum $\bigoplus_q\mathcal H_q$ \cite{aronszajn1950,alvarez2012review}, since
$$
\big\langle\varphi(x),\varphi(x')\big\rangle_{\oplus_q\mathcal H_q}
\;=\;\sum_{q=1}^Q \big\langle \sqrt{a_q}\varphi_q(x),\,\sqrt{a_q}\varphi_q(x')\big\rangle_{\mathcal H_q}
\;=\;\sum_{q=1}^Q a_q\,\kappa_q(x,x')
\;=\;\kappa_v(x,x').
$$
For this scalar GP with kernel $\kappa_v$, the posterior variance admits the usual RKHS interpretation as the squared distance from $\varphi(x)$ to the span $S:=\mathrm{span}\{\varphi(x_i)\}$ in feature space   \cite{kanagawa2018}:
$$
\sqrt{\overline{\kappa_v}(x,x)}\;=\;\mathrm{dist}\big(\varphi(x),S\big)
\ \le\ \sum_{q=1}^{Q}\sqrt{a_q}\,\mathrm{dist}\big(\varphi_q(x),S_q\big)
\ =\ \sum_{q=1}^{Q}\sqrt{a_q}\,\sqrt{\overline{\kappa_q}(x,x)},
$$
where $S_q:=\mathrm{span}\{\varphi_q(x_i)\}$ and we used the triangle inequality in the direct sum. Squaring and applying Cauchy-Schwarz yields
$$
\overline{\kappa_v}(x,x)\ \le\ \Big(\sum_{q=1}^{Q} \sqrt{a_q}\,\sqrt{\overline{\kappa_q}(x,x)}\Big)^2
\ \le\ \Big(\sum_{q=1}^{Q} a_q\Big)\,\Big(\sum_{q=1}^{Q} \overline{\kappa_q}(x,x)\Big).
$$
By Theorem \ref{thm:kanagawa_scalar}, $\overline{\kappa_q}(x,x)\le \mathcal{C}_q\,h_{\rho,X}(x)^{\,2s_q-d_c}$ whenever $h_{\rho,X}(x)\le h_{0,q}$. Therefore
$$
\overline{\kappa_v}(x,x)\ \le\ \Big(\sum_{q=1}^{Q} a_q\Big)\,\Big(\sum_{q=1}^{Q} \mathcal{C}_q\,h_{\rho,X}(x)^{\,2s_q-d_c}\Big)
\ =\ v^\top\!\Big(\sum_{q=1}^{Q} B_q\Big)v\ \cdot\ \Big(\sum_{q=1}^{Q} \mathcal{C}_q\,h_{\rho,X}(x)^{\,2s_q-d_c}\Big) .
$$
Combining with (\ref{eq:firstdisplay}) gives, for all $v$,
$$
v^\top \bar{\kappa}(x,x)\,v\ \le\ v^\top\!\Big(\sum_{q=1}^{Q} B_q\Big)v\ \cdot\ \Big(\sum_{q=1}^{Q} \mathcal{C}_q\,h_{\rho,X}(x)^{\,2s_q-d_c}\Big),
$$
which is equivalent to the stated Loewner bound.
\end{proof}

\begin{remark}[Interpretation of Theorem \ref{thm:lmc_fill}]
Theorem \ref{thm:lmc_fill} states that, for a multi-output GP with an LMC kernel, the posterior covariance at any point $x$ is bounded above by a scalar factor depending on the fill distance $h_{\rho,X}(x)$, times a fixed output-output matrix $\sum_q B_q$.
In other words, as the design becomes denser (the fill distance decreases), the posterior uncertainty on all outputs decays at a rate determined by the smoothness parameters $s_q$.
This result lifts the scalar bound of Kanagawa et al. \cite{kanagawa2018} to the multi-output setting relevant for our coupled surrogates.
\end{remark}

After controlling the posterior variances of the individual surrogates, the second ingredient concerns the stability of the coupled fixed point with respect to perturbations of these surrogates.

\textbf{Coupling operator and solution map.}
Let $\mathcal U$ be the interface space and $H:\mathcal U\to\mathcal Y$ the post-processing map to outputs.
Fix an integer $C\ge 2$, and let $f=(f^{(1)},\ldots,f^{(C)})$ be a $C$-tuple of surrogates, with fixed interface operators
$$\Gamma_{1}:\mathcal U\to\mathcal X_1, \Gamma_{r,r+1}:\mathcal Y_r\to\mathcal X_{r+1},\qquad r=1,\dots,C-1,
$$
and $\Gamma_{C}:\mathcal Y_C\to\mathcal U$,
define the %parametric 
coupling operator
$$
\mathcal T_f \;:=\; \Gamma_{C}\circ f^{(C)}\circ \Gamma_{C-1,C}\circ \cdots \circ f^{(1)}\circ \Gamma_{1} \;:\; \mathcal U\to\mathcal U.
$$
We then define the solution map $\Phi:\mathcal A\to\mathcal Y,\qquad
\Phi(f)\ :=\ H\big(u^\star(f)\big),\quad u^\star(f)\in\mathrm{Fix}(\mathcal T_f),$
where $\mathcal A$ is an admissible class of surrogate tuples on which the fixed point exists and is unique.

This stability analysis is based on a uniform contraction property of the coupling operator together with a Lipschitz condition on the post-processing map $H$.

\begin{assumption}[Uniform contraction]
\label{ass:contract}
There exists $\rho\in(0,1)$ such that for all $f\in\mathcal A$, the operator $\mathcal T_f$ is a contraction on a Banach space $(\mathcal U,\|\cdot\|_{\mathcal U})$ with modulus at most $\rho$, and $H$ is $L_H$-Lipschitz.
Moreover, there are constants $L_1,\dots,L_C$ (depending only on the interface operators and uniform Lipschitz bounds of the admissible surrogates) such that for any $f,g\in\mathcal A$ and any $u\in\mathcal U$,
\begin{equation}
\label{eq:one-block}
\big\|\mathcal T_f(u)-\mathcal T_g(u)\big\|_{\mathcal U}
\ \le\ \sum_{c=1}^C L_c\,\big\|f^{(c)}-g^{(c)}\big\|_{\infty,2}.
\end{equation}
\end{assumption}

Under the above contraction assumption, the following proposition shows that the coupled solution depends Lipschitz-continuously on the underlying surrogates.

\begin{proposition}[Lipschitz stability of the solution map]
\label{prop:LipPhi}
Under Assumption \ref{ass:contract}, the map $\Phi$ is Lipschitz on $\mathcal A$:
\begin{equation}
\big\|\Phi(f)-\Phi(g)\big\|_{\mathcal Y}
\ \le\ \frac{L_H}{1-\rho}\,\sum_{c=1}^C L_c\,\big\|f^{(c)}-g^{(c)}\big\|_{\infty,2}
\qquad\text{for all } f,g\in\mathcal A.
\end{equation}
\end{proposition}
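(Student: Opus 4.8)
The plan is to run the standard Banach fixed-point perturbation argument, isolating the dependence on the surrogate tuple (which enters only through the operator symbol $\mathcal T_f$ versus $\mathcal T_g$) from the contraction estimate, and then to push the resulting interface bound through the Lipschitz post-processing map $H$.

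First I would fix $f,g\in\mathcal A$ and write $u^\star:=u^\star(f)$, $v^\star:=u^\star(g)$ for the two fixed points; these exist and are unique by the definition of $\mathcal A$ (equivalently, by Proposition \ref{prop:banach} combined with the uniform contraction of Assumption \ref{ass:contract}). Using $u^\star=\mathcal T_f(u^\star)$ and $v^\star=\mathcal T_g(v^\star)$, I would insert the intermediate term $\mathcal T_g(u^\star)$ and apply the triangle inequality:
\begin{equation}
\|u^\star-v^\star\|_{\mathcal U}
=\big\|\mathcal T_f(u^\star)-\mathcal T_g(v^\star)\big\|_{\mathcal U}
\ \le\ \big\|\mathcal T_f(u^\star)-\mathcal T_g(u^\star)\big\|_{\mathcal U}
+\big\|\mathcal T_g(u^\star)-\mathcal T_g(v^\star)\big\|_{\mathcal U}.
\end{equation}
The first term is controlled by the one-block bound (\ref{eq:one-block}) evaluated at $u=u^\star$, giving $\sum_{c=1}^C L_c\,\|f^{(c)}-g^{(c)}\|_{\infty,2}$; the second term is at most $\rho\,\|u^\star-v^\star\|_{\mathcal U}$ since $\mathcal T_g$ is a $\rho$-contraction. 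Rearranging (legitimate because $\rho<1$) yields
\begin{equation}
\|u^\star-v^\star\|_{\mathcal U}\ \le\ \frac{1}{1-\rho}\sum_{c=1}^C L_c\,\big\|f^{(c)}-g^{(c)}\big\|_{\infty,2}.
\end{equation}
Finally I would apply the $L_H$-Lipschitz property of $H$ to pass from the interface to the output: $\|\Phi(f)-\Phi(g)\|_{\mathcal Y}=\|H(u^\star)-H(v^\star)\|_{\mathcal Y}\le L_H\,\|u^\star-v^\star\|_{\mathcal U}$, and combine with the previous display to get the stated constant $L_H/(1-\rho)$.

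There is essentially no hard step: the only points requiring care are (i) choosing the intermediate term to be $\mathcal T_g(u^\star)$ (not $\mathcal T_f(v^\star)$), so that the contraction bound is applied to a single operator $\mathcal T_g$ evaluated at the two points $u^\star,v^\star$; and (ii) verifying that Assumption \ref{ass:contract} supplies a modulus $\rho$ and constants $L_c$ that are \emph{uniform} over $\mathcal A$, which is exactly what allows the second term to be absorbed for every admissible $g$ and keeps the final constant finite. Both are guaranteed by the hypotheses, so the argument is a short computation rather than a genuine obstacle.
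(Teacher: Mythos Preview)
Your proof is correct and follows essentially the same Banach fixed-point perturbation argument as the paper: split $\|\mathcal T_f(u^\star(f))-\mathcal T_g(u^\star(g))\|$ via an intermediate term, apply the contraction to one piece and the one-block bound (\ref{eq:one-block}) to the other, rearrange, and compose with the $L_H$-Lipschitz map $H$. The only cosmetic difference is that the paper inserts $\mathcal T_f(u^\star(g))$ rather than your $\mathcal T_g(u^\star(f))$; these are symmetric choices and both work equally well, so your remark (i) that the other choice must be avoided is slightly overstated.
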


\begin{proof}
For any $f,g\in\mathcal A$,
$$
\|u^\star(f)-u^\star(g)\|_{\mathcal U}
=\|\mathcal T_f(u^\star(f))-\mathcal T_g(u^\star(g))\|_{\mathcal U}
\le \|\mathcal T_f(u^\star(f))-\mathcal T_f(u^\star(g))\|_{\mathcal U}
+ \|\mathcal T_f(u^\star(g))-\mathcal T_g(u^\star(g))\|_{\mathcal U}.
$$
By contraction, the first term is $\le \rho\,\|u^\star(f)-u^\star(g)\|_{\mathcal U}$,  by \eqref{eq:one-block}, the second term is $\le \sum_{c=1}^C L_c\|f^{(c)}-g^{(c)}\|_{\infty,2}$.
Therefore,
$\|u^\star(f)-u^\star(g)\|_{\mathcal U}\ \le\ \frac{1}{1-\rho}\,\sum_{c=1}^C L_c\,\|f^{(c)}-g^{(c)}\|_{\infty,2}.$

Finally, we apply the $L_H$-Lipschitzness of $H$ and we obtain:
$$
\|\Phi(f)-\Phi(g)\|_{\mathcal Y}
=\left\|H\left(u^{\star}(f)\right)-H\left(u^{\star}(g)\right)\right\|_{\mathcal Y}
\leq L_H\left\|u^{\star}(f)-u^{\star}(g)\right\|_{\mathcal U}
\leq \frac{L_H}{1-\rho} \sum_{c=1}^C L_c\left\|f^{(c)}-g^{(c)}\right\|_{\infty, 2}.
$$
\end{proof}

\begin{remark}[Interpretation of Proposition \ref{prop:LipPhi}]
Proposition \ref{prop:LipPhi} shows that the final coupled output $\Phi(f)$ is Lipschitz-continuous with respect to the surrogates.
Small uniform perturbations of the surrogates $f^{(c)}$ (measured in $\|\cdot\|_{\infty,2}$) produce controlled changes in the fixed-point solution and thus in the observable output.
The prefactor $\tfrac{L_H}{1-\rho}\sum_c L_c$ can be interpreted as a global "sensitivity" constant of the coupled system with respect to surrogate errors.
\end{remark}

The next corollary applies this general Lipschitz bound to the specific form of perturbations used in the Monte Carlo scheme, namely constant (in $x$) offsets of the GP means at each iteration.

\begin{corollary}[Constant-offset perturbations used in our MC scheme]
\label{cor:offsets}
In our Monte Carlo replication $(m,j)$, we use perturbed surrogates of the form
$\hat f^{(c)}(x)=\bar\mu^{(c)}(x)+\delta^{(c)}_{m,j}$ with $\delta^{(c)}_{m,j}\in\mathbb R^{D_c}$ independent of $x$.
Then $\|\hat f^{(c)}-\bar\mu^{(c)}\|_{\infty,2}=\|\delta^{(c)}_{m,j}\|_2$ and the contraction modulus $\rho$ of $\mathcal T_f$ is unchanged. Hence
\begin{equation}
\big\|\Phi(\hat f)-\Phi(\bar\mu)\big\|_{\mathcal Y}
\ \le\ \frac{L_H}{1-\rho}\,\sum_{c=1}^C L_c\,\|\delta^{(c)}_{m,j}\|_2.
\end{equation}
\end{corollary}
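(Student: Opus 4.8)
The plan is to obtain the corollary as a direct specialization of Proposition~\ref{prop:LipPhi} applied to the pair $(\hat f,\bar\mu)$, where $\bar\mu=(\bar\mu^{(1)},\dots,\bar\mu^{(C)})$ denotes the tuple of posterior-mean surrogates. Two of the three claims (the $\|\cdot\|_{\infty,2}$ identity and the final inequality) are then essentially immediate; the only substantive point, and the place I expect to spend effort, is the middle claim: verifying that the perturbed tuple $\hat f=(\hat f^{(1)},\dots,\hat f^{(C)})$ again lies in the admissible class $\mathcal A$ with the same contraction modulus $\rho$, so that Assumption~\ref{ass:contract} is actually available for $\hat f$.

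First I would evaluate the perturbation size. Since $\hat f^{(c)}(x)-\bar\mu^{(c)}(x)=\delta^{(c)}_{m,j}$ does not depend on $x$, the sup-Euclidean norm is $\|\hat f^{(c)}-\bar\mu^{(c)}\|_{\infty,2}=\sup_{x\in\mathcal X_c}\|\delta^{(c)}_{m,j}\|_2=\|\delta^{(c)}_{m,j}\|_2$, which is the first assertion.

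Next I would check that the constant offsets leave the contraction modulus unchanged. The key observation is that in any difference $\mathcal T_{\hat f}(u)-\mathcal T_{\hat f}(u')$, each occurrence of $\hat f^{(c)}$ enters only through a difference $\hat f^{(c)}(\xi)-\hat f^{(c)}(\xi')=\bar\mu^{(c)}(\xi)-\bar\mu^{(c)}(\xi')$ in which the additive constant $\delta^{(c)}_{m,j}$ cancels. Propagating this identity along the composition $\Gamma_C\circ\hat f^{(C)}\circ\Gamma_{C-1,C}\circ\cdots\circ\hat f^{(1)}\circ\Gamma_1$ through the (globally Lipschitz) transfer operators and the uniform Lipschitz bounds on the $\bar\mu^{(c)}$ shows that $\mathcal T_{\hat f}$ satisfies exactly the same Lipschitz-modulus estimate as $\mathcal T_{\bar\mu}$; in particular it remains a contraction with modulus at most $\rho$, and the block constants $L_1,\dots,L_C$ entering \eqref{eq:one-block} are unaffected, as they depend only on the interface operators and on uniform Lipschitz bounds that are insensitive to additive shifts. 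Hence both $\hat f$ and $\bar\mu$ belong to $\mathcal A$ with the same $\rho$ and $L_c$. This bookkeeping — ensuring that inserting a constant offset anywhere inside the composition defining $\mathcal T_{\hat f}$ neither destroys the uniform contraction nor inflates the $L_c$ — is the main obstacle, but once the cancellation in the relevant difference quotients is noted it reduces to reusing the global Lipschitz hypotheses already assumed in Assumption~\ref{ass:contract}.

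Finally I would apply Proposition~\ref{prop:LipPhi} to $(\hat f,\bar\mu)$ and substitute the sup-norm identity of the first step:
\[
\big\|\Phi(\hat f)-\Phi(\bar\mu)\big\|_{\mathcal Y}
\ \le\ \frac{L_H}{1-\rho}\sum_{c=1}^C L_c\,\big\|\hat f^{(c)}-\bar\mu^{(c)}\big\|_{\infty,2}
\ =\ \frac{L_H}{1-\rho}\sum_{c=1}^C L_c\,\|\delta^{(c)}_{m,j}\|_2,
\]
which is the claimed bound and completes the argument.
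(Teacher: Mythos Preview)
Your proposal is correct and follows the paper's (implicit) approach: the paper states the corollary without proof, treating it as an immediate specialization of Proposition~\ref{prop:LipPhi} once one observes that constant offsets leave the Lipschitz constants of each $\hat f^{(c)}$, and hence the contraction modulus of $\mathcal T_{\hat f}$ and the block constants $L_c$, unchanged. Your phrasing that ``each occurrence of $\hat f^{(c)}$ enters only through a difference $\hat f^{(c)}(\xi)-\hat f^{(c)}(\xi')$'' is slightly loose (the intermediate arguments $\xi,\xi'$ are themselves shifted by the earlier offsets, so the two coupling trajectories are not identical), but your immediate follow-up via the chain of uniform Lipschitz bounds is exactly the right argument and yields the same modulus estimate.
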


\noindent {Note that $\Phi$ and $\mathcal T_f$ are different.}
$\mathcal T_f:\mathcal U\to\mathcal U$ is the one-step iteration map (depends on $f$ and acts on the interface variable $u$), whereas $\Phi$ maps the surrogate tuple $f$ to the final observable $y=H(u^\star(f))$ obtained at the fixed point of $\mathcal T_f$.

\begin{remark}[Connection to the Monte Carlo scheme]
Corollary \ref{cor:offsets} shows that the specific form of perturbation used in our Monte Carlo scheme
(\emph{constant} offsets with respect to $x$ at each iteration) preserves the contraction property of the coupling operator and turns the Lipschitz bound of Proposition \ref{prop:LipPhi} into a simple bound in terms of the Euclidean norms $\|\delta^{(c)}_{m,j}\|_2$.
This makes it possible to link the GP-based perturbations to a global control on the deviation of the coupled solution.
\end{remark}

Combining the GP posterior variance bounds from the first part of this subsection with the stability result above leads to the following finite-sample, high-probability control on the Monte Carlo outputs, which forms the main theoretical result of this section.

\begin{proposition}[Finite-sample high-probability UQ bound]
\label{prop:finitesampleHQBound_LMC_corrected}
For each $c\in\{1,\dots,C\}$, let the surrogate 
%posterior  
be a multi-output GP with matrix-valued kernel of LMC form
$\kappa^{(c)}(x,x')=\sum_{q=1}^{Q_c} B_q^{(c)}\,\kappa_q^{(c)}(x,x')$
where $B^{(c)}_{q}\succeq \mathbf{0}$, and each scalar latent kernel $\kappa^{(k)}_{q}$ satisfies the scalar fill-distance bound (Theorem \ref{thm:kanagawa_scalar}) with parameters $(s_{c,q},C_{c,q},h_{0,c,q})$, with $s_{c,q}>\tfrac{d_c}{2}$. Let $h^{(c)}$ denote the (local) fill distance of the design used for $f^{(c)}$, and assume
$h^{(c)}\ \le\ h_{0,c}\ :=\ \min_{q} h_{0,c,q}.$ 

In our Monte Carlo scheme, the perturbed surrogates are constant offsets
$$
\hat f^{(c)}(x)=\bar\mu^{(c)}(x)+\delta^{(c)},\qquad \delta^{(c)}\sim\mathcal N(0,\Sigma_c)\ \text{in }\mathbb R^{D_c},\ \ \text{independent of $x$},
$$
with the covariance matrices dominated (in spectral norm) by the posterior marginals: $\lambda_{\max}(\Sigma_c)\ \le\ \sup_{x\in\mathcal X}\lambda_{\max}\!\big(\bar{\kappa}^{(c)}(x,x)\big).$
Then, for any $\beta\in(0,1)$, with probability at least $1-\beta$,
$$
\big\|\Phi(\{\bar\mu^{(c)}+\delta^{(c)}\}_c)-\Phi(\{\bar\mu^{(c)}\}_c)\big\|_{\mathcal{Y}}
\ \le\ \frac{L_H}{1-\rho}\ \sum_{c=1}^C L_c\ 
\sqrt{\,2\,D_c\ \lambda_{\max}\!\Big(\textstyle\sum_{q=1}^{Q_c} B^{(c)}_{q}\Big)\ 
\sum_{q=1}^{Q_c} \mathcal{C}_{c,q}\,\big(h^{(c)}\big)^{\,2s_{c,q}-d_c}\,}\ 
\sqrt{\log\!\Big(\tfrac{2 C D_c}{\beta}\Big)}.
$$

\end{proposition}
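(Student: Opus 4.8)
The plan is to derive the bound by combining three ingredients that are already in place: the deterministic Lipschitz estimate of Corollary~\ref{cor:offsets}, the LMC posterior–variance bound of Theorem~\ref{thm:lmc_fill}, and a Gaussian tail estimate together with a union bound over the output coordinates. The randomness enters only through the constant offsets $\delta^{(c)}\sim\mathcal N(0,\Sigma_c)$, so the whole argument reduces to a concentration statement for the Euclidean norms $\|\delta^{(c)}\|_2$. Concretely, I would first invoke Corollary~\ref{cor:offsets}: since each perturbed surrogate $\hat f^{(c)}(x)=\bar\mu^{(c)}(x)+\delta^{(c)}$ is a constant-in-$x$ offset, it preserves the contraction modulus $\rho$ of the coupling operator, and the corollary gives the deterministic inequality
$$
\big\|\Phi(\{\bar\mu^{(c)}+\delta^{(c)}\}_c)-\Phi(\{\bar\mu^{(c)}\}_c)\big\|_{\mathcal Y}\ \le\ \frac{L_H}{1-\rho}\sum_{c=1}^C L_c\,\|\delta^{(c)}\|_2 .
$$
It therefore suffices to bound each $\|\delta^{(c)}\|_2$ on a single high-probability event.

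Next I would argue coordinatewise. Writing $\delta^{(c)}=(\delta^{(c)}_1,\dots,\delta^{(c)}_{D_c})$, each marginal satisfies $\delta^{(c)}_\ell\sim\mathcal N\!\big(0,(\Sigma_c)_{\ell\ell}\big)$ with $(\Sigma_c)_{\ell\ell}=e_\ell^\top\Sigma_c e_\ell\le\lambda_{\max}(\Sigma_c)=:\sigma_c^2$ because $\Sigma_c\succeq\mathbf 0$. Using the standard bound $\mathbb P(|Z|>t)\le 2\exp\!\big(-t^2/(2\sigma^2)\big)$ for $Z\sim\mathcal N(0,\sigma^2)$, the choice $t_c:=\sigma_c\sqrt{2\log(2CD_c/\beta)}$ yields $\mathbb P(|\delta^{(c)}_\ell|>t_c)\le \beta/(CD_c)$. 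A union bound over the $\sum_{c=1}^C D_c$ coordinates, allocating the budget $\beta/(CD_c)$ to the $\ell$-th coordinate of surrogate $c$ (which sums to $\sum_c D_c\cdot\beta/(CD_c)=\beta$), shows that with probability at least $1-\beta$ we have $|\delta^{(c)}_\ell|\le t_c$ simultaneously for all $c$ and all $\ell\le D_c$, hence
$$
\|\delta^{(c)}\|_2\ \le\ \sqrt{D_c}\,\max_{1\le\ell\le D_c}|\delta^{(c)}_\ell|\ \le\ \sqrt{D_c}\,t_c\ =\ \sqrt{2D_c\,\sigma_c^2\,\log(2CD_c/\beta)} .
$$

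Finally I would replace $\sigma_c^2$ by its fill-distance bound. By assumption $\lambda_{\max}(\Sigma_c)\le\sup_{x\in\mathcal X}\lambda_{\max}\!\big(\bar\kappa^{(c)}(x,x)\big)$, and by Theorem~\ref{thm:lmc_fill}, at any $x$ with local fill distance at most $h_{0,c}$ one has $\bar\kappa^{(c)}(x,x)\preceq\big(\sum_q\mathcal C_{c,q}\,h_{\rho,X}(x)^{2s_{c,q}-d_c}\big)\big(\sum_q B^{(c)}_q\big)$, so that $\lambda_{\max}\!\big(\bar\kappa^{(c)}(x,x)\big)\le\big(\sum_q\mathcal C_{c,q}\,h_{\rho,X}(x)^{2s_{c,q}-d_c}\big)\lambda_{\max}\!\big(\sum_q B^{(c)}_q\big)$. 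Since each exponent $2s_{c,q}-d_c>0$, the scalar factor is nondecreasing in the fill distance, so interpreting $h^{(c)}$ as a uniform bound on the local fill distances of the $c$-th design and using $h^{(c)}\le h_{0,c}$ gives $\sigma_c^2\le\lambda_{\max}\!\big(\sum_q B^{(c)}_q\big)\sum_q\mathcal C_{c,q}\,(h^{(c)})^{2s_{c,q}-d_c}$. Substituting this into the bound on $\|\delta^{(c)}\|_2$ and then into the Corollary~\ref{cor:offsets} inequality produces, on the same probability-$(1-\beta)$ event, exactly the stated estimate.

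I do not expect a genuine obstacle here: the proof is essentially an assembly of results proved earlier in the section. The only points needing care are the bookkeeping in the union bound (so that the logarithmic factor comes out as $\log(2CD_c/\beta)$ with the $c$-dependent dimension $D_c$), the monotonicity step required to pass from the pointwise local fill distances $h_{\rho,X}(x)$ appearing in Theorem~\ref{thm:lmc_fill} to the single design quantity $h^{(c)}$, and the mild slack introduced by $\|\delta^{(c)}\|_2\le\sqrt{D_c}\max_\ell|\delta^{(c)}_\ell|$ — which one could sharpen via a $\chi^2$ concentration inequality (Laurent--Massart) at the price of a less transparent constant, but which is precisely what yields the form displayed in the statement.
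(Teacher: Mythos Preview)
Your proposal is correct and follows essentially the same route as the paper: apply the deterministic Lipschitz bound for constant offsets, control $\|\delta^{(c)}\|_2$ via $\|\cdot\|_2\le\sqrt{D_c}\|\cdot\|_\infty$ together with coordinatewise Gaussian tails and a union bound, and then feed in the LMC fill-distance bound on $\lambda_{\max}(\Sigma_c)$. The only cosmetic difference is that the paper organizes the union bound in two stages (first over the $D_c$ coordinates to get $\mathbb P(\|\delta^{(c)}\|_2>t_c)\le\beta/C$, then over $c=1,\dots,C$), whereas you union-bound over all $\sum_c D_c$ coordinates at once; both yield the same $t_c$ and the same final inequality.
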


\begin{proof}
By Theorem \ref{thm:lmc_fill}, for all $x$,
$\bar{\kappa}^{(c)}(x,x)\ \preceq\ \Big(\sum_{q=1}^{Q_c} \mathcal{C}_{c,q}\,\big(h^{(c)}\big)^{\,2s_{c,q}-d_c}\Big)\ \Big(\sum_{q=1}^{Q_c} B^{(c)}_{q}\Big).$
Taking largest eigenvalues and using homogeneity of $\lambda_{\max}$ for nonnegative scalars gives
$$
\sup_{x\in\mathcal X}\lambda_{\max}\!\big(\bar{\kappa}^{(c)}(x,x)\big)
\ \le\ \lambda_{\max}\!\Big(\sum_{q=1}^{Q_c} B^{(c)}_{q}\Big)\ \sum_{q=1}^{Q_c} \mathcal{C}_{c,q}\,\big(h^{(c)}\big)^{\,2s_{c,q}-d_c}.
$$
By the assumption on $\Sigma_c$, it follows that
$$
\lambda_{\max}(\Sigma_c)\ \le\ \sigma_c^2,
\qquad\text{where}\quad
\sigma_c^2:=\lambda_{\max}\!\Big(\sum_{q=1}^{Q_c} B^{(c)}_{q}\Big)\ \sum_{q=1}^{Q_c} \mathcal{C}_{c,q}\,\big(h^{(c)}\big)^{\,2s_{c,q}-d_c}.
$$

Let $\delta^{(c)}\sim\mathcal N(0,\Sigma_c)$ in $\mathbb R^{D_c}$. For any $t_c>0$, using $\|\cdot\|_2\le \sqrt{D_c}\|\cdot\|_\infty$, a union bound over coordinates, and the one-dimensional Gaussian tail bound,
\begin{align*}
\mathbb P\!\Big(\|\delta^{(c)}\|_2>t_c\Big)
&\le \mathbb P\!\Big(\|\delta^{(c)}\|_\infty>t_c/\sqrt{D_c}\Big) \\
&\le \sum_{\ell=1}^{D_c}\mathbb P\!\Big(|\delta^{(c)}_\ell|>t_c/\sqrt{D_c}\Big)
\ \le\ 2D_c\exp\!\Big(-\frac{t_c^2}{2D_c\,\lambda_{\max}(\Sigma_c)}\Big) \\
&\le\ 2D_c\exp\!\Big(-\frac{t_c^2}{2D_c\,\sigma_c^2}\Big).
\end{align*}
With the choice
$$
t_c:=\sqrt{2D_c}\,\sigma_c\,\sqrt{\log\!\Big(\tfrac{2c D_c}{\beta}\Big)}
=\sqrt{\,2\,D_c\ \lambda_{\max}\!\Big(\textstyle\sum_{q=1}^{Q_c} B^{(c)}_{q}\Big)\ 
\sum_{q=1}^{Q_c} \mathcal{C}_{c,q}\,\big(h^{(c)}\big)^{\,2s_{c,q}-d_c}\,}\ \sqrt{\log\!\Big(\tfrac{2c D_c}{\beta}\Big)},
$$
we get $\mathbb P(\|\delta^{(c)}\|_2>t_c)\le \beta/c$. A union bound over $c=1,\dots,c$ yields that, with probability at least $1-\beta$, simultaneously for all $c$, $\|\delta^{(c)}\|_2\ \le\ t_c.$
Since the offsets are constant in $x$, $\|\hat f^{(c)}-\bar\mu^{(c)}\|_{\infty,2}=\|\delta^{(c)}\|_2$.

By Proposition \ref{prop:LipPhi},
$$
\big\|\Phi(\{\hat f^{(c)}\}_c)-\Phi(\{\bar\mu^{(c)}\}_c)\big\|_{\mathcal Y}
\ \le\ \frac{L_H}{1-\rho}\,\sum_{c=1}^C L_c\,\|\hat f^{(c)}-\bar\mu^{(c)}\|_{\infty,2}
\ \le\ \frac{L_H}{1-\rho}\,\sum_{c=1}^C L_c\, t_c,
$$
and substituting the expression of $t_c$ gives the asserted bounds.
\end{proof}

\begin{remark}[Interpretation and role of Proposition \ref{prop:finitesampleHQBound_LMC_corrected}]
Proposition \ref{prop:finitesampleHQBound_LMC_corrected} is the main theoretical result of this section.
It provides a finite-sample, high-probability bound on the deviation between the output obtained by running the coupling algorithm with the mean surrogates $\{\bar\mu^{(c)}\}_c$
and the random outputs produced by our practical Monte Carlo scheme based on perturbed surrogates $\{\bar\mu^{(c)}+\delta^{(c)}\}_c$. In words, it shows that, with probability at least $1-\beta$, the random outputs $\Phi(\{\bar\mu^{(c)}+\delta^{(c)}\}_c)$ lie in a neighbourhood of the deterministic fixed-point output $\Phi(\{\bar\mu^{(c)}\}_c)$ whose radius is explicitly controlled by: the Lipschitz sensitivity constants of the coupled solver ($L_H$, $\rho$, $L_c$), the posterior variances of the Gaussian process surrogates (through the eigenvalues of $\sum_q B^{(c)}_q$ and the constants $C_{c,q}$), and the fill distances $h^{(c)}$ of the designs used to train the surrogates. As the design becomes denser (so that $h^{(c)}\to 0$ and the posterior variances decrease), the bound shrinks and the Monte Carlo outputs concentrate around the deterministic fixed point.
\end{remark}

\subsection{Analytical validation on a benchmark example}
\label{analyticalExample}

This subsection considers a simple scalar benchmark to compare the surrogate-based coupled solution with a high-accuracy reference
and to assess, in a controlled setting, the behaviour of Method  2 (rigorous trajectory-conditioned sampling) and Method  3
(proposed mean-path constant-offset scheme) introduced in Section  \ref{UQGPMC}.

\medskip
\noindent\textbf{Benchmark and reference solution.}
We consider two deterministic scalar codes (represented by two analytical functions) $g^{(1)},g^{(2)}:[0,1]\to\mathbb{R}$ defined by
\begin{align*}
g^{(1)}(x) &= 0.12 + 0.18x + 0.06\sin(2\pi x)
            + 0.05\exp\!\big(-60(x-0.70)^2\big) + 0.03x(1-x),\\
g^{(2)}(x) &= 0.58 - 0.22x + 0.03\tanh\!\big(8(x-0.40)\big)
            + 0.015\sin(4\pi x).
\end{align*}
The coupled solution $y^\star$ is defined by the fixed-point equation
\begin{equation}
\label{eq:benchmark_fp}
y^\star = \tfrac{1}{2}\big(g^{(1)}(y^\star) + g^{(2)}(y^\star)\big).
\end{equation}

\noindent\textbf{Clarification.}
We embed the interface set in the Banach space $(\mathbb R,|\cdot|)$ and work on the closed subset
$\mathcal U:=[0,1]\subset\mathbb R$. Define the two solvers
 $
\mathcal S_1:\mathcal U\to\mathbb R^2,\quad 
\mathcal S_1(y):=\big(g^{(1)}(y),\,g^{(2)}(y)\big),
$
$
\mathcal S_2:\mathbb R^2\to\mathcal U,\quad 
\mathcal S_2(z_1,z_2):=\tfrac12(z_1+z_2),
$
with transfer operators $\Gamma_1=\mathrm{Id}_{\mathcal U}$, $\Gamma_{12}=\mathrm{Id}_{\mathbb R^2}$ and $\Gamma_2=\mathrm{Id}_{\mathcal U}$.
The coupling operator is therefore
$$
\mathcal T=\Gamma_2\circ\mathcal S_2\circ\Gamma_{12}\circ\mathcal S_1\circ\Gamma_1,
\quad
\mathcal T(y)=\tfrac12\big(g^{(1)}(y)+g^{(2)}(y)\big),
$$
and the coupled solution is the fixed point $y^\star\in\mathrm{Fix}(\mathcal T)$, i.e.\ \eqref{eq:benchmark_fp}.
A numerical bound on $\sup_{y\in[0,1]}|\mathcal T'(y)|$ gives $\rho\simeq 0.279<1$, hence $\mathcal T$ admits a unique fixed point in $[0,1]$.

\medskip
\noindent\textbf{Surrogates and coupled fixed-point solver.}
Training data are generated by Latin hypercube sampling on $[0,1]$ and are noise-free (deterministic observations).
For each code $c\in\{1,2\}$, a scalar GP surrogate $f^{(c)}$ is fitted using a Matérn covariance with smoothness $\nu=5/2$
and a \emph{fixed} length-scale $\ell=0.25$ (no hyperparameter optimisation). A very small nugget effect $\alpha=10^{-12}$ is added
for numerical stability only (it is not interpreted as observation noise).
Figure  \ref{fig:surrogates_overview} shows the resulting surrogates (posterior means and $\pm2\sigma$ bands) for the two DOE sizes
considered below.

For each Monte Carlo replication $j$, the coupled surrogate solution is obtained as the fixed point of
$
y = \tfrac{1}{2}\big(f^{(1,j)}(y)+f^{(2,j)}(y)\big),
$
starting from $y^{(0)}=0.5$ and stopped when $\big|y^{(m+1,j)}-y^{(m,j)}\big|\le\varepsilon$ with tolerance $\varepsilon=10^{-8}$. The two methods differ only in the way the surrogate realizations $f^{(c,j)}$ are generated
inside this coupled iteration (see Section  \ref{UQGPMC} for their detailed definitions).

\begin{figure}[ht!]
    \centering
    \includegraphics[width=0.95\textwidth]{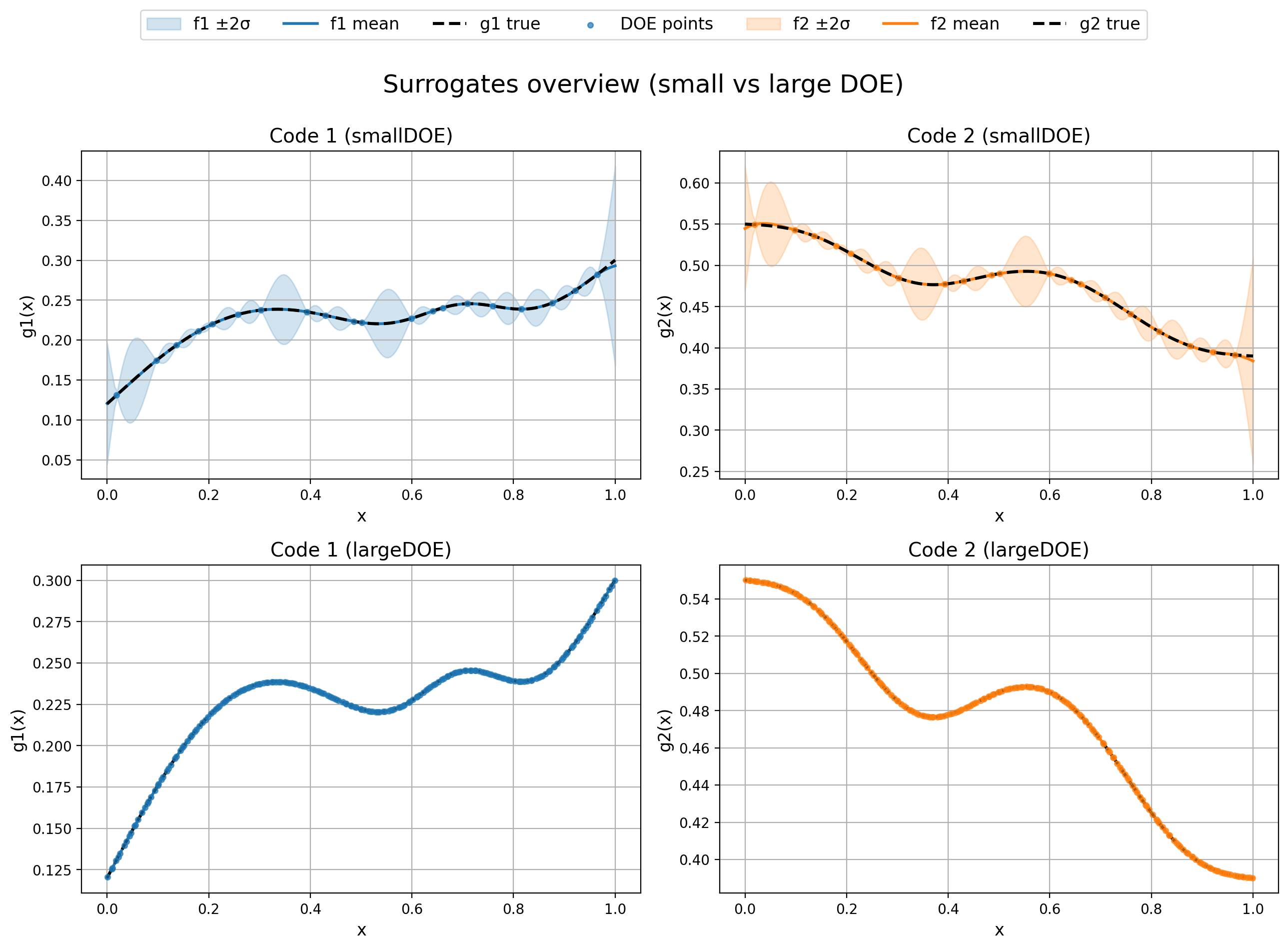}
    \caption{Surrogates for the two scalar codes in the benchmark example, for smallDOE ($n=20$) and largeDOE ($n=200$).
    Deterministic training data, GP means and $\pm2\sigma$ bands are shown together with the true responses.}
    \label{fig:surrogates_overview}
\end{figure}

\noindent\textbf{Benchmark results (small DOE versus large DOE).}
Two DOE sizes are considered: small DOE: $n=20$ training points per code, large DOE: $n=200$ training points per code.
In both cases, the fixed-point tolerance is $\varepsilon=10^{-8}$ and $N=500$ Monte Carlo replications are performed for each method.
The reported $95\%$ intervals are central empirical intervals computed as the $2.5\%$ and $97.5\%$ sample quantiles of the Monte Carlo outputs.

For the smallest design ($n=20$), Method  2 (trajectory-conditioned, rigorous) yields a mean coupled solution of $0.356081$,
with variance $1.623\times 10^{-4}$ and a central $95\%$ empirical interval $[0.330824,\,0.379285]$.
Method  3 (mean-path constant offsets) gives a mean of $0.356523$, variance $2.024\times 10^{-4}$ and a central $95\%$ empirical interval
$[0.327743,\,0.382857]$. Both schemes recover $y^\star=0.3574988$ within these intervals, and the difference in sample means
(M2$-$M3) is $-4.42\times 10^{-4}$. A Welch $t$-test does not detect any significant discrepancy between the means
($t=-0.5177$, $p=0.6048$), while a Kolmogorov--Smirnov test indicates a modest but statistically detectable difference in distribution
shape (statistic $0.0980$, $p=0.0164$).

With a denser design ($n=200$), Method  2 produces a mean of $0.357499$, variance $1.498\times 10^{-10}$ and a central $95\%$ empirical interval
$[0.357475,\,0.357521]$, while Method  3 yields a mean of $0.357500$, variance $2.319\times 10^{-10}$ and a central $95\%$ empirical interval
$[0.357469,\,0.357527]$. The difference in sample means (M2$-$M3) is approximately $-10^{-6}$.
In this regime both distributions are extremely concentrated around $y^\star$, so practical discrepancies are negligible; the Welch $t$-test again
shows no significant mean difference ($t=-1.1371$, $p=0.2558$), while the KS test is borderline at the $5\%$ level
(statistic $0.0860$, $p=0.0495$), suggesting at most a very small difference in distributional shape.

\begin{figure}[ht!]
    \centering
    \includegraphics[width=0.95\textwidth]{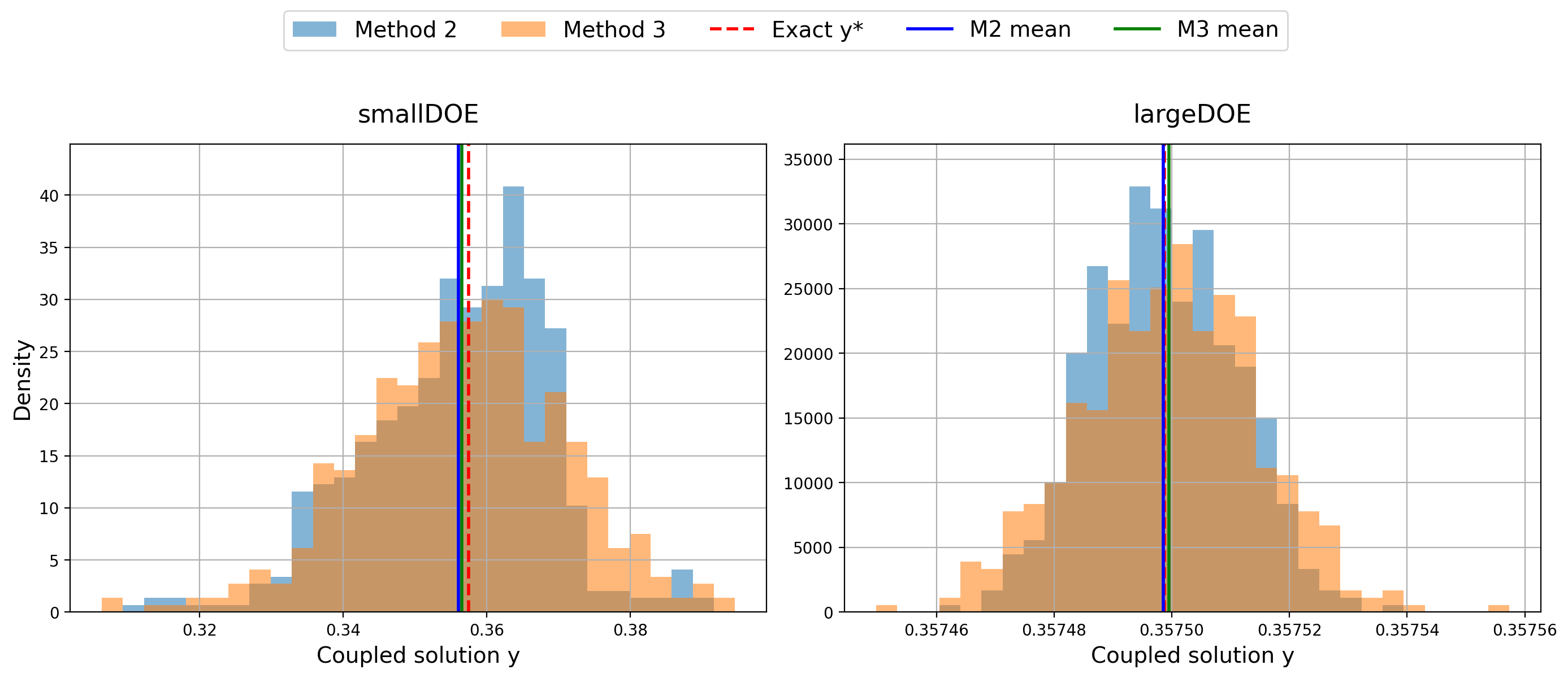}
    \caption{Distribution of the coupled solution $y$ for Method  2 (rigorous trajectory-conditioned sampling) and Method  3 (mean-path constant offsets)
    on the benchmark example, for smallDOE ($n=20$, left) and largeDOE ($n=200$, right). The vertical line indicates the reference solution $y^\star$.}
    \label{fig:benchmark_validation}
\end{figure}

\medskip
\noindent\textbf{Discussion.}
For both DOE sizes, Methods  2 and  3 produce coupled outputs consistent with the reference solution.
For smallDOE, Method  3 matches Method  2 in mean and yields comparable uncertainty levels, while a KS test indicates a modest discrepancy
in distributional shape, for largeDOE, both approaches essentially collapse onto the deterministic fixed point $y^\star$ with variances of order $10^{-10}$.
From a computational standpoint, Method  3 remains significantly cheaper because it reuses a deterministic mean path and requires only a single joint
Gaussian draw on that path, whereas Method  2 performs sequential trajectory conditioning at each iteration. Overall, these numerical observations support
the use of the constant-offset scheme as a controlled and efficient approximation of the rigorous trajectory-conditioned method when the surrogate designs
become sufficiently informative.

\section{Application: Uncertainty quantification of Fuel assembly bow using the Surrogate Models}
\label{SurrogateCoupling}

\subsection{Physical context and coupled model}
\label{sec:phys_context}

Fuel assembly (See Figure \ref{fig:fuelassembly_schematic}) bow is a long-standing safety and operational concern in pressurized water reactor cores, first highlighted in the 1990s through cases of incomplete rod cluster insertion \cite{Andersson2005,kerkar2008exploitation}. The deformation of fuel assemblies results from the combined action of neutronic, mechanical and hydraulic phenomena \cite{karlsson1999modelling,DELAMBERT2019330,WANNINGER2018297}. In industrial practice, fuel assembly bow is evaluated over successive irradiation cycles, each cycle corresponding to about 12-18 months of operation (See Figure \ref{fig:fuelassembly_bow}) and represented in computations by a sequence of quasi-static time steps. At each time step, hydraulic and mechanical conditions are updated and coupled through a fluid-structure interaction (FSI) loop: the hydraulic code Phorcys computes grid-level forces from the current deformed geometry \cite{delam2021}, while the thermomechanical code DACC updates the assembly deformation using a beam-type model of the rod bundle and skeleton \cite{DELAMBERT2023104668}, including nonlinear rod-grid contact, thermal expansion, irradiation growth, creep, variable grid clamping and hold-down forces \cite{abboud2025}. This partitioned Phorcys-DACC coupling is uncertain and can be computationally demanding (slower solving, more iterations) for large systems.

\begin{figure}[ht]
    \centering
    \begin{minipage}[b]{0.48\textwidth}
        \centering
        \includegraphics[height=5cm,width=\textwidth]{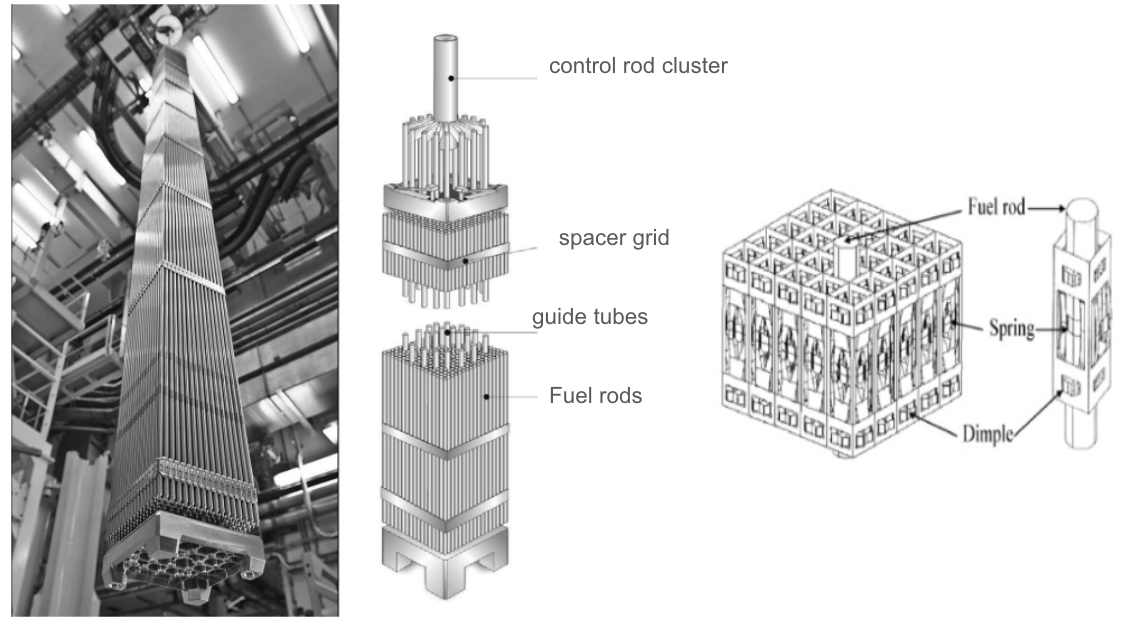}
        \caption{Schematic view of a fuel assembly (around 4m high) with spacer grids \cite{abboudd2025} }
        \label{fig:fuelassembly_schematic}
    \end{minipage}\hfill
    \begin{minipage}[b]{0.48\textwidth}
        \centering
        \includegraphics[height=5cm,width=\textwidth]{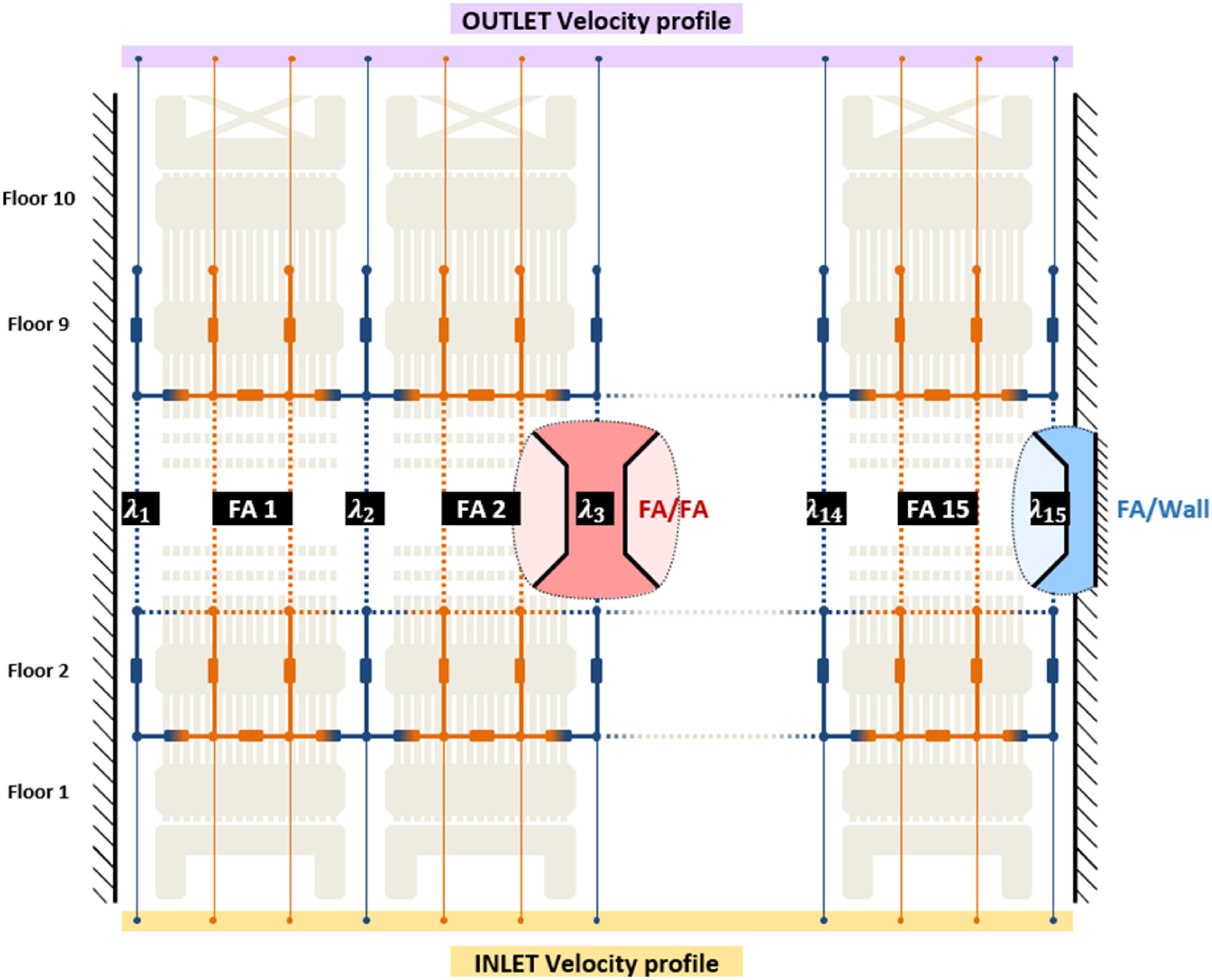}
        \caption{Hydraulic network model for a row of 15 fuel assemblies under reactor operation (inlet and outlet) conditions \cite{delam2021}}
        \label{fig:fuelassembly_bow}
    \end{minipage}
    \label{fig:fuelassembly_overview}
\end{figure}

To reduce this cost while retaining predictive accuracy, surrogate models based on Gaussian processes have been developed in previous works. In \cite{abboud:cea-04598002}, GP hydraulic metamodels were constructed to predict spacer-grid forces from uncertain inputs, and in \cite{abboud2025}, GP mechanical surrogates were built to predict assembly deformation. In both cases, Matérn kernels were adopted to provide flexible but regularised representations of the underlying physics. Their coupling was then formalised and validated in \cite{abboud:cea-05155171}, enabling full-cycle simulations via chained mechanical metamodels, a nonlinear penalisation to prevent inter-assembly penetration, and a robust dual-loop FSI algorithm. Comparisons with the Phorcys-DACC reference chain showed that the coupled surrogates reproduce fuel-assembly bow with sufficient accuracy while reducing computational time by nearly two orders of magnitude.

The present work builds on this surrogate-based coupling and extends it to a fully probabilistic, uncertainty-aware framework. The objective is to quantify the epistemic uncertainty induced by the coupled GP surrogates and to assess, on the realistic fuel-assembly bow application, the practical relevance of the variance-control results established in Section \ref{theoreticalControl}, and finally, to illustrate uncertainties quantification on an example of FA simulation using the surrogate models.

\subsection{Uncertainty quantification of the surrogate models in the context of fuel assembly bow}
\label{sec:UQ_FA}

In this application, the two coupled surrogates play the role of $f^{(1)}$ (hydraulic GP metamodel, one output per spacer grid) and $f^{(2)}$ (mechanical GP metamodel, one output per deformation principal mode). Both are multi-output Gaussian processes with Matérn-based kernels. At each irradiation time step $t$, the coupled Phorcys--DACC solver follows the abstract setting of Section \ref{theoreticalControl} with $C=2$: given a surrogate pair $f=(f^{(1)},f^{(2)})$, the coupling operator $\mathcal T_f$ acts on an interface space $\mathcal U$, admits a unique fixed point $u^\star_t(f)$, and the observable bow field at step $t$ is obtained as
$\mathbf y_t \;=\; \Phi_t(f) \;:=\; H\big(u^\star_t(f)\big),$
with $\Phi_t$ satisfying the Lipschitz stability property of Proposition \ref{prop:LipPhi}. The full-cycle output is obtained by composing these stepwise solution maps over $t=1,\dots,T$. For notational simplicity we denote by $\Phi_{\mathrm{cyc}}(f)$ the resulting full-cycle bow field.  

\noindent\textbf{Deterministic reference coupled simulation.}
Throughout this subsection, we take as deterministic reference the coupled solution obtained by using GP posterior means only:
$f^\mu := \big(\bar\mu^{(1)},\bar\mu^{(2)}\big)$, $\mathbf y^\mu := \Phi_{\mathrm{cyc}}(f^\mu),$
which corresponds to the operational surrogate-based simulator used in practice.

\noindent\textbf{Deterministic mean-path run.}
Uncertainty is propagated with the simplified Monte Carlo strategy (Method 3) of Section \ref{UQGPMC}, adapted to a multi-step irradiation cycle. Starting from the initial core geometry, we first run the coupled simulation with the GP posterior means $\{\bar\mu^{(c)}\}_{c=1}^2$. For each irradiation step $t=1,\dots,T$, the inner fixed-point solver produces interface iterates $(u^{(t,m)})_{m\ge 0}$ and stops after $M_t$ iterations. The surrogate inputs queried at inner iteration $m$ of step $t$ are denoted
$x^{(c)}_{t,m}\in\mathcal X^{(c)},
 t=1,\dots,T, m=1,\dots,M_t, c=1,2.$
We define the deterministic cycle path
$$\mathcal X_{\mathrm{path}}^{(c)}
:= \big\{x^{(c)}_{t,m} : t=1,\dots,T,\ m=1,\dots,M_t\big\}\subset\mathcal X^{(c)},
c=1,2.$$
This set is the cycle-level analogue of the points $\{x^{(c)}_m\}_{m=1}^M$ used in Method 3. It is computed once from the mean-only run and then reused, unchanged, for all Monte Carlo replications.

\noindent\textbf{Monte Carlo perturbations along the deterministic path.}
For each Monte Carlo replication $j=1,\dots,N$ and each surrogate $c\in\{1,2\}$, we sample one joint posterior GP realization restricted to the deterministic cycle path:
$
\big(\tilde f^{(c,j)}(x^{(c)}_{t,m})\big)_{t,m}
 \sim \mathcal N\!\big(\bar\mu^{(c)}(\mathcal X^{(c)}_{\mathrm{path}}),\,\bar{\mathbf{K}}^{(c)} \big),
$
where $\bar\mu^{(c)}(\mathcal X^{(c)}_{\mathrm{path}})=\big(\bar\mu^{(c)}(x^{(k)}_{t,m})\big)_{t,m}$ and $\bar{\mathbf{K}}^{(c)} $ is the block posterior covariance matrix whose $((t,m),(t',m'))$ block is
$\bar\kappa^{(c)} \!\big(x^{(c)}_{t,m},x^{(c)}_{t',m'}\big)\in\mathbb R^{D_c\times D_c}$.
We then define the pathwise offsets
$$
\delta^{(c,j)}_{t,m}
:=
\tilde f^{(c,j)}(x^{(c)}_{t,m})-\bar\mu^{(c)}(x^{(c)}_{t,m}),
\quad t=1,\dots,T,\ \ m=1,\dots,M_t, \ \ j=1,\ldots,N,
$$
and, at irradiation step $t$ and inner iteration $m$, we use the constant-offset perturbed surrogate
$
\hat f^{(c,t,m,j)}(x)
\;=\;
\bar\mu^{(c)}(x)+\delta^{(c,j)}_{t,m}.
$
For fixed $(t,m,j,c)$, the quantity $\delta^{(c,j)}_{t,m}$ is constant in $x$, i.e. it acts as an additive shift of the mean predictor. It is updated only when the algorithm advances to the next inner iteration index $(t,m)$, which matches the constant-offset construction of Method 3 and the setting analysed in Section \ref{theoreticalControl}.

\noindent\textbf{Cycle solve under perturbed surrogates.}
For each replication $j$, we rerun the full-cycle coupled solver from the same initial state as in the mean-path run, replacing $\bar\mu^{(c)}$ by $\hat f^{(c,t,m,j)}$ at each step $t$ and inner iteration $m$. This produces the Monte Carlo output
$\mathbf y^{(j)} \;=\; \Phi_{\mathrm{cyc}}\!\big(\{\hat f^{(c,t,m,j)}\}_{c,t,m}\big).$
The sample $\{\mathbf y^{(j)}\}_{j=1}^N$ then provides a Monte Carlo approximation of the epistemic uncertainty on the coupled cycle output induced by the GP surrogates, in direct correspondence with Method 3. We estimate the mean and covariance of the cycle output by $\hat{\mathbf y}_N=\frac{1}{N}\sum_{j=1}^N \mathbf y^{(j)},
\widehat{\mathrm{Cov}}[\mathbf y]=\frac{1}{N-1}\sum_{j=1}^N\big(\mathbf y^{(j)}-\hat{\mathbf y}_N\big)\big(\mathbf y^{(j)}-\hat{\mathbf y}_N\big)^\top.$
When needed, componentwise summaries (variances, Gaussian confidence intervals) are obtained from $\hat{\mathbf y}_N$ and the diagonal of $\widehat{\mathrm{Cov}}[\mathbf y]$. 

\noindent\textbf{Results and interpretation in light of the theory.}
Figure \ref{USMAB} shows the standard deviation of the coupled GP predictions of fuel-assembly bow over a row of 15 assemblies and across the irradiation cycle, obtained by running the coupled solver with the GP surrogates. The largest epistemic uncertainty induced by the surrogates reaches $0.26\,\mathrm{mm}$. This remains small compared with the displacement scale of about $10\,\mathrm{mm}$. (In the figure, the actual height of the fuel assembly is around $4\,\mathrm{m}$. However, the diagram is intended to represent the spacing between two assemblies, therefore, the vertical scale is not to scale and does not reflect the real length of the fuel assembly. Each small square in the figure has a width equivalent to $2\,\mathrm{mm}$, i.e.\ the nominal water gap between two grids for undeformed fuel assemblies.) This behaviour is consistent with the theoretical results of Section \ref{theoreticalControl}: the training designs for the hydraulic and mechanical surrogates are dense (small fill distances $h^{(c)}$), which keeps posterior variances low along the mean path (Theorem \ref{thm:lmc_fill}). In addition, the Lipschitz stability of the coupled solution map $\Phi$ (Proposition \ref{prop:LipPhi}) limits the amplification of these local surrogate errors through the surrogate-based FSI coupling. Overall, the surrogates capture the fuel-assembly bow response with sufficient accuracy: epistemic uncertainty remains negligible at the coupled-output level, supporting the use of the posterior mean predictors alone in large-scale reactor simulations.

\begin{figure}[!ht]
    \centering
    \includegraphics[width=10cm, height=6cm]{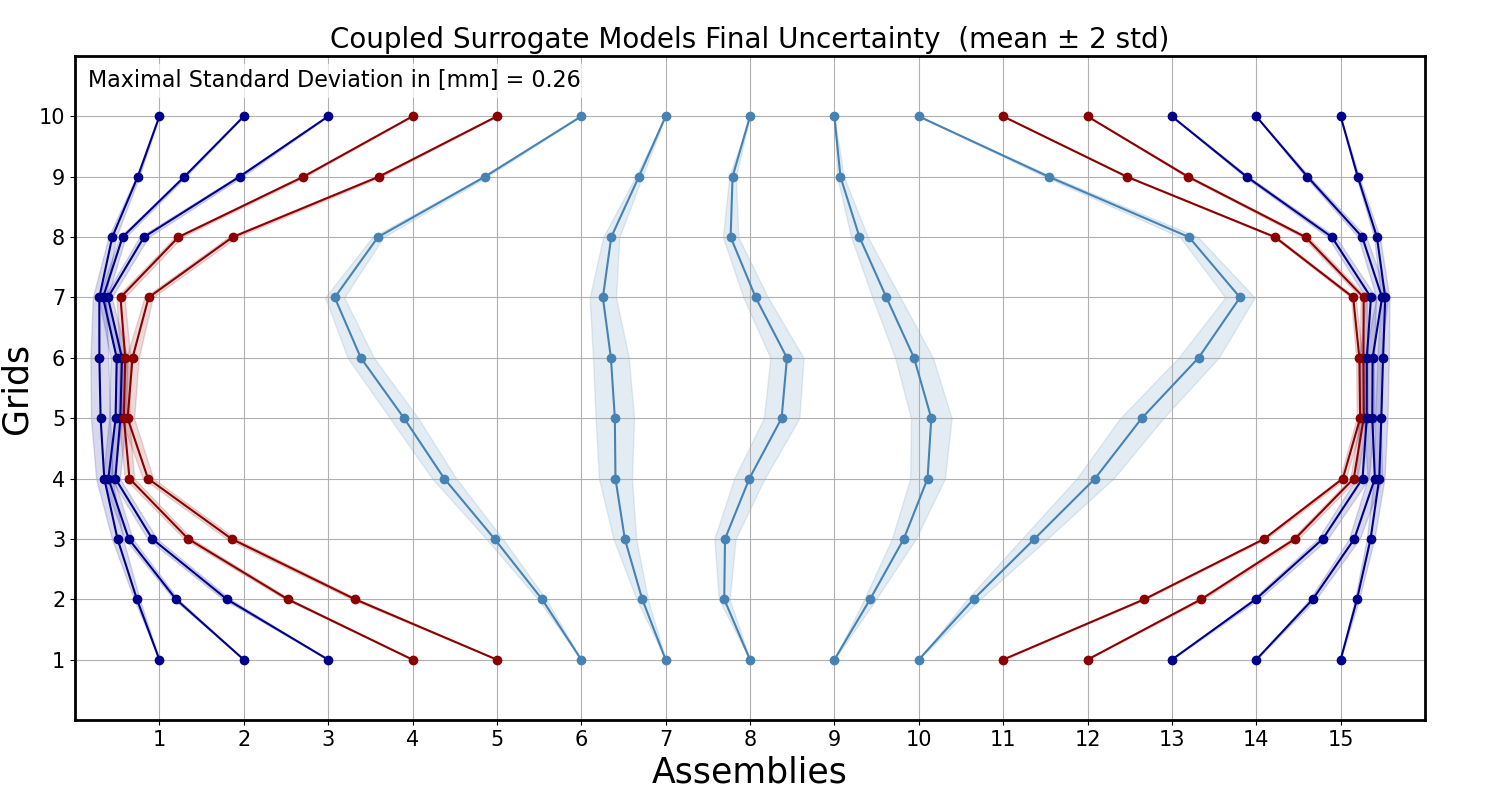}
    \caption{Uncertainty of the coupled GP in a simulation of a row of 15 fuel assemblies}
    \label{USMAB}
\end{figure}

\subsection{Sensitivity analysis for the fluid-structure coupling}
\label{FSIsensitivity}

Based on the results above, we use only the surrogate-model posterior means as predictors in this section and throughout the rest of the paper. We perform a global sensitivity analysis of the coupled fuel-assembly bow simulation using Sobol' indices \cite{Sobol}, leveraging the fast evaluation
and predictive accuracy of the GP metamodels established in \cite{abboud:cea-04598002,abboud2025}.

\medskip
\noindent\textbf{Coupled model with uncertain parameters.}
In addition to the interface variable $u\in\mathcal U$ updated by the fixed-point coupling (Section~\ref{generalcoupling}), the two deterministic solvers depend on a vector of uncertain inputs $\theta\in\Theta\subset\mathbb R^{p}$ (hydraulic boundary conditions, mechanical parameters, and initial deformations). For $c\in\{1,2\}$ we write
$
\mathcal S_c:\ \mathcal X_c\times\Theta\to\mathcal Y_c,\quad (x,\theta)\mapsto \mathcal S_c(x,\theta),
$
and for a fixed $\theta$ we define $\mathcal S_c^\theta(x):=\mathcal S_c(x,\theta)$. For each sample $\theta$ used for Sobol' index estimation, the coupling algorithm is run with $\theta$ kept fixed throughout all fixed-point iterations. The coupled solution therefore depends on $\theta$ through the fixed point
$
u^\star(\theta)\in\mathrm{Fix}(\mathcal T_\theta),
\quad
\mathcal T_\theta := \Gamma_2\circ \mathcal S_2^\theta\circ \Gamma_{12}\circ \mathcal S_1^\theta\circ \Gamma_1,
$
and the quantity of interest is the coupled output $\mathbf y^\star(\theta)=H\big(u^\star(\theta)\big)$.

\medskip
\noindent\textbf{GP surrogates used in the sensitivity study.}
To make the Sobol' analysis tractable, we replace each solver $\mathcal S_c$ by a GP metamodel trained on the joint inputs $(x,\theta)$, and we run the coupling using only its posterior mean $\bar\mu^{(c)}(x,\theta)$ (Section~\ref{UQGPMC}). Accordingly, the sensitivity analysis targets the variability of $\mathbf y^\star(\theta)$ induced by the uncertain inputs $\theta$,
which include both aleatory and epistemic physical parameters through their prescribed probability distributions (See Table \ref{tab:uncertain_params}).

\medskip
\noindent\textbf{High-dimensional input space.}
A direct Sobol' analysis is challenging because $\theta$ is high-dimensional: it includes $45=3\times 15$ modal coefficients describing the three $C$, $S$, and $W$ deformation modes
for fifteen assemblies (see Figure~\ref{fig:CSW}), five hydraulic boundary-condition parameters, and additional uncertain quantities from the hydraulic and mechanical models.

\begin{figure}[ht!]
    \centering
    \includegraphics[width=7cm, height=6cm]{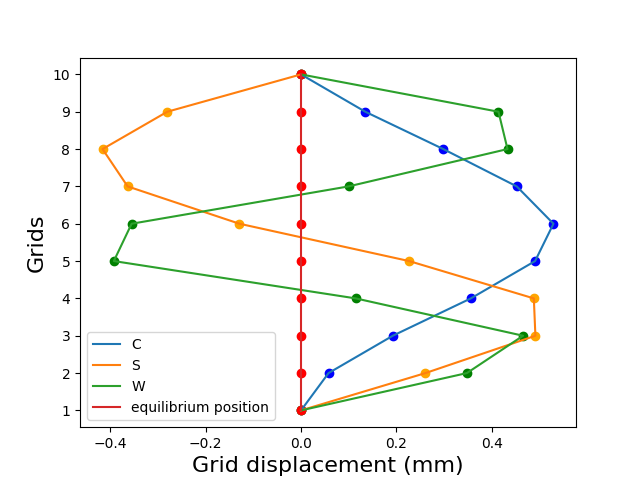}
    \caption{C,S,W deformation modes of a fuel assembly}
    \label{fig:CSW}
\end{figure}

\noindent
\textbf{Methodology.} To make this analysis computationally feasible and scientifically meaningful, a structured reduction of the input space is applied. This reduction is guided by earlier, sensitivity analyses carried out separately for the mechanical and hydraulic models in \cite{abboud:cea-04598002,abboud2025}. The input parameters are organized into coherent groups to reflect the underlying physical processes: Initial modal deformations are aggregated into three representative vectors for the $C$, $S$, and $W$ modes, hydraulic boundary conditions are grouped into a distinct set of corresponding parameters, the results of the sensitivity analyses on both the hydraulic and mechanical models allowed us to identify and neglect the variability of certain parameters that exhibited negligible influence on the simulation outputs. These include the inlet temperature in both simulations, the axial resistance of spacer grids in the hydraulic model, and the fast neutron flux. 
It is important to note that these parameters were not entirely removed from the models but rather fixed at their mean values, because these parameters originally served as input variables for the metamodels. This structured reformulation reduces redundancy, simplifies interpretation, and enhances the clarity of the resulting Sobol' indices.

\noindent
\textbf{The choice of the initial configuration and the uncertain input parameters.} Based on a prescribed neutronic configuration, each fuel assembly is positioned within the core according to its irradiation cycle (an assembly of cycle $j$ has already experienced $j-1$ full irradiation cycles). Assemblies of cycle 3 are placed in the central region, assemblies of cycle 2 are placed at the periphery, and assemblies of cycle 1 are placed in intermediate positions (see Figure \ref{initialuncertainty}). This influences the input parameters of the mechanical metamodels, especially the modal coefficients which are used to generate the initial deformations as a function of the age of each assembly. Additionally, the grid clamping force is determined individually for each assembly. Table \ref{tab:uncertain_params} summarizes the uncertain input parameters used in the coupled simulations, along with their associated probability distributions. {All input parameters in this study are supposed to be independent.}

\begin{table}
    \centering
    \begin{tabular}{|c|l|c|}
        \hline
        \textbf{Symbol} & \hspace*{0.3in} \textbf{Name} & \textbf{Probability Distribution} \\
        \hline
        $C_{j}$ & Modal coefficient C & $\mathcal{N}(0, \sigma^{2}_C)$ ($\sigma_C$ depends on irradiation cycle) \\\hline
        $S_{j}$ & Modal coefficient S & $\mathcal{N}(0, \sigma^{2}_S)$ ($\sigma_S$ depends on irradiation cycle) \\\hline
        $W_{j}$ & Modal coefficient W & $\mathcal{N}(0, \sigma^{2}_W)$ ($\sigma_W$ depends on irradiation cycle) \\\hline
        \multicolumn{3}{|c|}{\textbf{Hydraulic boundary conditions}} \\\hline
        $V$ & Average axial velocity & $\mathcal{N}(\mu = 5,\sigma^{2} = 0.05^2)$  \\ \hline
        $M_{in}$ & Maximum deviation (inlet) & $\mathcal{N}(\mu = 0.05, \sigma^{2} = 0.005^2)$  \\ \hline
        $L_{offset}^{in}$ & Inlet lateral offset & $\mathcal{N}(\mu = 0, \sigma^{2} = 0.2^2)$  \\ \hline
        $M_{out}$ & Maximum deviation (outlet) & $\mathcal{N}(\mu = 0.04,\sigma^{2} = 0.004^2)$  \\ \hline
        $L_{offset}^{out}$ & Outlet lateral offset & $\mathcal{N}(\mu = 0, \sigma^{2} = 0.1^2)$  \\ \hline
        \multicolumn{3}{|c|}{\textbf{Epistemic parameter}} \\\hline
        $h_l$ & Epistemic parameter (hydraulic model) & $\mathcal{U}(0.01, 0.03)$   \\\hline
        \multicolumn{3}{|c|}{\textbf{Mechanical parameters}} \\\hline
        $MSI$ & Holddown interaction of the system & $\mathcal{U}(0.01, 0.025)$ \\\hline
        $grid\_clamping$ & Grid clamping force & $\mathcal{N}(\mu, \sigma^{2})$ ($\mu,\sigma^{2}$ depend on irradiation cycle)\\\hline
        $C_{creep}$ & Creep constant & $\mathcal{N}(\mu = 1, \sigma^2 = 0.3^2)$ \\\hline
        $C_{growth}$ & Growth constant & $\mathcal{N}(\mu = 1, \sigma^2 = 0.3^2)$ \\
        \hline
    \end{tabular}
    \vspace{0.2cm}
    \caption{Uncertain input parameters, their names, and associated probability distributions.}
    \label{tab:uncertain_params}
\end{table}

\noindent
\textbf{Quantities of Interest and Sensitivity Analysis.} The primary quantities of interest in this study are the modal coefficients \(C\), \(S\), and \(W\), which characterize the mechanical deformation state of each fuel assembly. A global sensitivity analysis has been conducted using Sobol' indices, which quantify the contribution of each uncertain input parameter to the variance of these outputs. In this study, the number of simulations is set to $n_{\text{simulation}} = 11{,}000$, which corresponds to the number required to compute the Sobol' sensitivity indices using Saltelli method \cite{Saltelli2000} implemented in the \texttt{Uranie} platform \cite{blanchard:cea-02052632}. The Saltelli approach requires a total of $n_s (n_x + 2)$ simulations, where $n_s = 1{,}000$ denotes the base sample size, and $n_x = 9$ represents the number of input variables: three vector inputs ($C$, $S$, and $W$), one vector for the hydraulic boundary conditions, and the five scalar inputs. The analysis has been performed at three key stages of the reactor cycle: prior to irradiation, at the end of the irradiation phase, and after irradiation during the opening of the upper core plate (See Figure \ref{CycleStep}).
   \begin{figure}[ht]
    \centering
         \includegraphics[width=14cm, height=7.cm]{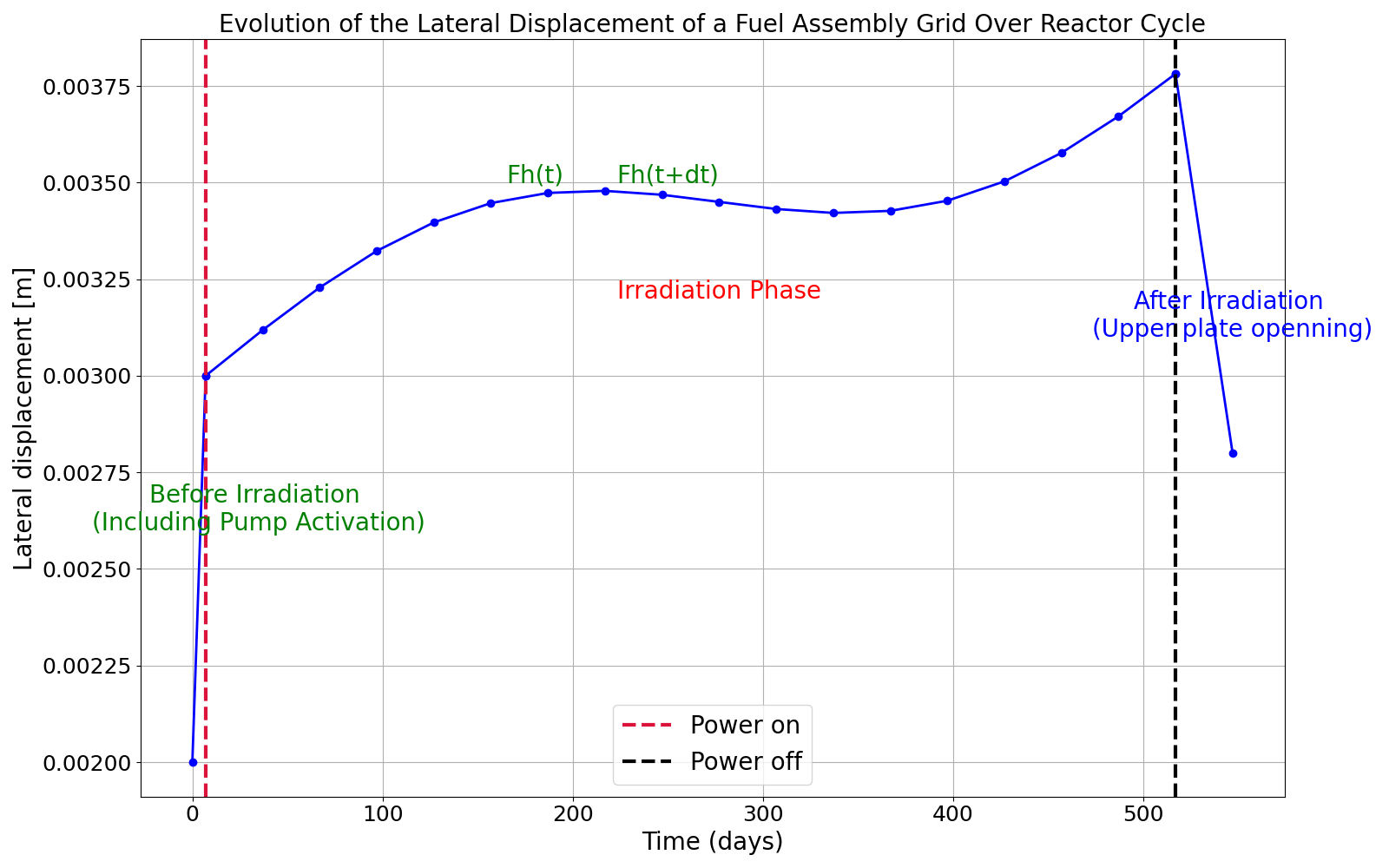}
        \caption{Schematic of the fuel assembly grid’s lateral displacement across reactor cycle stages.}
        \label{CycleStep}
        \end{figure}

This temporal segmentation allows the evolution of sensitivities to be captured throughout the operational cycle. In addition, a detailed Sobol' analysis has been conducted by grouping fuel assemblies according to their irradiation age (cycle) and spatial location in the core. 

Old assemblies of cycle 3 are located at the center of the core,
assemblies of cycle 2 are placed near the core shroud,
and new assemblies of cycle 1 are positioned in the intermediate regions.
This classification enables a more precise interpretation of how age and radial position influence mechanical behavior. To further enhance interpretability, we have computed aggregated Sobol' indices \cite{iooss}, which quantify the relative contribution of groups of input parameters to the total output variance. For clarity of presentation, these aggregated indices are displayed as normalized pie charts, thereby providing a compact visual summary of the dominant sources of uncertainty across assemblies and time points. It is worth emphasizing that the sum of the aggregated first-order Sobol' indices has been found to be close to unity, indicating that higher-order interactions among input parameters contribute negligibly to the variance. This property justifies both the normalization step for visualization and the use of aggregated first-order indices as a reliable measure of parameter importance in this context.

\noindent
\textbf{Results before irradiation phase.} 
During the pre-irradiation phase, the simulation considers the insertion force of control rods in the grids, closure of the core, activation of primary circulation pumps, and the thermal ramp-up to operational temperature. This phase corresponds to the initial mechanical stabilization of the reactor core under thermal and hydraulic loads, in the absence of irradiation-driven phenomena such as creep or growth.
%
%\noindent
%\textbf{Key Observations:} 
Figure \ref{aggsobolbefore} shows that the initial geometric deformation of fuel assemblies is the dominant factor influencing the final deformation state across all cycles (1, 2, and 3) before irradiation. These initial deformations determine the contact distribution between assemblies and thus govern the mechanical equilibrium at the early stages.

   \begin{figure}
    \centering
         \includegraphics[width=10cm, height=8cm]{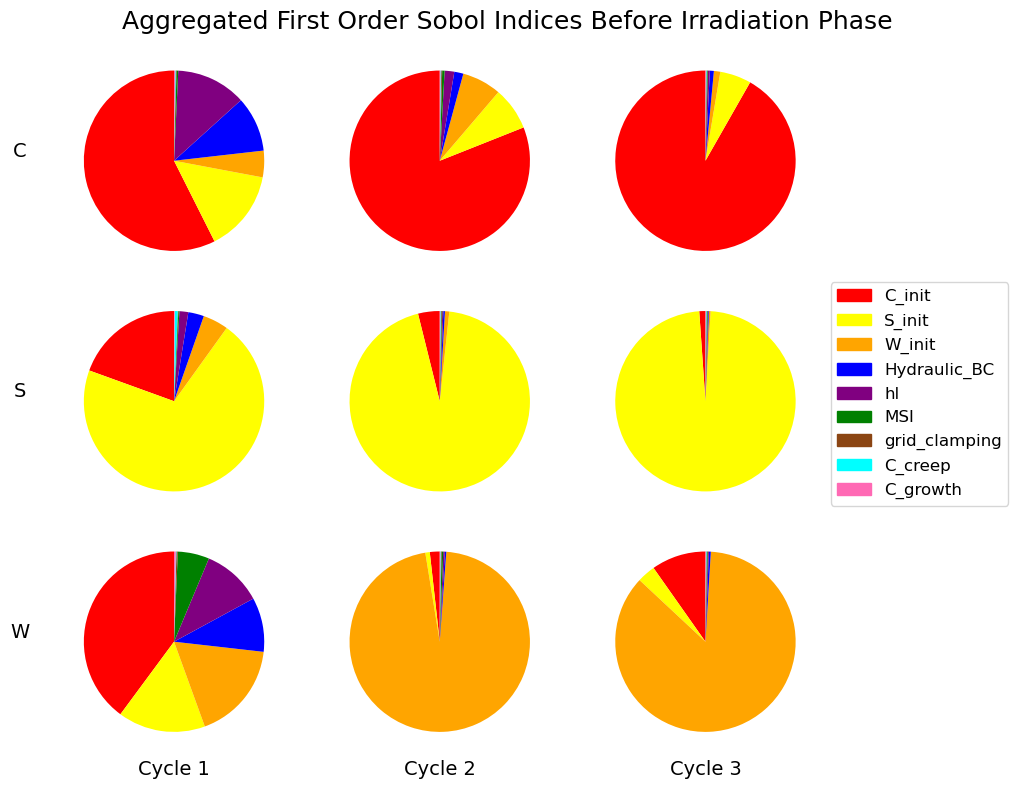}
        \caption{Aggregated first order Sobol' indices before irradiation phase on the C,S,W deformation modes}
        \label{aggsobolbefore}
        \end{figure}

\noindent
\textbf{Results at the end of irradiation phase.} 
At the end of irradiation, deformation is governed by both irradiation-driven effects and the legacy of initial conditions (Figure \ref{aggsobolirrad}). Overall, hydraulic boundary conditions and the epistemic parameter hl seem to dominate the variance across cycles, especially regarding modes C and W. For assemblies of {Cycle 1}, where initial deformations are small, the response is mainly controlled by BC and \(h_l\), with creep acting as a secondary contributor, especially for mode \(S\). For assemblies of Cycle 2, BC and $h_l$ remain dominant for C and W modes, and creep remains a notable second-order effect; while regarding S mode, larger initial deformations play a stronger role, influence is shared between initial deformations, BC, $h_l$, and creep. For assemblies of {Cycle 3}, initial deformations become the leading driver in several modes (notably \(S\)), surpassing the hydraulic influence, while creep has only marginal impact. Growth- and hold-down-related parameters remain negligible throughout.

   \begin{figure}
    \centering
         \includegraphics[width=10cm, height=8cm]{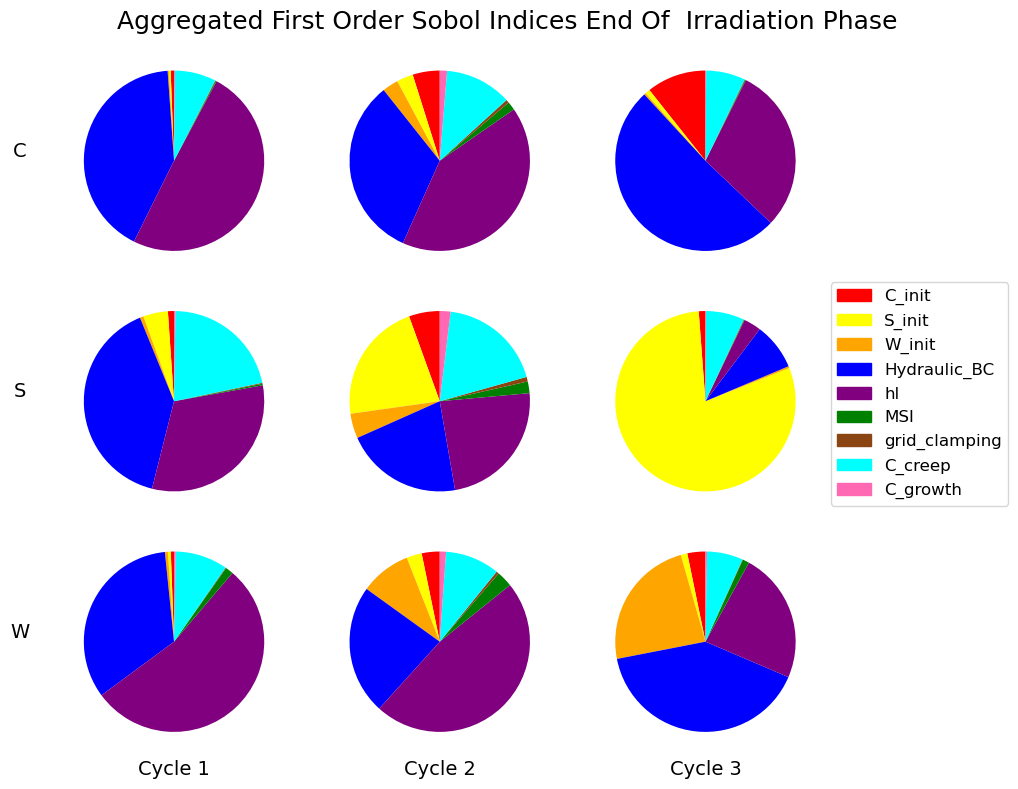}
         \caption{Aggregated first order Sobol' indices at the end of irradiation phase on the C,S,W deformation modes}
        \label{aggsobolirrad}
        \end{figure}

 \noindent
 \textbf{Results after irradiation phase.} 
 At the post-irradiation stage -typically following upper core plate lift-off- assemblies undergo elastic relaxation and retain residual deformation shaped by their irradiation history and boundary conditions (Figure \ref{aggsobolafter}). For assemblies of {Cycle 1} with low burnup in the intermediate regions, modal deformation variance is governed mainly by hydraulic boundary conditions (BC) and the epistemic parameter $h_l$, with creep contributing only second-order effects.  For assemblies of Cycle 2, regarding modes S and W, on can argue initial deformations, BC and $h_l$ remain influent, however no input really stands out in mode $C$.  For assemblies of {Cycle 3} at the core center with highest burnup, while BC and $h_l$ remain influent regarding  C and W modes (along with $C$ and $W$ initial deformations), the S mode is almost only driven by legacy geometric deformations (initial $S$ mode), confirming the persistence of accumulated deformations. Overall, results for Cycle 3 are similar to the ones in the end-of-irradiation phase, highlighting that phenomena occurring after this phase only have little impact on input for these assemblies. In this regime, creep becomes negligible compared to geometric effects.

   \begin{figure}
    \centering
         \includegraphics[width=10cm, height=8cm]{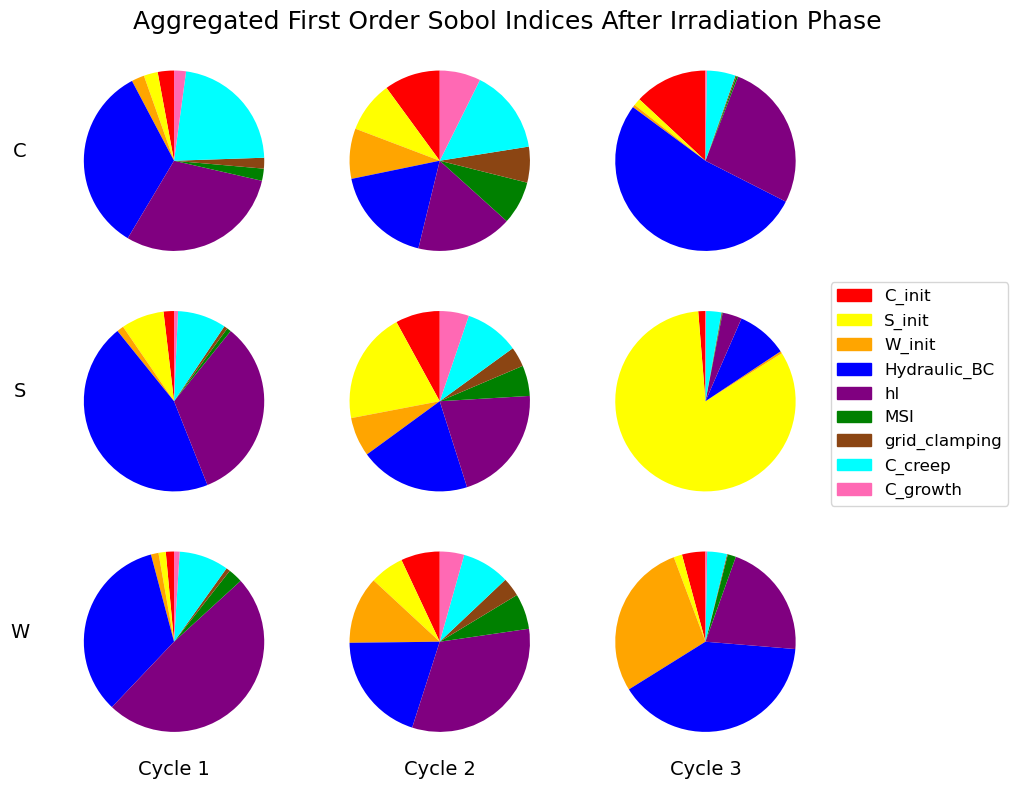}
         \caption{Aggregated first order Sobol' indices after the irradiation phase on the C,S,W deformation modes}
        \label{aggsobolafter}
        \end{figure}

\subsection{Predictive Accuracy of the Simulation Codes}

\label{accuracy}

In simulations involving coupled mechanical and hydraulic models of fuel assemblies, initial deformation data are subject to measurement uncertainty. Though these uncertainties are often overlooked, it has been demonstrated that they play a significant role in the variance of the final outcomes \cite{abboud2025}. It is therefore crucial to investigate how such uncertainties propagate through the computational model and influence the resulting physical quantities. This section primarily aims to evaluate the predictive capability of the coupled model for a given configuration-referred to as the 'calculation of the day'-by quantifying the confidence that can be placed in its results, assuming the input measurements (including initial deformations) are known up to a certain measurement error. A secondary objective is to analyze the contribution of initial deformation uncertainty to the overall result variability, in order to better understand the dominant sources of uncertainty.

\noindent
\textbf{Methodology.} 
Running high-fidelity coupled simulations that account for both the mechanical and hydraulic behavior of nuclear fuel assemblies can become time-consuming for large systems, especially when such simulations must be repeated numerous times for uncertainty quantification. To overcome this limitation, we use the surrogate models coupling strategy. These tools offer an efficient and reasonably accurate alternative to direct numerical simulations, significantly reducing computation time while preserving the essential characteristics of the physical response. It is important to note that, in this simulation, we use the mean values of the Gaussian processes as predictors, and we do not account for the uncertainty associated with the surrogate model itself. The approach adopted in this study relies on a combination of physical assumptions and data inspired by realistic reactor operating conditions.

\medskip

\noindent\textbf{Hydraulic Modeling.} 
The parabolic velocity profile introduced in \cite{abboud:cea-05155171} is assumed. The uncertainty in these  boundary conditions is considered to be small. The hydraulic epistemic parameter, denoted $h_l$, is treated as a uniformly distributed variable within the range $[10\,\text{mm},\,30\,\text{mm}]$, with a nominal reference value of $20\,\text{mm}$.

\medskip

\noindent\textbf{Geometrical Deformation.} 
The initial deformation is applied to a row of 15 fuel assemblies, as described in Section \ref{FSIsensitivity}). In this study, we do not explore the full space of possible deformation patterns; rather, we focus specifically on the measurement uncertainty surrounding the known (measured) bow shapes. To this end, random variations are introduced in the modal coefficients $C$, $S$, and $W$ to simulate the expected variability due to measurement errors.

\medskip

\noindent\textbf{Material and Structural Uncertainties.} 
Beyond geometric variability, the model incorporates uncertainties in several material and structural properties. Parameters such as creep behavior, fuel rod growth, and the stiffness of grid spacers-particularly those related to clamping forces-are included in the analysis, as their uncertainties cannot be neglected. The probability distributions used for these parameters are those listed in Table \ref{tab:uncertain_params}.

\noindent
\textbf{Propagation of measurement uncertainty on modal coefficients.}\label{sec:modal_uncertainty}
In the fuel-assembly bow application, the initial deformation is measured at 10 axial levels (typically at the spacer grids) and represented in a reduced modal basis. Let $\mathbf{U}\in\mathbb{R}^{10}$ denote the true displacement vector and write $\mathbf{U} \;=\; \sum_{j=1}^3 C_j\,\mathbf{u}_j \;=\; \mathbf{M}\mathbf{C},$
where $\{\mathbf{u}_1,\mathbf{u}_2,\mathbf{u}_3\}$ is an orthonormal family of deformation modes, $\mathbf{M}\in\mathbb{R}^{10\times 3}$ collects these modes as columns and $\mathbf{C}=(C_1,C_2,C_3)^\top$ are the modal coefficients. The measured displacement is modeled as
$\mathbf{U}_{\mathrm{mes}} \;=\; \mathbf{U} + \boldsymbol{\varepsilon},
\quad
\boldsymbol{\varepsilon}\sim\mathcal{N}(\mathbf{0},\sigma^2\mathbf{I}_{10}),
\quad
\sigma = 0.3 \text{mm}.$ Since the modes are orthonormal, the least-squares estimate of $\mathbf{C}$ is simply the projection of $\mathbf{U}_{\mathrm{mes}}$ onto this basis. The projection is linear and the measurement noise is Gaussian, so the estimated modal coefficients are themselves Gaussian, mutually uncorrelated, with variance $\sigma^2$ inherited from the measurement noise. This justifies modeling the uncertainties on $C_j$, $S_j$ and $W_j$ as independent Gaussian variables with variance $\sigma^2$.

\medskip
\noindent
\textbf{Representation of the uncertainty in the initial configuration.}
An assumed measurement uncertainty of approximately $0.3$ mm is therefore applied on the modal coefficients to represent variability in the initial deformation of fuel assemblies. Figure \ref{initialuncertainty} illustrates the resulting uncertainty in the initial deformation field of the reference simulation, showing the $95\%$ confidence intervals associated with each assembly. In this study, the same uncertainty level ($0.3$ mm) is applied regardless of the fuel cycle stage. This is likely conservative at the beginning of cycle 1: factory measurements, performed in air on controlled metrology benches with unirradiated assemblies, generally exhibit smaller uncertainties, and assemblies can be straightened if necessary. In contrast, measurements taken later in the cycle (after cycles 1-3) are made under water on irradiated assemblies and are typically less precise. Throughout this section, the nominal water-gap thickness (distance between two grids, corresponding to one square in Figure \ref{initialuncertainty}) is taken as $2$ mm at full power.

\noindent
\textbf{Uncertainty on the parabolic velocity profile.} 
The axial velocity distribution is represented by a quadratic law $v(x) = a x^2 + b x + c$ where $x$ is the spatial coordinate in the core and the coefficients $(a,b,c)$ are uncertain. They are inferred from three Gaussian random inputs: the mean axial velocity $\bar{v} \sim \mathcal{N}(\mu_v,\sigma_v^2)$, the maximum deviation $M \sim \mathcal{N}(\mu_M,\sigma_M^2)$, and the lateral offset $L_{off} \sim \mathcal{N}(\mu_{L_{off}},\sigma_{L_{off}}^2)$. 
Algebraic constraints link these inputs to $(a,b,c)$, defining the parabolic shape. Uncertainty propagation is performed by Monte Carlo simulation: generate $N$ realizations of $(\bar{v},M,L_{off})$, compute coefficients $(a_i,b_i,c_i)$ and evaluate $v_i(x)=a_i x^2+b_i x+c_i$, estimate the empirical mean and variance at each $x$,
$\hat{v}(x) = \frac{1}{N}\sum_{i=1}^N v_i(x), 
    \quad
    \sigma_{v(x)}^2 =  \frac{1}{N-1}\sum_{i=1}^N \big(v_i(x)-\hat{v}(x)\big)^2$.
    %(iv) construct the pointwise $95\%$ confidence interval.
%
This procedure quantifies the variability of $v(x)$ induced by uncertainty in the input parameters. Figure \ref{velocityprofileuncertainty} illustrates the resulting dispersion of velocity profiles consistent with the prescribed probability laws.

\begin{figure}[h!]
    \centering
    \begin{minipage}[t]{0.48\textwidth}
        \centering
        \includegraphics[width=\textwidth,height=6cm]{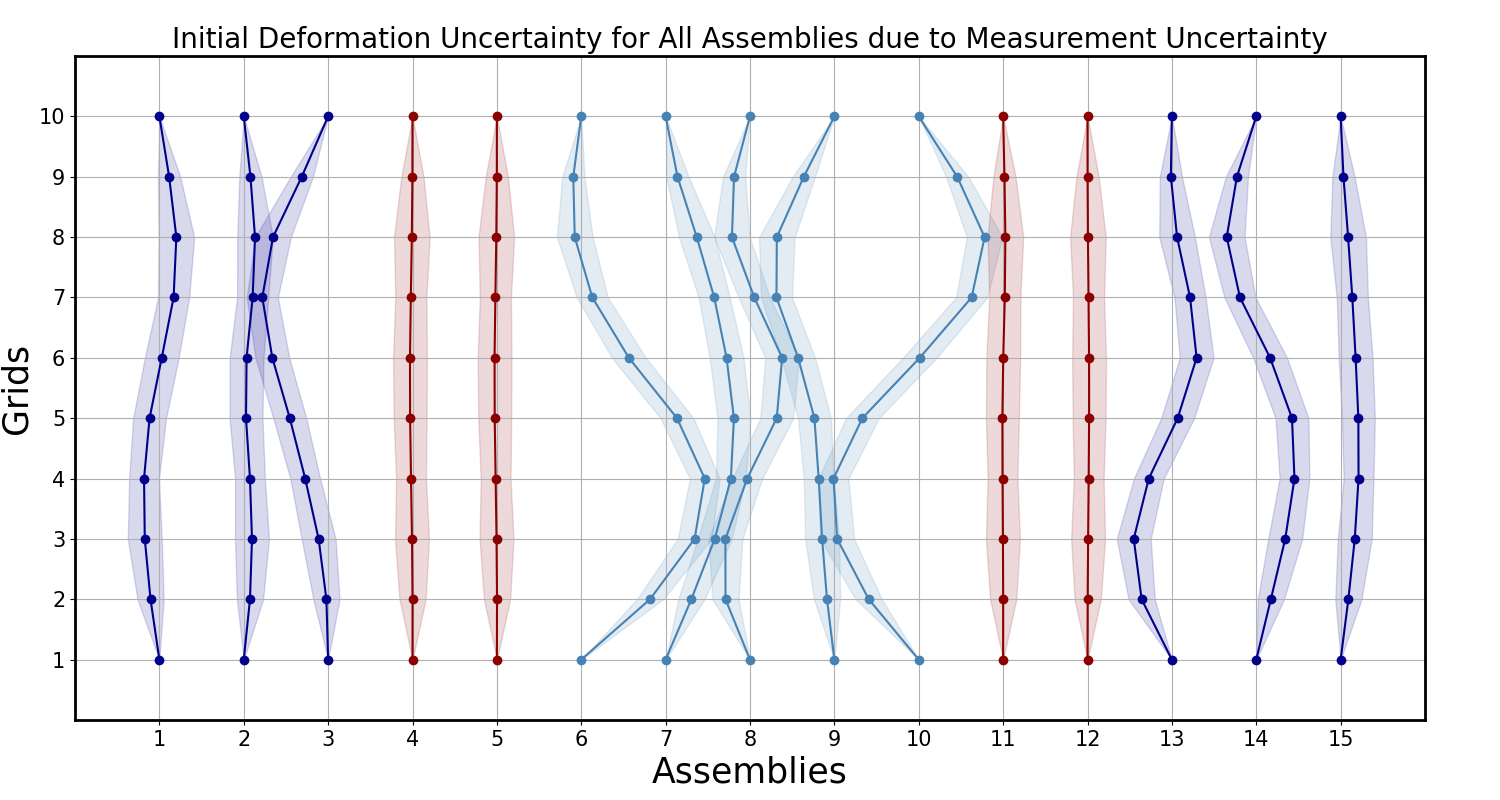}
        \captionof{figure}{Uncertainty in the initial deformation field with the 95\% confidence intervals associated with each assembly (dark blue: assemblies of cycle 2; red: assemblies of cycle 1; light blue: assemblies of cycle 3)}
        \label{initialuncertainty}
    \end{minipage}%
    \hfill
    \begin{minipage}[t]{0.48\textwidth}
        \centering
        \includegraphics[width=\textwidth,height=6cm]{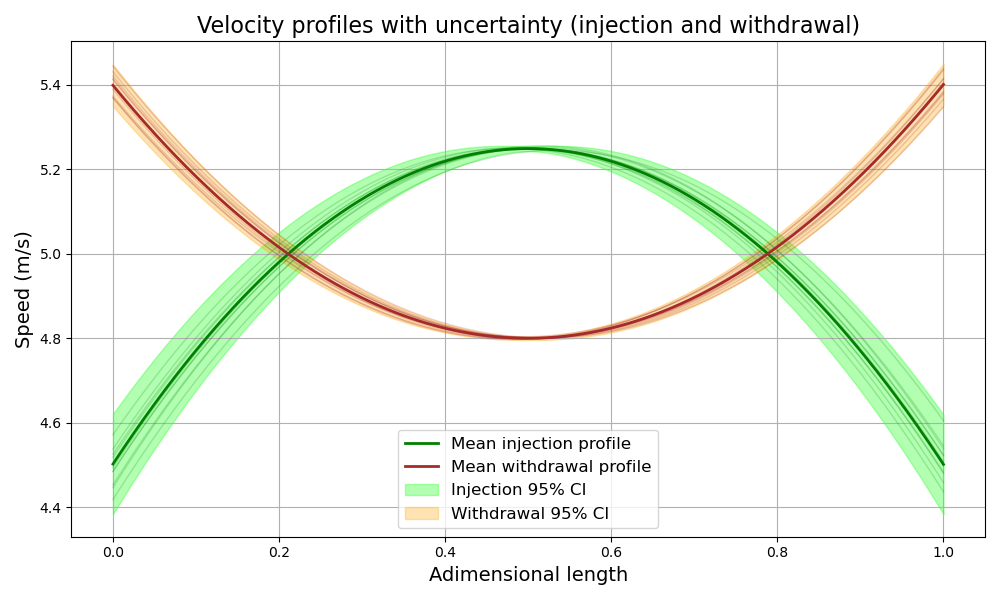}
        \captionof{figure}{Uncertainty in the boundary condition profiles with 95\% confidence interval (injection profile = inlet conditions and withdrawal profile = outlet conditions in Figure \ref{fig:fuelassembly_bow}).}
        \label{velocityprofileuncertainty}
    \end{minipage}
\end{figure}

\noindent
\textbf{Reference simulation and uncertainty quantification.} 
The reference simulation corresponds to a fully coupled thermo-mechanical and hydraulic computation performed under nominal conditions. In this context, all uncertain input parameters are fixed at their respective mean values. The geometric configuration considered is that of a row of 15 fuel assemblies. The simulation covers a complete irradiation period of 18 months, during which the evolution of mechanical deformation is tracked. The primary quantity of interest is the final deformation of each assembly in the row.

To assess the impact of parameter uncertainties on the simulation output, a forward uncertainty quantification procedure is conducted using a Monte Carlo approach. A total of $2{,}000$ simulations are performed, each corresponding to a distinct realization of the input parameters sampled from their associated probability density function. These uncertain inputs include the uncertainty of the measurements of the initial geometry (the measured value, assumed to be read from the instrument, has been fixed), material behavior parameters (e.g., creep and growth coefficients), and the hydraulic parameters. The empirical variance and confidence intervals of the output deformation fields are then estimated from the simulation results.

In parallel, a modal decomposition is employed to represent the deformation field \( Y(z) \) along the axial coordinate \( z \). This representation is expressed as: $Y(z) = \sum_{i=1}^3 C_i u_i(z),$
where \( \{u_i(z)\}_{i=1}^3 \) are deterministic modal shapes, and \( \{C_i\}_{i=1}^3 \) are the corresponding random modal coefficients. These coefficients encapsulate the uncertainties arising from measurement errors of the initial geometry. 
%By employing the previously established mechanical modes. 
Assuming the modal shapes are fixed and known, the variance of the deformation field \( Y(z) \) at each axial location can be computed using the following expression:
$\text{Var}(Y(z)) = \sum_{i,j=1}^3 u_{i}(z)u_{j}(z)\, \text{Cov}(C_i, C_j),$ 
where \( \text{Cov}(C_i, C_j) \) denotes the covariance between modal coefficients \( C_i \) and \( C_j \), which is estimated from the sampled data set. This formulation allows the construction of a spatial uncertainty map of the deformation field. Moreover, it facilitates the derivation of local 95\% confidence intervals at each position \( z \).
%\begin{equation}
   %  \left[ \mathbb{E}[Y(z)] - 1.96\, \sqrt{\text{Var}(Y(z))},\ \mathbb{E}[Y(z)] + %1.96\, \sqrt{\text{Var}(Y(z))} \right].
%\end{equation}
Such analysis provides a detailed and probabilistically rigorous characterization of the possible variations in the final deformation profiles, enhancing the interpretability and reliability of the simulation results under uncertainty.

\noindent
\textbf{Results.} 
Figure \ref{refSimulation} shows the reference final deformation after 18 months of irradiation and removal of the upper plate.
The results are now presented in three figures to facilitate analysis by grouping them according to the type of fuel assembly and their spatial location within the row. 
Figure \ref{cycle2} corresponds to assemblies of cycle 2 positioned at the periphery of the reactor vessel. Figure \ref{cycle1} shows the results for assemblies of cycle 1 located in the intermediate region, and Figure \ref{cycle3} presents the results for assemblies of cycle 3 placed at the center of the core. In each figure, the standard deviation of the grid displacement is computed. This provides an indicator of the maximum uncertainty associated with each type of assembly across the three figures.

\begin{figure}[h!]
    \centering
    % Left minipage: reference + cycle 2
    \begin{minipage}[t]{0.48\textwidth}
        \centering
        \includegraphics[width=\textwidth,height=5cm]{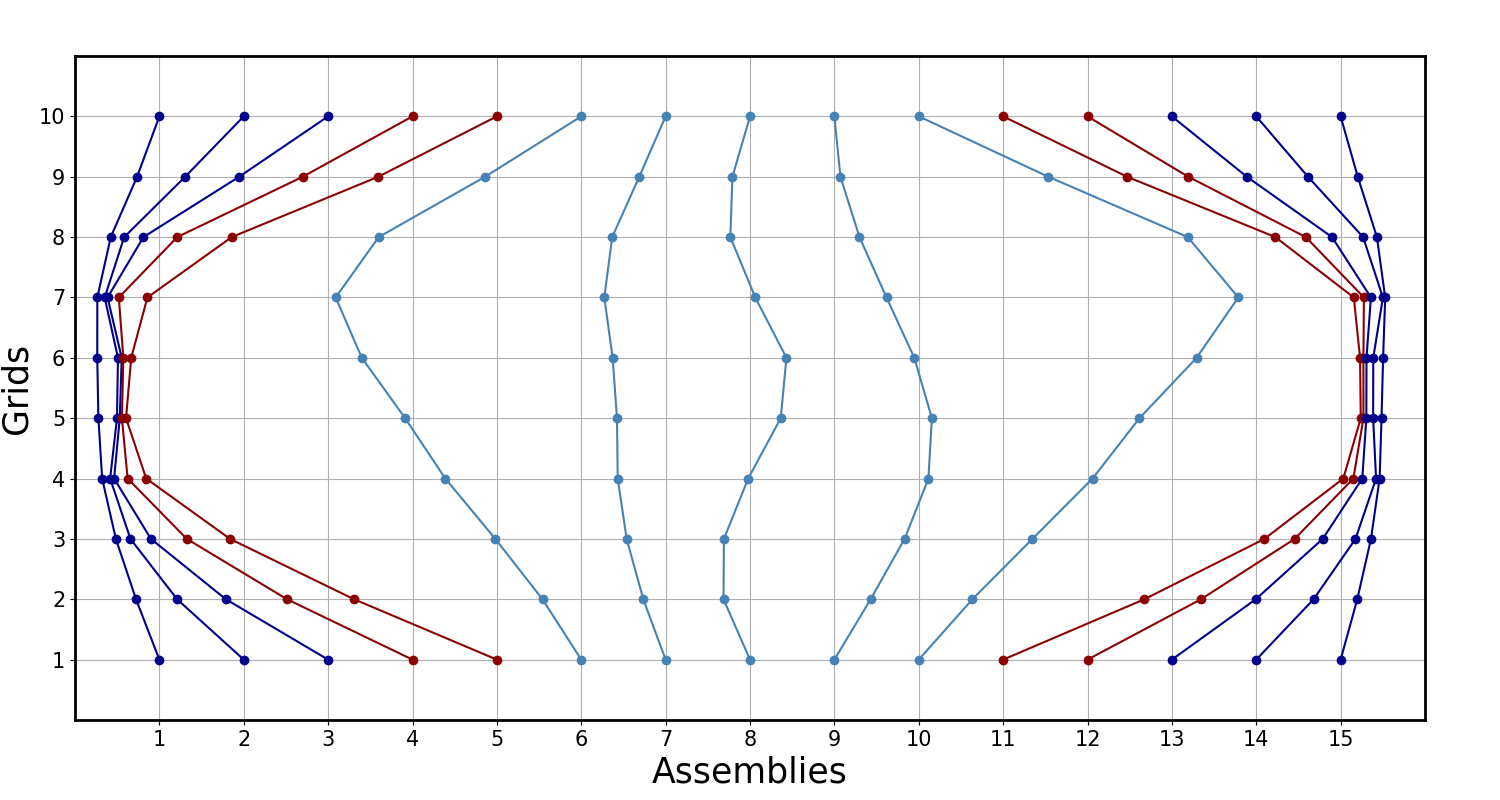}
        \captionof{figure}{Reference simulation results after an 18-months irradiation cycle}
        \label{refSimulation}
        
        \vspace{0.5em} % small vertical spacing
        \includegraphics[width=\textwidth,height=5cm]{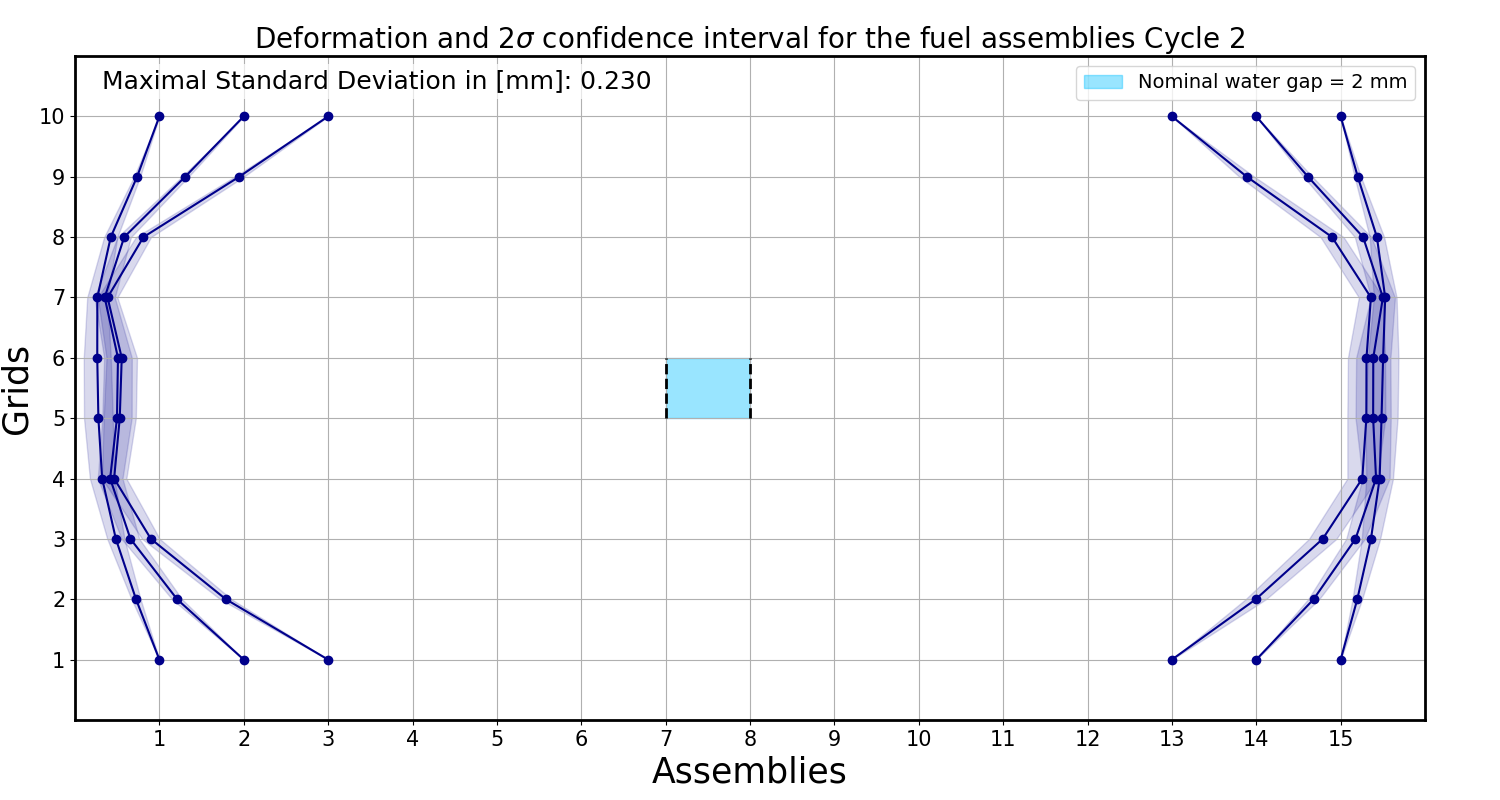}
        \captionof{figure}{Uncertainty in the final deformation of fuel assemblies of cycle 2}
        \label{cycle2}
    \end{minipage}%
    \hfill
    % Right minipage: cycle 1 + cycle 3
    \begin{minipage}[t]{0.48\textwidth}
        \centering
        \includegraphics[width=\textwidth,height=5cm]{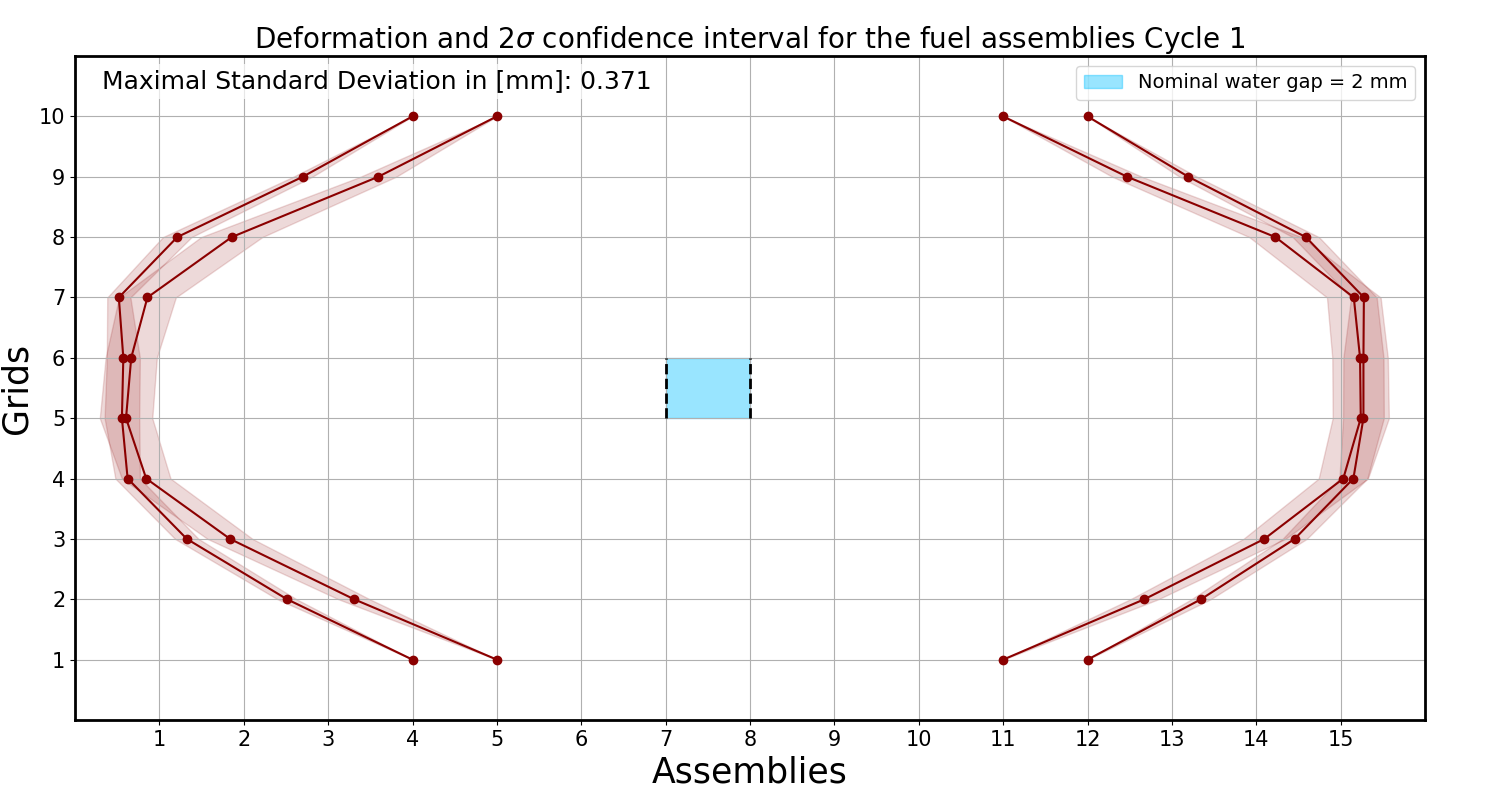}
        \captionof{figure}{Uncertainty in the final deformation of fuel assemblies of cycle 1}
        \label{cycle1}
        
        \vspace{0.5em} % small vertical spacing
        \includegraphics[width=\textwidth,height=5cm]{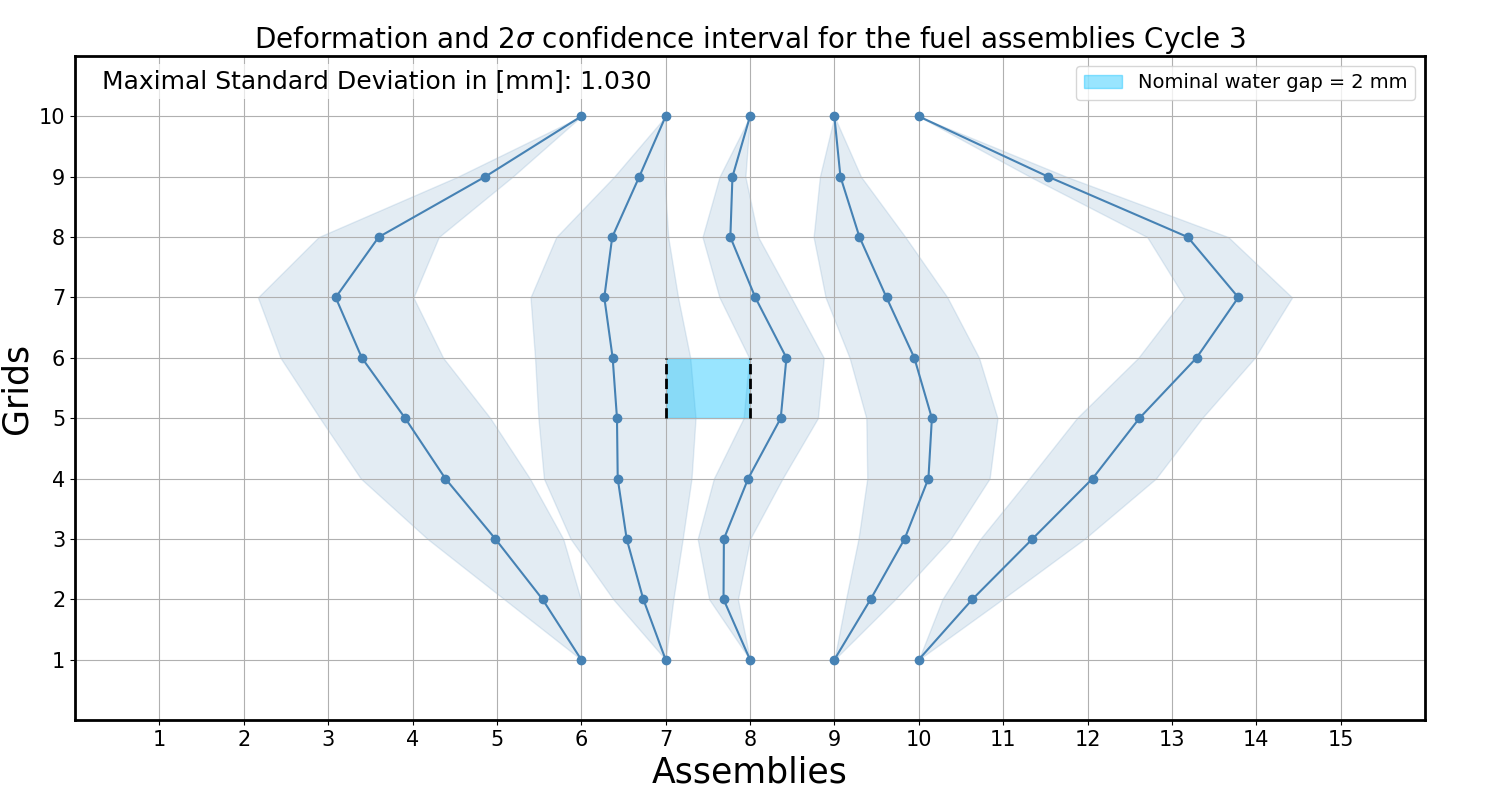}
        \captionof{figure}{Uncertainty in the final deformation of fuel assemblies of cycle 3}
        \label{cycle3}
    \end{minipage}
\end{figure}

After presenting the uncertainty estimates in the previous figures, we now show the full distributions of $2{,}000$ calculations for assemblies $A_6$ and $A_{10}$ in Figure \ref{deformationsall}, using the sample mean as a summary statistic. The distributions clearly deviate from  Gaussian, which can be attributed to nonlinear effects and interactions between complex physical phenomena.

%\noindent \textbf{Interpretation}
The uncertainty in the simulation results seems primarily influenced by two key factors: the spatial position of the assemblies within the core and their irradiation age. Our interpretation is that assemblies located near the  {core center} are more free to deform due to fewer mechanical constraints of their neighbors, which leads to higher variability in their displacement responses. In contrast,  {peripheral assemblies} are restricted by the reactor vessel, limiting their movement and consequently reducing the associated uncertainty. Additionally, the {age of the assemblies} plays a significant role. Older assemblies deform more under the same loading, which in turn increases the uncertainty linked to their mechanical behavior. With increasing irradiation cycles, fuel assemblies undergo stress relaxation of the tiny springs retaining the rods and, as a consequence, tend to loose stiffness. As a result, older assemblies are more prone to bowing than newly loaded, straight assemblies under identical loading conditions. These two parameters-position and age-must therefore be carefully considered when interpreting results given by the coupled codes.

\begin{figure}[h!]
    \centering
    \begin{minipage}[t]{0.45\textwidth}
        \centering
        \includegraphics[width=0.7\textwidth,height=6cm]{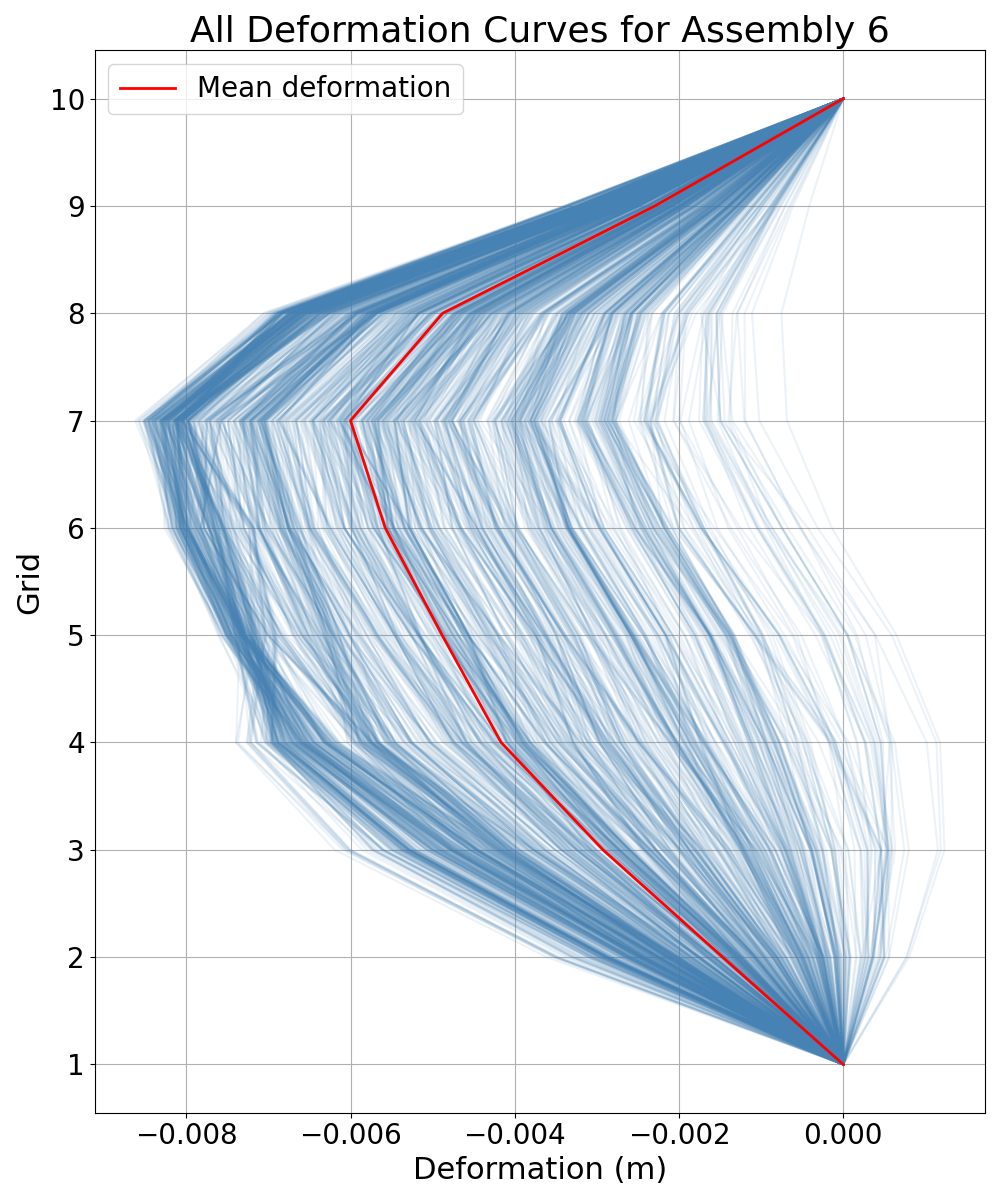}
    \end{minipage}%
    \hfill
    \begin{minipage}[t]{0.45\textwidth}
        \centering
        \includegraphics[width=0.7\textwidth,height=6cm]{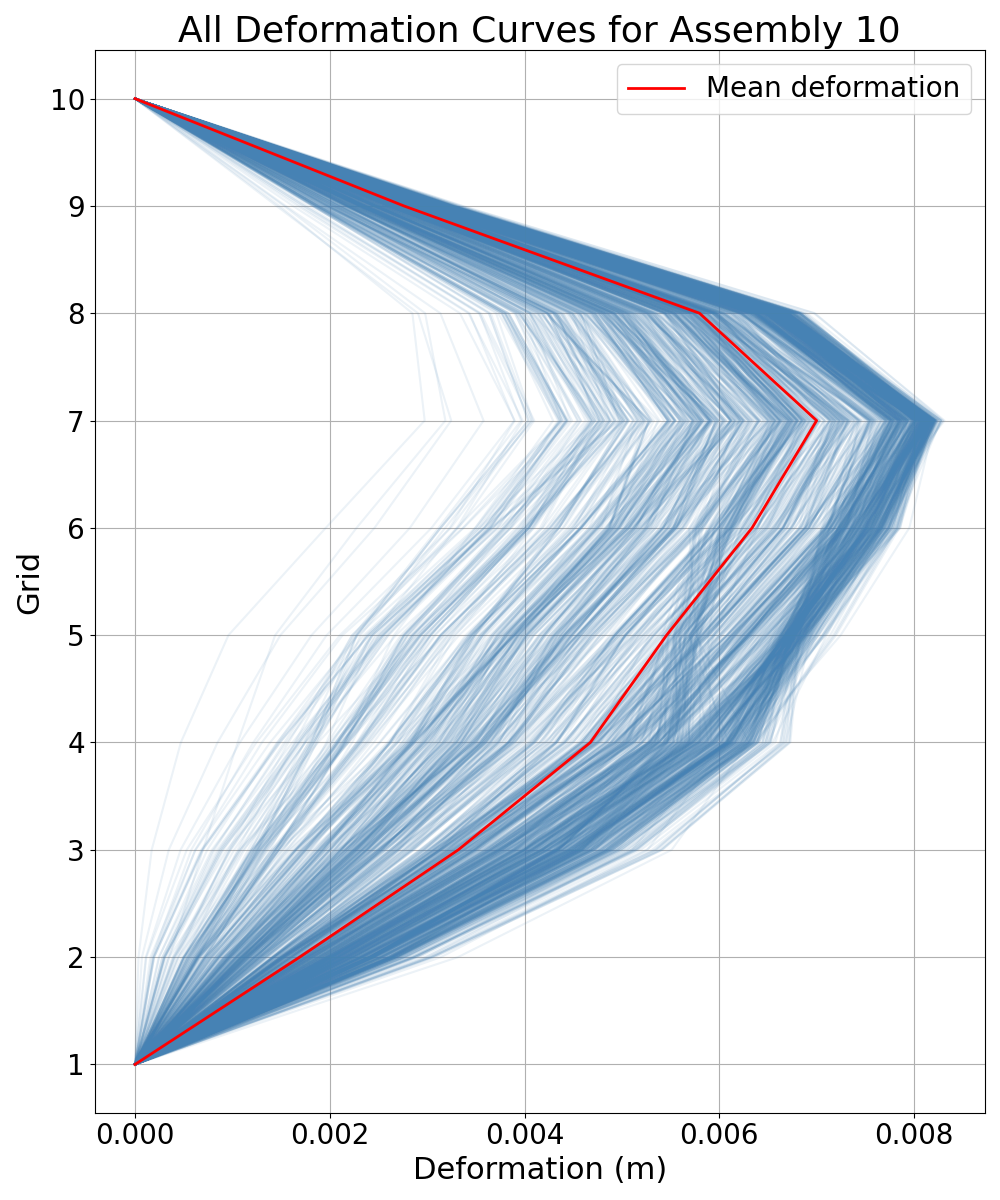}
    \end{minipage}
    \caption{A sample of size $2{,}000$ of deformation curves for assemblies $A_6$ and $A_{10}$ of cycle 3 (see Figure \ref{cycle3})}
    \label{deformationsall}
\end{figure}

%------------------------------------------------

\section{Conclusion}

\label{Conclusion}

This work has introduced a general framework for incorporating Gaussian process surrogates into coupled multiphysics simulations and for quantifying the induced epistemic uncertainty. At the theoretical level, the analysis combines: fill–distance bounds for scalar GPs in Sobolev-type RKHSs, lifted to matrix-valued LMC kernels, with a Lipschitz stability result for the coupled fixed point under uniform 
perturbations of the surrogates. This leads to a finite-sample, high-probability bound on the deviation between the coupled solution obtained with GP means and the random outputs of a Monte Carlo perturbation scheme, expressed in Loewner order and explicitly controlled by the fill distances of the designs and the kernel hyperparameters. The resulting bounds apply to general vector-valued surrogates and abstract coupling operators satisfying a uniform contraction property, and are therefore not tied to a specific physical model.

An analytical benchmark with a closed-form coupled solution has been used to validate the proposed constant-offset Monte Carlo scheme against a fully coherent, trajectory-conditioned sampling strategy, showing statistical agreement of the two output distributions at practical sample sizes. Finally, the methodology has been applied to a realistic fluid–structure interaction problem for fuel assembly bow over irradiation cycles, where it was shown that the uncertainty due to the coupled GP surrogates remains small (on the order of a few tenths of a millimetre) compared with the dominant input-data uncertainties. This supports the use of GP mean predictions as accurate surrogates within reactor-scale uncertainty quantification, while the theoretical results ensure that analogous conclusions can be drawn for other coupled GP-based models under the same regularity and stability assumptions.

%% The Acknowledgements part is started with the command \acknowledgements;
%% acknowledgements are then done as normal sections before appendix
%% \acknowledgements

%% \acknowledgements

%% This research was supported by the Computational Mathematics program of
%% AFOSR (grant FA9550-07-1-0139).

%% The Appendices part is started with the command \appendix;
%% appendix sections are then done as normal sections and after Acknowledgements
%% \appendix

%% \section{}
%% \label{}

%% References without bibTeX database:

%\begin{thebibliography}{-8}

%% \bibitem must have the following form:

%\small{
%\bibitem{key}

%...

%}

%\end{thebibliography}

%% References with bibTeX database:

\bibliographystyle{IJ4UQ_Bibliography_Style}

\bibliography{references}

@phdthesis{delam2021,
  author = {de Lambert, Stanislas},
  title = {Contribution to the multiphysical analysis of fuel assembly bow},
  school = {Université Paris-Saclay},
  year = {2021},
  note = {Thèse de doctorat dirigée par Faucher, Vincent; Campioni, Guillaume; et Cardolaccia, Jérôme. Spécialité: Énergétique. Identifiant: 2021UPAST035}
}

@inproceedings{Andersson2005,
    author       = {T. Andersson and J. Almberger and L. Björnkvist},
    title        = {A Decade of Assembly Bow Management at Ringhals},
    booktitle    = {IAEA-TECDOC-1454 Structural Behaviour of Fuel Assemblies for Water Cooled Reactors},
    pages        = {129--136},
    address      = {Vienna, Austria},
    month        = {July},
    year         = {2005},
    organization = {IAEA}
}

@article{DELAMBERT2023104668,
title = {Semi-analytical fluid-structure model for the analysis of fuel assembly bow in full PWR cores},
journal = {Progress in Nuclear Energy},
year = {2023},
author = {Stanislas {de Lambert} and Jérome Cardolaccia and Vincent Faucher and Bertrand Leturcq and Guillaume Campioni}
}

@inproceedings{abboud:cea-05155171,
  title        = {Towards Uncertainty Quantification: Efficient Surrogate Models in Coupled Fluid-Structure Interaction for Fuel Assembly Bow},
booktitle    = {International Conference on Probabilistic Safety Assessment and Analysis (PSA 2025)},
  author       = {Abboud, Ali and Garnier, Josselin and Leturcq, Bertrand and Lamorte, Nicolas and de Lambert, Stanislas},
  year         = {2025},
}

@article{abboud:cea-04598002,
title = {Sensitivity analysis of a flow redistribution model for a multidimensional and multifidelity simulation of fuel assembly bow in a pressurized water reactor},
journal = {Nuclear Engineering and Design},
volume = {443},
pages = {114259},
year = {2025},
issn = {0029-5493},
doi = {https://doi.org/10.1016/j.nucengdes.2025.114259},
author = {Ali Abboud and Stanislas {de Lambert} and Josselin Garnier and Bertrand Leturcq and Nicolas Lamorte}
}

@article{Sobol,
  author = {I.M Sobol’},
  journal = {Mathematical Modeling and Computer Experiments},
  pages = {407-414},
  title = {“Sensitivity estimates for non linear mathematical models”.},
  year = {1993}
}

@article{damblin2013numerical,
  author    = {Guillaume Damblin and Marie Couplet and Bertrand Iooss},
  title     = {Numerical Studies of Space-Filling Designs: Optimization of Latin Hypercube Samples and Subprojection Properties},
  journal   = {Journal of Simulation},
  volume    = {7},
  pages     = {276--289},
  year      = {2013}
}

@book{rasmussen2006gaussian,
  title={Gaussian Processes for Machine Learning},
  author={Rasmussen, Carl Edward and Williams, Christopher KI},
  year={2006},
  publisher={MIT Press}
}

@article{aronszajn1950,
  title={Theory of reproducing kernels},
  author={Aronszajn, Nachman},
  journal={Transactions of the American Mathematical Society},
  volume={68},
  number={3},
  pages={337--404},
  year={1950},
  publisher={American Mathematical Society}
}

@book{Saltelli2000,
  title={Sensitivity Analysis},
  author={Saltelli, Andrea and Chan, Karen and Scott, E. M.},
  year={2000},
  publisher={Wiley}
}

@article{iooss,
  TITLE = {{Uncertainty and sensitivity analysis of functional risk curves based on Gaussian processes}},
  AUTHOR = {Iooss, Bertrand and Le Gratiet, Lo{\"i}c},
  JOURNAL = {{Reliability Engineering and System Safety}},
  PUBLISHER = {{Elsevier}},
  VOLUME = {187},
  PAGES = {58-66},
  YEAR = {2019}
}

@article{micchelli2005,
  author  = {Charles A. Micchelli and Massimiliano Pontil},
  title   = {On Learning Vector-Valued Functions},
  journal = {Neural Computation},
  volume  = {17},
  number  = {1},
  pages   = {177--204},
  year    = {2005},
  doi     = {10.1162/0899766052530802}
}

@article{alvarez2012,
  author    = {Mauricio A. {\'A}lvarez and Lorenzo Rosasco and Neil D. Lawrence},
  title     = {Kernels for Vector-Valued Functions: A Review},
  journal   = {Foundations and Trends in Machine Learning},
  volume    = {4},
  number    = {3},
  pages     = {195--266},
  year      = {2012},
  publisher = {Now Publishers},
  doi       = {10.1561/2200000036}
}

@inproceedings{bonilla2008,
  author    = {Edwin V. Bonilla and Kian Ming A. Chai and Christopher K. I. Williams},
  title     = {Multi-task Gaussian Process Prediction},
  booktitle = {Advances in Neural Information Processing Systems (NeurIPS)},
  year      = {2008},
  url       = {https://homepages.inf.ed.ac.uk/ckiw/postscript/multitaskGP_v22.pdf}
}

@article{blanchard:cea-02052632,
  TITLE = {{"The Uranie platform: an open-source software for optimisation, meta-modelling and uncertainty analysis"}},
  AUTHOR = {Blanchard, J.-B. and Damblin, G. and Martinez, J.-M. and Arnaud, G. and Gaudier, F.},
  JOURNAL = {{EPJ N - Nuclear Sciences \& Technologies}},
  PUBLISHER = {{EDP Sciences}},
  VOLUME = {5},
  PAGES = {4},
  YEAR = {2019},
  MONTH = Jan,
}

@inproceedings{karlsson1999modelling,
  author    = {Karlsson, L. and Manngard, T.},
  title     = {Modelling of PWR Fuel Assembly Deformations during Irradiation},
  booktitle = {Structural Mechanics in Reactor Technology},
  address   = {Seoul},
  year      = {1999}
}

@book{kerkar2008exploitation,
  author    = {Kerkar, N. and Paulin, P.},
  title     = {Exploitation des coeurs REP},
  publisher = {EDP Sciences},
  year      = {2008}
}

@phdthesis{wanningerphd,
  author = {Andreas Wanninger},
  title = {Mechanical Analysis of the Bow Deformation of Fuel Assemblies in a Pressurized Water Reactor Core},
  school = {Technische Universität München},
  year = {2018},
  address = {Munich, Germany},
  month = {July},
  type = {Doctoral Dissertation}
}

@book{zeidler1986nonlinear,
  title={Nonlinear Functional Analysis and its Applications I},
  author={Zeidler, E.},
  year={1986},
  publisher={Springer}
}

@misc{kanagawa2018,
      title={Gaussian Processes and Kernel Methods: A Review on Connections and Equivalences}, 
      author={Motonobu Kanagawa and Philipp Hennig and Dino Sejdinovic and Bharath K Sriperumbudur},
      year={2018},
      eprint={1807.02582},
      archivePrefix={arXiv},
      primaryClass={stat.ML},
      url={https://arxiv.org/abs/1807.02582}, 
}

@article{micchelli2005vv,
  author  = {Micchelli, Charles A. and Pontil, Massimiliano},
  title   = {On learning vector-valued functions},
  journal = {Neural Computation},
  year    = {2005},
  volume  = {17},
  number  = {1},
  pages   = {177--204}
}

@article{alvarez2012review,
  author  = {{\'A}lvarez, Mauricio A. and Rosasco, Lorenzo and Lawrence, Neil D.},
  title   = {Kernels for Vector-Valued Functions: A Review},
  journal = {Foundations and Trends in Machine Learning},
  year    = {2012},
  volume  = {4},
  number  = {3},
  pages   = {195--266},
  publisher = {Now Publishers}
}

@article{DELAMBERT2019330,
  title   = {Modeling the consequences of fuel assembly bowing on PWR core neutronics using a Monte-Carlo code},
  author  = {de Lambert, Stanislas and Campioni, Guillaume and Faucher, Vincent and Leturcq, Bertrand and Cardolaccia, J{\'e}r{\^o}me},
  journal = {Annals of Nuclear Energy},
  year    = {2019},
  volume  = {<volume>},
  pages   = {330--<endpage>},
  doi     = {<doi>}
}

@article{WANNINGER2018297,
title = {{"Mechanical analysis of the bow deformation of a row of fuel assemblies in a PWR core"}},
journal = {Nuclear Engineering and Technology},
author ={Wanninger, A},
year = {2018},
}

@PHDTHESIS{abboudd2025,
url = "http://www.theses.fr/2025IPPAX072",
title = "Uncertainty quantification applied to fuel assembly bow in a pressurized water reactor",
author = "Abboud, Ali",
year = "2025",
school = "Thèse de doctorat dirigée par Garnier, Josselin Mathématiques appliquées Institut polytechnique de Paris 2025"
}

@inproceedings{abboud2025,
  author    = {Ali Abboud and Josselin Garnier and Bertrand Leturcq and Julien Pacull and Olivier Fandeur and Stanislas de Lambert},
  title     = {Uncertainty Quantification Of Fuel Assembly Bow In Pressurized Water Reactor Through a Thermomechanical Simulation},
  booktitle = {Proceedings of the M\&C 2025 - The International Conference on Mathematics and Computational Methods Applied to Nuclear Science and Engineering},
  year      = {2025},
  address   = {Denver, Colorado, USA},

}
\end{document}